\newcommand*\linenomathpatchAMS[1]{%
  \expandafter\pretocmd\csname #1\endcsname {\linenomathAMS}{}{}%
  \expandafter\pretocmd\csname #1*\endcsname{\linenomathAMS}{}{}%
  \expandafter\apptocmd\csname end#1\endcsname {\endlinenomath}{}{}%
  \expandafter\apptocmd\csname end#1*\endcsname{\endlinenomath}{}{}%
}
  \let\linenomathAMS\linenomathWithnumbers
  \patchcmd\linenomathAMS{\advance\postdisplaypenalty\linenopenalty}{}{}{}
  \let\linenomathAMS\linenomathNonumbers
\newcolumntype{L}{>{$}l<{$}}
 \newcolumntype{C}{>{$}c<{$}}
\newcommandx{\issue}[2][1=]{\todo[linecolor=red,backgroundcolor=red!25,bordercolor=red,#1]{#2}}
\newcommandx{\change}[2][1=]{\todo[linecolor=blue,backgroundcolor=blue!25,bordercolor=blue,#1]{#2}}
\newcommandx{\info}[2][1=]{\todo[linecolor=green,backgroundcolor=green!25,bordercolor=green,#1]{#2}}
\newcommand*{\altqed}{\hfill\small\ensuremath{\triangleleft}}
{\begin{proof}[Proof of claim]}%
  {\altqed\end{proof}}
\newcommand*{\mathCommandFont}[1]{\mathrm{#1}}
\newcommand*{\logicClFont}[1]{\mathsf{#1}}
\newcommand*{\complClFont}[1]{\mathbf{#1}}
\newcommand*{\ie}{i.e.\@\xspace}
\newcommand*{\eg}{e.g.\@\xspace}
\newcommand*{\wrt}{w.\,r.\,t.\@\xspace}
\newcommand*{\wloss}{w.l.o.g.\@\xspace}
\newcommand*{\Wloss}{W.l.o.g.\@\xspace}
\renewcommand*{\phi}{\varphi}
\newcommand*{\vv}[1]{\vec{\mkern1mu#1}} 
\providecommand{\dfn}{\mathrel{\mathop:}=}
\newcommand*\+{\mkern2mu}
\newcommand*{\size}[1]{{{\vert\nobreak#1\nobreak\vert}}}
\newcommand*{\ang}[1]{\left\langle{}#1\right\rangle{}}
\newcommand*{\calC}{\mathcal{C}}
\newcommand*{\calF}{\mathcal{F}}
\newcommand*{\calG}{\mathcal{G}}
\newcommand*{\calO}{\mathcal{O}}
\newcommand*{\co}{\mathbf{co}}
\newcommand*{\arity}[1]{{{\mathCommandFont{ar}(#1)}}}
\newcommand*{\Var}{\mathrm{Var}}
\newcommand*{\Fr}{\mathrm{Fr}}
\newcommand*{\imp}{\rightarrow}
\newcommand*{\equ}{\leftrightarrow}
\newcommand*{\SO}{\logicClFont{SO}_2}
\newcommand*{\leqlogm}{{\leq^\mathCommandFont{log}_\mathCommandFont{m}}}
\newcommand*{\NL}{{\complClFont{NL}}\xspace}
\renewcommand*{\P}{{\complClFont{P}}\xspace}
\newcommand*{\NP}{{\complClFont{NP}}\xspace}
\newcommand*{\coNP}{{\complClFont{coNP}}\xspace}
\newcommand*{\PSPACE}{{\complClFont{PSPACE}}\xspace}
\newcommand*{\EXP}{{\complClFont{EXP}}\xspace}
\newcommand*{\NEXP}{{\complClFont{NEXP}}\xspace}
\newcommand*{\coNEXP}{{\complClFont{coNEXP}}\xspace}
\newcommand*{\AEXP}{{\complClFont{AEXP}}\xspace}
\newcommand*{\AEXPPOLY}{{\complClFont{AEXP}(\mathrm{poly})}\xspace}
\newcommand*{\ATIME}[1]{{\complClFont{ATIME}(#1)}\xspace}
\newcommand*{\SigmaP}[1]{{{\Upsigma}^\mathCommandFont{P}_{#1}}}
\newcommand*{\PiP}[1]{{{\Uppi}^\mathCommandFont{P}_{#1}}}
\newcommand*{\DeltaP}[1]{{{\Updelta}^\mathCommandFont{P}_{#1}}}
\newcommand*{\SigmaE}[1]{{{\Upsigma}^\mathCommandFont{E}_{#1}}}
\newcommand*{\PiE}[1]{{{\Uppi}^\mathCommandFont{E}_{#1}}}
\newcommand*{\DeltaE}[1]{{{\Updelta}^\mathCommandFont{E}_{#1}}}
\newcommand*{\depon}{\rightsquigarrow}
\newcommand*{\suc}{\mathsf{suc}}
\newcommand*{\bit}{\mathsf{bit}}
\newcommand*{\bin}{\mathsf{bin}}
\newcommand\ScaleExists[1]{\vcenter{\hbox{\scalefont{#1}$\exists$}}}
\DeclareMathOperator*\bigexists{%
 \vphantom\sum
 \mathchoice{\ScaleExists{2}}{\ScaleExists{1.4}}{\ScaleExists{1}}{\ScaleExists{0.75}}}
 \title{On the Complexity of Horn and Krom Fragments of Second-Order Boolean Logic}
\titlerunning{Horn and Krom Fragments of Second-Order Boolean Logic} 
\author{Miika Hannula}{Department of Mathematics and Statistics, University of Helsinki, Finland 
 }{miika.hannula@helsinki.fi}{https://orcid.org/0000-0002-9637-6664}{}
\author{Juha Kontinen}{Department of Mathematics and Statistics, University of Helsinki, Finland
}{juha.kontinen@helsinki.fi}{https://orcid.org/0000-0003-0115-5154}{}
\author{Martin L\"uck}{Institut f\"ur Theoretische Informatik, Leibniz Universit\"at Hannover, Germany
}{lueck@thi.uni-hannover.de}{}{}
\author{Jonni Virtema}{Faculty of Humanities and Human Sciences, Hokkaido University, Japan
}{jonni.virtema@let.hokudai.ac.jp}{https://orcid.org/0000-0002-1582-3718}{}
\authorrunning{M. Hannula, J. Kontinen, M. L\"uck, and J. Virtema} 
\keywords{quantified Boolean formulae, computational complexity, second-order logic, Horn and Krom fragment} 
\begin{document}

\maketitle

 \begin{abstract}
Second-order Boolean logic is a generalization of QBF, whose constant alternation fragments are known to be complete for the levels of the exponential time hierarchy. We consider two types of restriction of this logic: 1) restrictions to term constructions, 2) restrictions to  the form of the Boolean matrix. Of the first sort, we consider two kinds of restrictions: firstly, disallowing nested use of proper function variables, and secondly stipulating that each function variable must appear with a fixed sequence of arguments. Of the second sort, we consider Horn, Krom, and core fragments of the Boolean matrix. We classify the complexity of logics obtained by combining these two types of restrictions. We show that, in most cases, logics with $k$ alternating blocks of function quantifiers are complete for the $k$th or $(k-1)$th level of the exponential time hierarchy. Furthermore, we establish $\NL$-completeness for the Krom and core fragments, when  $k=1$ and both restrictions of the first sort are in effect.
\end{abstract}

\section{Introduction}
The canonical complete problem for $\PSPACE$  is the \emph{quantified Boolean formula problem} (QBF)~\cite{Stockmeyer73}.
This generalization of the \emph{Boolean satisfiability problem} (SAT) asks whether a Boolean sentence of the form
$Q_1 p_1 \ldots Q_n p_n  \psi$,
where $Q_i\in \{\exists,\forall\}$, is true.
 Today QBF attracts widespread interest in diverse research communities. In particular, QBF solving techniques are important in application domains such as planning, program synthesis and verification, adversary games, and non-monotonic reasoning, to name a few \cite{8995437}. 
  A further generalization of QBF is the \emph{dependency quantified Boolean formula problem} (DQBF) \cite{PETERSON2001957,peterson79}. This problem, complete for \emph{nondeterministic exponential time} ($\NEXP$), asks whether a Boolean sentence of the form
\begin{equation*}
\forall p_1 \ldots \forall p_n \exists q_1 \ldots \exists q_m  \psi
\end{equation*}
with constraints $C_i\subseteq \{p_1, \ldots , p_n\}$ is true; here, 
 the selection of truth values for $q_i$ may only depend on that of those variables that are in $C_i$. In other words, DQBF enriches QBF by allowing nonlinear dependency patterns between variables.
 DQBF-specifications can be exponentially more succinct compared to that of QBF and have found applications in areas such as  non-cooperative games, SMT, and bit-vector logics. Furthermore, the development of DQBF-solvers is also well under way \cite{W18}.

Put in different terms, DQBF instances can be seen as  Boolean sentences of the form
\begin{equation*}
\exists f_1 \ldots \exists f_m\forall p_1 \ldots \forall p_n  \psi,
\end{equation*}
where each $f_i$ is a Boolean function variable whose occurrences in $\psi$ are of the form $f_i(p_{i_1},\ldots ,p_{i_k})$, for some fixed sequence of proposition variables
$p_{i_1},\ldots ,p_{i_k}$. In previous studies, extensions of DQBF with alternating function quantification have also been considered.  The so-called \emph{alternating dependency quantified Boolean formula problem} (ADQBF) 
was shown to be complete for alternating exponential time with polynomially many alternations ($\AEXPPOLY$) in \cite{HannulaKLV16}. This work was preceded by the  works of  L\"uck \cite{Luck16}  and Lohrey \cite{Lohrey12} studying  second-order Boolean logic with explicit quantification of Boolean functions (denoted $\SO$ in this work). Their
results  showed, e.g., that restricting the alternations of function quantification to $k-1$ yields complete problems for the $k$th levels of the exponential hierarchy. 



In this article we embark on a systematic study of the complexity of fragments of $\SO$, defined by combining  restrictions on the structure of function terms and the Boolean matrix. A remarkable fact is that, when restricting attention to Horn formulae, all the complexity distinctions between SAT, QBF, and DQBF disappear. Bubeck and B\"uning \cite{BubeckB06} showed that those DQBF instances whose quantifier-free part is a conjunction of Horn clauses are solvable in polynomial time. Consequently, all the aforementioned problems over Horn formulae are $\P$-complete.
This implies that the high complexity of (D)QBF is not a straightforward consequence of its quantification structure; rather, structural complexity from the quantifier-free part is also needed. An immediate question is: How complex quantification is required to neutralize structural limitations, such as the Horn form, on the quantifier-free part? It is exactly this interplay between quantification and quantifier-free formula structure that will be the focus of this paper.

%
%
A formula of $\SO$ is in $\Sigma_k$ or in $\Pi_k$ if it is in prenex normal form with $k-1$ alternations for function quantification, with the first quantifier block being respectively existential or universal.
If the quantifier-free part of a formula is in conjunctive normal form, then it is called (a) \emph{Horn} if each clause has at most one positive literal, (b) \emph{Krom} if each clause contains at most two literals, and (c) \emph{core} if it is both Horn and Krom.
A formula is called (i) \emph{simple} if it contains no nested function terms, and (ii) \emph{unique} if in it each function variable is associated with a unique argument tuple.
These last two criteria, in particular, are meaningful for formulae involving second-order quantification. Uniqueness and simpleness are also the characteristics of function terms introduced in the process of Skolemization, and more importantly, tacitly assumed in the DQBF problem. One of the goals of this paper is to determine the impact of such restrictions. This way we generalize the aforementioned results on DQBF, which can be understood in our terms
  as unique simple $\Sigma_1$.

Our contributions are the following. We show, one the one hand, that the complexity of DQBF over Krom or core formulae collapses to $\NL$, and that this result extends to simple and unique $\Pi_1$ and $\Pi_2$. On the other hand, we show that almost all other cases are complete for the corresponding, or their neighboring, levels of the exponential hierarchy. Some cases are left open; most intriguing such case is the inverse of the DQBF-Horn problem (\ie, simple and unique $\Pi_1$ Horn), which is only known to be between $\NL$ and $\PiE{1}$.
A summary of our results can be found in \Cref{tab:results-fragments}.



\begin{table}[t]\centering
 \begin{adjustbox}{width=\textwidth}
  \begin{tabular}{llLlLLll}
   \toprule
   Simpleness & Uniqueness & k                     & Clauses   & \Sigma_k                              & \Pi_k     &    \multicolumn{2}{l}{Reference}                                                                                     \\
   \midrule
   %
   Simple     & Unique     & k = 1                 & Horn      & \P                                    & ?         & \cite{BubeckB06}    &                                                                              \\
   \cmidrule[.5pt]{4-8}
              &            &                       & Krom/core & \NL                                   & \NL       & H/$\in$: \ref{cor:nl-upper-bounds}       &                                                              \\
   \cmidrule[.5pt]{3-8}
              &            & k = 2                 & Horn      & \SigmaE{2}                            & ?         &    H/$\in$: $\ddagger$         &                                              \\  
   \cmidrule[.5pt]{4-8}
              &            &                       & Krom/core & \SigmaE{2}                            & \NL       & H: \ref{cor:k-even-sigma-hardness}, $\in$: $\ddagger$   & H/$\in$: \ref{cor:nl-upper-bounds}                             \\
   \cmidrule[.5pt]{3-8}
              &            & k \geq 3 \text{ odd}  & $\star$   & \SigmaE{k-1}                          & \PiE{k}   & H: \ref{cor:k-odd-sigma-comp} $\in$: $\ddagger$  &  H: \ref{cor:k-even-sigma-hardness}    $\in$: $\ddagger$                     \\
   \cmidrule[.5pt]{3-8}
              &            & k \geq 4 \text{ even} & $\star$   & \SigmaE{k}                            & \PiE{k-1} &    H: \ref{cor:k-even-sigma-hardness}  $\in$: $\ddagger$  & H: \ref{cor:k-odd-sigma-comp} $\in$: $\ddagger$                                                                         \\
   \cmidrule[.5pt]{2-8}
              & Non-unique & k = 1                 & Horn      & \EXP & \PiE{1}   & H/$\in$: \ref{thm:direct-exp-hardness}  & H/$\in$: $\ddagger$                                                                                                                                                                            \\
   \cmidrule[.5pt]{4-8}
              &            &                       & Krom/core & \PSPACE     & \PiE{1}   & H/$\in$: \ref{thm:direct-pspace-hardness}  &   H: \ref{cor:simple-odd-pi-hard}, $\in$: $\ddagger$
   \\
      \cmidrule[.5pt]{3-8}
              &            & k \geq 3 \text{ odd}  & $\star$   & \SigmaE{k-1}                          & \PiE{k}   &                 H: $\ddagger$ , $\in$: \ref{thm:horn-krom-drop-hierarchy}  &  H/$\in$: $\ddagger$                                        \\
   \cmidrule[.5pt]{3-8}
              &            & k \geq 2 \text{ even} & $\star$   & \SigmaE{k}                            & \PiE{k-1} &                 H/$\in$: $\ddagger$   &  H: $\ddagger$, $\in$:  \ref{thm:horn-krom-drop-hierarchy}                                                                                      \\
   \cmidrule[.5pt]{1-8}
   Non-simple & Unique     & k = 1                 & Horn      & \SigmaE{1}                            & ?^\dagger         &        H/$\in$: $\ddagger$                                                 &  \\
   \cmidrule[.5pt]{4-8}
              &            &                       & Krom/core & \SigmaE{1}                            & \NL       & H: \ref{thm:hardness-nonsimple-sigma}, $\in$: $\ddagger$  &   H/$\in$: \ref{cor:nl-upper-bounds}                          \\
   \cmidrule[.5pt]{3-8}
              &            & k \geq 2              & $\star$   & \SigmaE{k}                            & \PiE{k}   &  \multicolumn{2}{l}{H: \ref{thm:hardness-nonsimple-sigma}  H: \ref{cor:hardness-nonsimple-pi}, $\in$: $\ddagger$ }                             \\
   \cmidrule[.5pt]{2-8}
              & Non-unique & k \geq 1              & $\star$   & \SigmaE{k}                            & \PiE{k}   &  \multicolumn{2}{l}{H: \ref{thm:hardness-nonsimple-sigma}, \ref{cor:hardness-nonsimple-pi}, \ref{cor:simple-odd-pi-hard}, $\in$: \cite{Lohrey12}} \\
\cmidrule[.5pt]{1-8}
   $\star$ & $\star$ & k = \omega & $\star$ & \AEXPPOLY & \AEXPPOLY &  H: \ref{cor:k-even-sigma-hardness}, $\in$: \cite{HannulaKLV16} & \\

   \bottomrule
  \end{tabular}\end{adjustbox}
\smallskip
  \caption{\label{tab:results-fragments}Complexity of fragments of second-order Boolean logic restricted to Horn, Krom, or core clauses.
  All entries are completeness results with respect to logspace-reductions.
  The $\star$ means "any".
  "H" and "$\in$" are used for references for the hardness and membership results respectively.
  All trivial upper bounds, \ie, of the form $\SigmaE{k}$/$\PiE{k}$, are by \Cref{thm:bounded-completeness}.\newline
  $\dagger$: Likely identical with first row. $\ddagger$: The result follows from some other result in the table.}
\end{table}

\section{Second-order quantified Boolean formulae}

%
%

\emph{Second-order propositional logic} is obtained from usual quantified Boolean formulae by shifting from quantification over proposition variables to quantification over Boolean functions.
We call this logic $\SO$, as it essentially corresponds to second-order predicate logic restricted to the domain $\{0,1\}$.

%

\subsection{Syntax and semantics}


Let $\Phi = \{f_1,f_2,\ldots\}$ denote a countable set of function \emph{variables}, each with an \emph{arity} $\arity{f_i} \in \mathbb{N}$.
We assume that there are infinitely many variables of any arity.
Variables with arity $0$ are called \emph{propositional}.
Variables with higher arity are called \emph{proper function variables}.
%
%
Next, we give recursive definitions for both the sets of terms and formulae.
\begin{definition}[Term]
A $\Phi$-\emph{term} is either a propositional variable from $\Phi$, or an expression of the form $f(t_1,\ldots,t_n)$, where $f\in \Phi$ is a variable of arity $n$ and $t_1,\ldots,t_n$ are $\Phi$-terms.
The outermost variable in a term is called its \emph{head}.
\end{definition}

\begin{definition}[Formula]
  A $\Phi$-\emph{formula} is either a $\Phi$-term, or an expression of the form $\phi \land \phi'$, $\neg \phi$, or $\exists f \phi$, where $f \in \Phi$ is a variable and $\phi,\phi'$ are $\Phi$-formulae.
\end{definition}
We write $\SO(\Phi)$ for the set of all $\Phi$-formulae.
We often omit $\Phi$ if it is clear from the context.
The abbreviations $\forall f \phi \dfn \neg \exists f \neg \phi$, $\phi \lor \psi \dfn \neg(\neg \phi \land \neg \psi)$, $\phi \imp \psi \dfn \neg \phi \lor \psi$ and $\phi \leftrightarrow \psi \dfn (\phi \imp \psi) \land (\psi \imp \phi)$ are defined in the usual fashion. We sometimes take use of the logical constants $0$ and $1$, which can be expressed with quantified propositions that are forced to take the appropriate truth values.
If $\vec f = (f_1,\dots, f_n)$ is a tuple of variables, we sometimes write $\forall\vv{f}$ for $\forall f_1\dots \forall f_n$ and $\exists\vv{f}$ for $\exists f_1\dots \exists f_n$.

We write $\Var(\phi)$ ($\Fr(\phi)$, resp.) to denote the set of variables that occur (occur freely, resp.) in $\phi$.
A formula with no free variables is \emph{closed}.
A term $t$ is \emph{free in $\phi$} if $\Var(t) \subseteq \Fr(\phi)$.



%
%


A $\Phi$-\emph{interpretation} $I$ is a function
that maps every variable
$f \in \Phi$
to its interpretation
 $I(f) \colon \{0,1\}^{\arity{f}} \to \{0,1\}$.
If $I$ is a $\Phi$-interpretation, $f\in \Phi$ has arity $n$, and $F \colon \{0,1\}^n \to \{0,1\}$, then $I^f_F$ is the $\Phi$-interpretation defined by $I^f_F(f) \dfn F$ and $I^f_F(g) \dfn I(g)$ for all $g \neq f$.
The \emph{valuation} $\llbracket \phi \rrbracket_I \in \{0,1\}$ of a formula $\phi$ in $I$ is defined as follows:
\begin{alignat*}{2}
  & \llbracket \phi \land \psi \rrbracket_I         &  & \dfn \llbracket \phi \rrbracket_I \cdot \llbracket \psi \rrbracket_I\text{,}               \\
  & \llbracket \neg \phi \rrbracket_I               &  & \dfn 1 - \llbracket \phi \rrbracket_I\text{,}                                              \\
  & \llbracket f(\phi_1,\ldots,\phi_n) \rrbracket_I &  & \dfn I(f)(\llbracket \phi_1 \rrbracket_I, \ldots, \llbracket \phi_n \rrbracket_I)\text{,}  \\
  & \llbracket \exists f \phi\rrbracket_I           &  & \dfn \max\Set{\llbracket\phi\rrbracket_{I^f_F} | F \,\colon \{0,1\}^n \to \{0,1\}}\text{.}
\end{alignat*}
We often write $I \vDash \phi$ instead of $\llbracket \phi \rrbracket_I = 1$.
We write $\phi \vDash \psi$, if $I \vDash \phi$ implies $I \vDash \psi$ for all
suitable
interpretations $I$. We say that $\phi$ and $\psi$ are equivalent and write $\phi\equiv\psi$, if $\phi \vDash \psi$ and $\psi \vDash \phi$.
A $\Phi$-formula $\phi$ is \emph{valid} if $\llbracket \phi \rrbracket_I = 1$ for all $\Phi$-interpretations $I$.
It is \emph{satisfiable} if there is at least one $I$ such that $\llbracket \phi \rrbracket_I = 1$.
Finally, a valid closed formula is called \emph{true}.




\subsection{Syntactic restrictions and normal forms}

Next we consider basic normal forms of $\SO$ such as \emph{prenex form} and \emph{conjunctive normal form}.
These are defined as in classical QBF, except that a second-order literal may contain multiple variables in a nested way.
Analogously to the classical case, we show that virtually all lower bounds already hold for those fragments. Here $[n]$ is used to denote the set of natural numbers $\{1,2,\ldots,n\}$.

\begin{definition}
A \emph{literal} is a term or the negation of a term.
A \emph{clause} is a disjunction of literals.
A formula in \emph{conjunctive normal form (CNF)} is a conjunction of clauses.
 A formula is a \emph{Horn formula} if it is a CNF such that every clause contains at most one non-negated literal.
 A formula is a \emph{Krom formula} if it is a CNF such that every clause contains at most two literals.
 A formula is a \emph{core formula} if it is Horn and Krom.
\end{definition}


\begin{definition}[$\Sigma_k$ and $\Pi_k$]
 Let $k \geq 1$.
 The set $\Sigma_k$ contains all formulae of the form
 \begin{align*}
  Q_1 \vec{f_1} \cdots \+ Q_{k} \vec{f_k} \; Q_{k+1}  \,\vec{x} \; \theta
 \end{align*}
 where $Q_i = \exists$ ($Q_i = \forall$) if $i$ is odd (even), $\theta$ is quantifier-free, and $\vec{x}$ is a tuple of propositional variables.
 Moreover, we insist that all quantified variables are distinct.
 The analogous definition of $\Pi_k$ is achieved by swapping $\exists$ and $\forall$.
\end{definition}

The union $\bigcup_{k\geq 1}\Sigma_k$ is written $\Sigma_\omega$.
A formula $\phi$ is in \emph{prenex form} if it is in $\Sigma_\omega \cup \Pi_\omega$.
A formula in prenex form is called Horn, Krom, or core, if its quantifier-free part is a CNF of the corresponding form.



Compared to classical QBF, the structure of second-order literals is much richer due to the ability to use nested Boolean functions, and because we can have function variables appear with different arguments.
In this paper, we explore the complexity landscape that results from allowing second-order literals to occur only in a controlled fashion.
In extension to the fragments introduced above, we define two classes of formulae that play major roles in the subsequent results: uniqueness and simpleness.

\begin{definition}[Uniqueness]
 A formula $\phi$ has \emph{uniqueness} if for all pairs of terms of the form $f(t_1,\ldots,t_n)$  and $f(t'_1,\ldots,t'_n)$ that occur in $\phi$, it holds that $t_i = t'_i$ for all $i \in [n]$.
\end{definition}

In other words, a function variable must always appear with the same arguments.
For example, the formulae $f(0) = f(1)$ and $\exists x \forall y (x \equ f(y))$ both state that $f$ is a constant function, but only the second one has uniqueness.


\begin{definition}[Simpleness]
 A formula is \emph{simple} if functions occurring in it have only propositions as arguments.
\end{definition}

%
If a formula is not simple, it is not hard to restore simpleness by introducing additional existential variables.
For example, $f(g(x))$ is equivalent to $\exists y \,(g(x) = y \land f(y))$.

\begin{proposition}\label{prop:simpleness}
 For every $\SO$-formula $\phi$ in prenex form there is a logspace-computable and simple formula $\psi$ equivalent to $\phi$.
\end{proposition}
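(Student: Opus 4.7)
The plan is to eliminate nested function applications one layer at a time via fresh propositional placeholders, following the author's observation that $f(g(x)) \equiv \exists y\,(g(x) \equ y \land f(y))$. Write $\phi = Q_1 \vec{f_1} \ldots Q_k \vec{f_k}\, Q_{k+1} \vec{x}\, \theta$ in prenex form, and let $T$ be the collection of compound subterms of $\theta$ that occur as arguments of some larger term. For each $t \in T$ I would introduce a fresh propositional variable $y_t$ and define $\hat t \dfn y_t$ for $t \in T$ and $\hat t \dfn t$ if $t$ is a propositional variable. Let $\theta'$ result from replacing every argument occurrence $t$ inside $\theta$ by $\hat t$; then $\theta'$ is simple, because every function application in $\theta'$ now has only propositional arguments. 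Let $C$ be the conjunction, over all $t = g(t_1,\ldots,t_n) \in T$, of the simple constraint $y_t \equ g(\hat t_1, \ldots, \hat t_n)$.

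Next I would absorb $\vec y$ into the innermost propositional block. Set
\begin{align*}
\psi \,\dfn\, Q_1 \vec{f_1} \ldots Q_k \vec{f_k}\, Q_{k+1} \vec{x}\,\vec{y}\; \theta''\text{,}
\end{align*}
where $\theta'' \dfn \theta' \land C$ if $Q_{k+1} = \exists$, and $\theta'' \dfn C \imp \theta'$ if $Q_{k+1} = \forall$. Because only the innermost propositional block is extended, $\psi$ stays in the same prenex class $\Sigma_k$ or $\Pi_k$ as $\phi$, and it is simple by construction.

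To verify $\phi \equiv \psi$, fix any interpretation $I$ of the free variables and any choice of values for the function quantifier blocks. For each value of $\vec x$, an induction on term depth shows that the conjuncts of $C$ admit a unique satisfying assignment of $\vec y$, namely $y_t = \llbracket t \rrbracket$; and under this assignment $\llbracket \theta' \rrbracket = \llbracket \theta \rrbracket$. In the existential case, $\exists \vec y\,(\theta' \land C)$ therefore agrees with $\theta$; in the universal case, $\forall \vec y\,(C \imp \theta')$ is vacuous for $\vec y$ violating $C$ and equals $\theta$ on the unique satisfying assignment. Hence the last quantifier block evaluates identically under $\phi$ and $\psi$.

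Finally, the translation is logspace because it is syntactically local: a logspace transducer can scan $\phi$, assign each compound subterm a canonical name derived from its position of occurrence in the input, and emit the extended prefix, the constraint $C$, and the replaced matrix $\theta'$ in a constant number of passes. I expect no step to be hard; the only mild subtlety is the case split on $Q_{k+1}$ and verifying that the implicative form in the universal case correctly encodes the forced equality between $\vec y$ and the original subterm values.
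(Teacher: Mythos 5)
Your proof is correct, and its core is the same Tseitin-style device as the paper's: name each nested subterm by a fresh proposition $y_t$, constrain $y_t$ to equal the (now flattened) term, and observe that the constraints have a unique satisfying assignment under which the flattened matrix agrees with the original. The one genuine difference is how the fresh propositions are quantified. The paper always appends an existential block $\exists y_1\cdots\exists y_k$ after the entire prefix with matrix $\theta^*\land\bigwedge_i(y_i\equ t_i^*)$; this proves the proposition as stated but may add a quantifier alternation when the final propositional block is universal, which is exactly why \Cref{cor:simpleness-hierarchy} is restricted to $\Pi_k$ with $k$ odd and $\Sigma_{k+1}$. You instead case-split on $Q_{k+1}$, quantifying $\vec y$ universally with matrix $C\imp\theta'$ in the universal case, which buys preservation of the prenex class $\Sigma_k$/$\Pi_k$ in all cases --- a mild strengthening, correctly justified by the vacuous-truth argument. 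The trade-off is that $C\imp\theta'$ is no longer in CNF, so your universal branch cannot support the corollary's claim that the translation preserves the Horn, Krom, and core properties; for that one needs the conjunctive (existential) form, which is why the paper settles for the weaker statement in the proposition and confines class preservation to the corollary. Your logspace argument is fine.
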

\begin{proof}
 Suppose $\phi = Q_1 f_1 \cdots Q_n f_n \, \theta$ with $\theta$ quantifier-free.
 Let $t_1,\ldots,t_k$ be an enumeration of all terms in $\theta$.
 Then $\phi$ is equivalent to the formula
 \begin{align*}
  Q_1 f_1 \cdots Q_n f_n \exists y_1 \cdots \exists y_k\, \Big(\theta^* \land \bigwedge_{i=1}^k (y_i \equ t^*_i)\Big)\text{,}
 \end{align*}
 where $\theta^*$ is obtained from $\theta$ by recursively replacing all terms $t_j$ that occur nested inside other terms by $y_j$.
\end{proof}

\begin{corollary}\label{cor:simpleness-hierarchy}
 Let $k$ be odd and let $\Psi \in \{\Pi_k, \Sigma_{k+1} \}$.
 Then for every formula $\phi \in \Psi$ there is a logspace-computable formula $\psi \in \Psi$ that is simple and equivalent to $\phi$.
 Furthermore, this translation preserves uniqueness, and the Horn, Krom and core property.
\end{corollary}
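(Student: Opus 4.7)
The plan is to invoke Proposition~\ref{prop:simpleness} and then observe that, for the two specified quantifier types, the newly introduced existential propositional quantifiers $\exists y_1 \cdots \exists y_k$ can be absorbed into the existing innermost quantifier block without creating a new alternation.

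First I would verify that in both cases the innermost propositional block of $\phi$ is existential. For $\Psi = \Pi_k$ with $k$ odd, the $k$th function block is $\forall$, so the propositional block $Q_{k+1}\vec{x}$ is $\exists$. For $\Psi = \Sigma_{k+1}$ with $k$ odd, the exponent $k+1$ is even, so $Q_{k+1}\vec{f_{k+1}}$ is $\forall$ and hence $Q_{k+2}\vec{x}$ is $\exists$. Thus in either case, the existing innermost block of $\phi$ is existential and propositional, which is exactly the type of block we want to enlarge.

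Next I would apply the construction of Proposition~\ref{prop:simpleness} to obtain an equivalent simple formula of the shape
\[
 Q_1 \vec{f_1} \cdots Q_k \vec{f_k}\, Q_{k+1}\vec{x}\, \exists y_1 \cdots \exists y_m\, \Bigl(\theta^* \land \bigwedge_{i=1}^m (y_i \equ t_i^*)\Bigr).
\]
By the previous paragraph, $Q_{k+1}\vec{x}$ (respectively $Q_{k+2}\vec{x}$) is an existential propositional block, so the new $\exists \vec{y}$ can be merged into it. The resulting formula stays in $\Psi$, is simple, and is logspace-computable from $\phi$ since Proposition~\ref{prop:simpleness} is logspace-computable and the rearrangement of adjacent existential quantifiers is trivial.

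It remains to check that uniqueness and the Horn/Krom/core property survive the translation. The Krom and Horn cases reduce to observing that each added equivalence $y_i \equ t_i^*$ rewrites in CNF as $(\neg y_i \lor t_i^*) \land (\neg t_i^* \lor y_i)$, where each conjunct has exactly two literals and exactly one positive literal; and that passing from $\theta$ to $\theta^*$ only substitutes fresh propositions for subterm arguments inside existing literals, hence preserves the number of literals per clause and the sign of each literal. For uniqueness, the hypothesis forces every occurrence of a function variable $f$ in $\phi$ to share a single argument tuple $(s_1,\dots,s_n)$, so every occurrence of $f$ in the transformed formula (whether in $\theta^*$ or in the added conjunct $y_j \equ f(y_{j_1},\dots,y_{j_n})$) carries the common tuple $(y_{j_1},\dots,y_{j_n})$, where $y_{j_i}$ is the fresh proposition assigned to $s_i$. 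The main point requiring care is precisely this uniqueness check: one must be sure that the enumeration $t_1,\dots,t_k$ identifies syntactically equal subterms so that fresh variables are assigned consistently across all occurrences; once this is arranged, the argument is routine.
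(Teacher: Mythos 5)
Your proposal is correct and follows exactly the argument the paper intends: apply Proposition~\ref{prop:simpleness} and absorb the fresh existential propositions into the innermost propositional block, which is existential precisely for $\Pi_k$ with $k$ odd and for $\Sigma_{k+1}$ with $k+1$ even. The paper leaves the corollary without an explicit proof, and your verification of the quantifier parity, the core form of the clauses $(\neg y_i \lor t_i^*) \land (\neg t_i^* \lor y_i)$, and the need to assign fresh variables consistently to syntactically equal subterms for uniqueness fills in the intended details accurately.
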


If $\Psi$ is a set of formulae, then $\Psi^{\mathsf{s}}$ is its restriction to simple formulae and $\Psi^{\mathsf{u}}$ is its restriction to formulae with uniqueness, and similarly $\Psi^{\mathsf{h}}$, $\Psi^{\mathsf{k}}$ and $\Psi^{\mathsf{c}}$ for Horn, Krom, and core.
For instance, $\Sigma_2^{\mathsf{ush}}$ is the set of all simple $\Sigma_2$-formulae with uniqueness which are in Horn CNF.


\subsection{Known complexity results}

We assume the reader to be familiar with basic complexity classes such as $\PSPACE$ and the exponential hierarchy, as well as logspace-reductions and basics of Turing machines. For a detailed exposition for these topics we refer the reader to \cite{AB} and to the complexity toolbox in \Cref{A:toolbox}.

The quantifier alternation hierarchy of second-order Boolean logic is complete for the respective levels of the exponential time hierarchy, completely analogous to fragments of ordinary QBF being complete for the levels of the polynomial hierarchy.


\begin{theorem}[\cite{Lohrey12,Luck16}]\label{thm:bounded-completeness}
 Let $k \geq 1$.
 Truth of $\Sigma_k$-formulae is complete for $\SigmaE{k}$, and truth of $\Pi_k$-formulae is complete for $\PiE{k}$.
\end{theorem}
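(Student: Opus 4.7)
The plan is to establish both directions via the standard simulation correspondence between alternating $\SO$-quantifier blocks and alternating exponential time, lifting the classical QBF-vs-PH correspondence to the exponential level.

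For the upper bound, I would design an alternating Turing machine evaluating a $\Sigma_k$-formula $\exists \vec{f_1} \forall \vec{f_2} \cdots Q_k \vec{f_k} \, Q_{k+1} \vec{x} \, \theta$ in exponential time with $k-1$ alternations. The key observation is that a function variable $f$ of arity $n$ is determined by its truth table, of size $2^n \le 2^{|\phi|}$. On each quantifier block $Q_i \vec{f_i}$ the machine existentially or universally writes the complete truth tables of $\vec{f_i}$ onto a work tape, matching the quantifier type; this uses exponential space but only $k-1$ alternations overall. Once all function interpretations are fixed, the remaining formula is a propositional QBF of polynomial quantifier depth in the variables $\vec{x}$, and any term $t$ occurring in $\theta$ can be evaluated by recursive table lookup in exponential time. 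This places the truth problem in $\SigmaE{k}$; the dual argument handles $\Pi_k$ and puts it in $\PiE{k}$.

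For the lower bound, I would perform a Cook--Levin style encoding of an arbitrary $\SigmaE{k}$ machine $M$ running in time $2^{p(n)}$ with $k-1$ alternations. Configurations of $M$ on input $w$ have exponential length and are naturally described by Boolean functions of arity $O(p(|w|))$: the symbol at cell $\vec{b}$ and time $\vec{t}$, together with head position and state flags, can be read off from values $f(\vec{t}, \vec{b})$ of a small tuple of function variables. The $i$-th alternating phase of $M$ is encoded by the $i$-th block of function quantifiers, so the prenex matches $\Sigma_k$. The quantifier-free matrix $\theta$ then asserts, under a final block of \emph{universally} quantified proposition variables $\vec{t},\vec{b}$, the three local conditions: (i) the initial configuration encodes $w$, (ii) consecutive configurations are linked by the transition relation of $M$, and (iii) the final configuration is accepting. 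Each of these is a propositional combination of terms $f(\vec{t},\vec{b})$ evaluated at related indices.

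The main obstacle is implementing condition (ii) faithfully. The classical polynomial-hierarchy reduction indexes tape cells by polynomially many bits and writes the transition relation as a polynomial-size quantifier-free formula per cell; here each cell index already requires $p(n)$ bits, and we cannot afford to spell out the transition for every cell separately. The fix is to express the transition locally as a single formula over the generic index variables $\vec{t},\vec{b}$, using a polynomial-size binary-successor gadget (a quantifier-free formula relating $\vec{b}$ to its successor bit string) to access neighbouring cells and the subsequent time step. The universal quantifier block over $\vec{t},\vec{b}$ then enforces the condition at every cell and time step simultaneously, keeping the total formula size polynomial in $|w|$ and the construction computable in logspace. The $\Pi_k$ lower bound follows by dualising the construction on the complementary machine.
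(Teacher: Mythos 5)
The paper does not prove this theorem itself: it is imported from the cited works of Lohrey and L\"uck, so there is no in-paper proof to compare against line by line. That said, your sketch is the standard argument that those references use, and it also mirrors the techniques the paper deploys for its own hardness results in Section~5 (\Cref{thm:direct-pspace-hardness} and \Cref{thm:direct-exp-hardness} encode configurations as values $f(\vec t,\vec b)$ of quantified function variables and use exactly the kind of binary-successor gadget you describe). The upper bound via guessing exponential-size truth tables block by block, combined with the characterization $\SigmaE{k}=\ATIME{2^{n^{\calO(1)}},k}$ (\Cref{thm:alternating_exp_classes}), is correct.

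One point you should make explicit in the lower bound. The computation tableau of $M$ depends on the nondeterministic choices made in \emph{all} $k$ phases, so the function variables encoding the tableau must be quantified \emph{inside} the innermost function block, not in the first existential block. When $k$ is odd the innermost block is existential and this is immediate; but when $k$ is even the innermost function block of $\Sigma_k$ is universal, and the final propositional block $Q_{k+1}=\exists$ is too weak to hold an exponential-size tableau. The standard repair is to exploit that the run following the last choice phase is deterministic, hence the tableau is \emph{unique} given the choices: one universally quantifies the tableau functions in block $k$ and writes the matrix as an implication (``if the tableau is initialised correctly and respects the transition relation, then it is accepting''). Without this (or an equivalent manoeuvre) the quantifier prefix does not match $\Sigma_k$ for even $k$. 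A symmetric adjustment is needed for the $\Pi_k$ case. This is a fixable technicality rather than a flaw in the overall strategy, but as written your construction only goes through for half the values of $k$.
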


The result generalizes to unbounded number of quantifier alternations.
The full logic is complete for the class $\AEXPPOLY$, that is, exponential runtime (corresponding to the size of second-order interpretations) but only polynomially many alternations (corresponding to the quantifier alternations in a formula with respect to its length).

\begin{theorem}[\cite{HannulaKLV16, Luck16}]\label{thm:unbounded-completeness}
 Truth of $\Sigma_\omega^{\mathsf{us}}$-formulae as well as arbitrary $\SO$-formulae  is complete for $\AEXPPOLY$.
%
\end{theorem}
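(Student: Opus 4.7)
I propose proving the two halves separately: membership of arbitrary $\SO$-formulae in $\AEXPPOLY$ and $\AEXPPOLY$-hardness already for $\Sigma_\omega^{\mathsf{us}}$. Since $\Sigma_\omega^{\mathsf{us}} \subseteq \SO$, this yields completeness for both classes at once.

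For membership, I would first convert an arbitrary $\SO$-formula into prenex form via standard prenexation (polynomial blowup, logspace-computable) and then apply \Cref{prop:simpleness} to obtain an equivalent simple formula in $\Sigma_\omega \cup \Pi_\omega$. An alternating Turing machine then processes the prefix block by block: for each quantifier block $Q_i \vec{f_i}$, the ATM existentially or universally guesses, in one phase, the complete truth tables of all variables in $\vec{f_i}$. Since each function variable has arity bounded by $|\phi|$, its table has at most $2^{|\phi|}$ entries and is writable in exponential time. The total number of alternations equals the number of quantifier blocks, which is polynomial in $|\phi|$, and once the prefix is fixed the quantifier-free matrix can be evaluated deterministically in time polynomial in the combined truth-table sizes, still within an exponential budget. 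This places $\SO$ in $\AEXPPOLY$.

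For hardness, I would reduce in logspace from acceptance of an alternating Turing machine $M$ with $q(n)$ alternations running in time $2^{p(n)}$ on inputs of length $n$. Given $M$ and input $x$, I construct a formula
\[
  Q_1 \sigma_1 \cdots Q_{q(n)} \sigma_{q(n)} \; \exists \vec{\tau} \; \forall \vec{t}\,\vec{j}\, \theta
\]
whose $Q_i$'s mirror the alternation pattern of $M$. Each $\sigma_i$ has arity $p(n)$ and encodes the nondeterministic moves during the $i$-th alternation phase, indexed by a $p(n)$-bit time counter; the tuple $\vec{\tau}$ collects existentially quantified functions of arity $2p(n)$ whose values jointly represent the tape contents over time and cell; $\vec{t}, \vec{j}$ are propositional index tuples; and $\theta$ encodes (a) consistency with the initial configuration determined by $x$, (b) correctness of each cell-by-cell transition with respect to $M$'s transition function and the move selected by the appropriate $\sigma_i$, and (c) the eventual appearance of an accepting state. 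Provided $\theta$ can be cast in simple unique form, the whole formula lies in $\Sigma_{q(n)+1}^{\mathsf{us}} \subseteq \Sigma_\omega^{\mathsf{us}}$ and is true iff $M$ accepts $x$.

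The principal technical obstacle is to cast $\theta$ in unique form: a naive transition check would compare $\vec{\tau}$ at the current index $(\vec{t},\vec{j})$ with its values at $(\vec{t}+1, \vec{j})$ and $(\vec{t}, \vec{j}\pm 1)$, violating the single-argument-tuple condition. I would resolve this along the lines of the classical unique-simple $\Sigma_1$ reduction for $\NEXP$-hardness of DQBF, lifted to alternating machines as in \cite{HannulaKLV16}: introduce one function variable per required access pattern (for instance $\tau_{\text{cell}}, \tau_{\text{left}}, \tau_{\text{right}}, \tau_{\text{next}}$), each invoked only at the single tuple $(\vec{t}, \vec{j})$, and enforce agreement between these shifted views via additional bridge function variables together with universally quantified single-tuple constraints that pin down the shift relations. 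Simpleness is automatic since all arguments are propositional, and the whole construction is logspace. Working out the bridge constraints so that they genuinely force the shifted views to coincide, without ever accessing a function variable at a second argument tuple, is the delicate part; the remaining pieces of $\theta$ are routine local conditions on the machine's transition relation.
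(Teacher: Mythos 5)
There is nothing in the paper to compare against here: Theorem~\ref{thm:unbounded-completeness} is imported from \cite{HannulaKLV16,Luck16} and not proved in the text. Your two-part reconstruction is the standard argument from those sources and is sound: prenexing plus \Cref{prop:simpleness} followed by block-wise alternating guessing of exponential-size truth tables gives the $\AEXPPOLY$ upper bound (polynomially many blocks, hence polynomially many alternations, with the matrix evaluated in exponential time), and the tableau encoding of an $\ATIME{2^{n^{\calO(1)}},n^{\calO(1)}}$ computation gives hardness. The one step you leave as "delicate" --- forcing the shifted views $\tau_{\text{cell}},\tau_{\text{left}},\tau_{\text{right}},\tau_{\text{next}}$ to agree without applying any function variable to a second argument tuple --- is handled by precisely the copy-and-equate device that this paper itself uses when restoring uniqueness in the proof of Theorem~\ref{thm:hardness-nonsimple-sigma}: give each copy its own universally quantified argument tuple $\vec z_i$ and add clauses of the form $(\vec z = \vec z_i) \to \big(\tau(\vec z) \equ \tau_i(\vec z_i)\big)$, which pins all copies to the same function while preserving uniqueness and simpleness. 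Modulo writing out those bridge clauses (and noting that $\AEXPPOLY$ is closed under complement, so the alternation pattern can be taken to start existentially), your sketch is correct and completable.
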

%


However, as Bubeck and Büning \cite{BubeckB06} showed, the complexity even of second-order logic can drop down to tractable classes when the matrix of the formula is restricted to Horn clauses:

\begin{theorem}[\cite{BubeckB06}]\label{thm:p-horn}
 Truth of $\Sigma_1^{\mathsf{ush}}$, that is, $\Sigma_1$-Horn formulae with simpleness and uniqueness, is $\P$-complete.
\end{theorem}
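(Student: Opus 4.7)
The plan has two parts. \textbf{Hardness} is immediate: propositional Horn satisfiability is \P-complete, and any propositional Horn formula $\psi$ over variables $\vec{p}$ is itself a $\Sigma_1^{\mathsf{ush}}$-Horn formula (reading each $p_i$ as a nullary function), so prepending the prefix $\exists \vec{p}$ gives a closed formula whose truth coincides with satisfiability of $\psi$, a logspace reduction. For \textbf{membership in \P}, let $\Phi = \exists f_1 \cdots \exists f_m \, \forall p_1 \cdots \forall p_n \, \psi$ with $\psi$ a Horn CNF. By simpleness and uniqueness, every function term in $\psi$ has the form $f_i(\vec{q}^{(i)})$ for a fixed tuple $\vec{q}^{(i)}$ of universal propositions. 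For each $\alpha \in \{0,1\}^n$ the instantiation $\psi_\alpha$ is a propositional Horn formula over the ground atoms $f_i(\alpha(\vec{q}^{(i)}))$, and truth of $\Phi$ amounts to the existence of a family of functions $f_1^\ast,\ldots,f_m^\ast$ satisfying $\psi_\alpha$ for \emph{every} $\alpha$ simultaneously.

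The algorithmic idea exploits two standard Horn properties: every satisfiable Horn formula has a unique pointwise-minimal model (computable by unit propagation), and Horn models are closed under componentwise intersection. Together these imply that if any witness exists then the componentwise minimum of all witnesses is again one, so $\Phi$ is true iff a canonical minimal strategy works. I would compute this canonical strategy by symbolic unit propagation on $\psi$ itself: initialise each $f_i^\ast$ as the constant $0$ function, then repeatedly scan each clause $C$ of $\psi$---for every pattern of universal values under which, with the current $f_i^\ast$'s, all negative literals of $C$ are forced true, either update the positive head $f_j^\ast$ at the corresponding argument tuple to $1$, or halt with output ``false'' if $C$ has no positive head. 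Return ``true'' at a fixpoint. Soundness is clear (each step reflects a Horn implication valid in every model), and completeness follows from intersection-closure, which guarantees that the derived $f_i^\ast$ is a valid witness whenever the propagation does not abort.

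The main obstacle is the polynomial time bound: the domain of each $f_i^\ast$ may have exponential size in $|\vec{q}^{(i)}|$, and naively iterating over all universal patterns is exponential. The resolution, due to Bubeck and B\"uning, is to represent the state compactly---because uniqueness syntactically fixes each argument tuple $\vec{q}^{(i)}$, the firing condition of a clause $C$ reduces to a small system of equalities and forbidden values on the propositions $\vec{p}$, and the propagation can be organised so that each of the $O(|\psi|)$ clauses contributes only polynomially many derivation events before a fixpoint is reached. Wrapping this into the fixpoint loop yields the required polynomial-time decision procedure; the compact bookkeeping together with the intersection-closure argument for correctness is the technical heart of the proof, and everything else is routine verification.
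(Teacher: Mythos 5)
The paper does not prove this statement; it is imported verbatim from Bubeck and B\"uning \cite{BubeckB06}, so there is no in-paper argument to compare against. Judged on its own terms, your proposal is sound on the easy parts --- the $\P$-hardness reduction from propositional Horn satisfiability is correct, and the semantic setup for membership (truth of $\Phi$ is equivalent to the existence of a witness family satisfying every instantiation $\psi_\alpha$, the witness set is closed under componentwise minimum because each $\psi_\alpha$ is Horn, hence a least witness exists iff any witness exists) is also correct.

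The gap is in the only part of the theorem that is not routine: the polynomial time bound. Your algorithm as described iterates ``for every pattern of universal values under which \dots all negative literals of $C$ are forced true,'' which is an exponential loop, and the least witness itself can have exponentially many $1$-entries (a single clause $p_1 \imp f(p_1,\ldots,p_n)$ already forces $f$ to be $1$ on $2^{n-1}$ tuples), so some compact symbolic representation of the propagation state is mandatory. You assert that ``the propagation can be organised so that each of the $O(|\psi|)$ clauses contributes only polynomially many derivation events,'' but you neither define what a derivation event is in the symbolic setting, nor bound the size of the symbolic state, nor show that the fixpoint is reached after polynomially many such events --- and this claim is exactly where the theorem lives; deferring it to ``due to Bubeck and B\"uning'' is circular. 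The actual argument in \cite{BubeckB06} does not run a symbolic fixpoint at all: it establishes a \emph{model property} for dependency quantified Horn formulas (satisfiable instances admit witness functions of a restricted syntactic shape, essentially constants or monotone functions determined by the dependency tuple), which allows the instance to be rewritten as a polynomial-size propositional Horn formula that is then solved by ordinary unit propagation. That structural lemma, or some substitute for it, is missing from your proposal, so as it stands the membership direction only yields the trivial $\EXP$ upper bound of \Cref{cor:exp}.
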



\subsection{Simplification based on variable dependencies}

We conclude this section with a rather technical auxiliary result called \emph{argument elision} that will be required in the subsequent sections.
It allows to simplify formulae as follows.
For example, the formula $\forall x\, \exists f \, \big(f(z,x)\leftrightarrow g(z)\big)$ can be simplified to an equivalent formula $\forall x\, \exists f_z \, \big(f_z(x)= g(z)\big)$, for as the value of $z$ is fixed to some $b\in\{0,1\}$ before $f$ is quantified, the interpretations of $f$ and $f_z$ can be always copied from another such that $f_z(x)$ and $f(b,x)$ are the same functions. Hence the free variable $z$ can be \emph{elided} from the quantified function variable.
%
%
%
%
Perhaps more relevant is the case where $z$ is not free, but simply quantified before $f$.
Indeed, the formulae $\forall z \forall x \exists f \, \big(f(z,x)=g(z)\big)$ and $\forall z\forall x \exists f_z\, \big(f_z(x)=g(z)\big)$ are equivalent.


\emph{Eliding the $i$-th position} of a function variable $f$ in a formula $\phi$ means to replace every quantifier $Q f$ by $Q g$, where $g$ is a fresh function variable of arity $\arity{f}-1$ and $Q \in \{\exists, \forall\}$, and every term $f(t_1,\ldots,t_n)$ with $g(t_1,\ldots,t_{i-1},t_{i+1},\ldots,t_n)$.
If a formula has uniqueness (\ie, functions always appear with the same arguments $t_1,\ldots,t_n$) then \emph{eliding a term $t$} from a function variable $f$ means the consecutive elision of all positions $i$ such that $t_i = t$.

The following proposition follows via a simple inductive argument (see \Cref{A:free-term-elision}).
\begin{restatable}[Free term elision]{proposition}{freetermelision} \label{prop:free-term-elision}
 Let $\phi\in\SO^\mathsf{u}$ be a prenex formula, $f$ a function variable not free in $\phi$, and $t$ a term free in $\phi$.
 Then eliding $t$ from $f$ yields a formula equivalent to $\phi$.
\end{restatable}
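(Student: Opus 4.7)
The plan is to reduce the statement to eliding a single position of $f$ and then induct on the prenex prefix of $\phi$. By uniqueness, the arguments of $f$ in $\phi$ are always the same tuple $(t_1,\ldots,t_n)$ where $n = \arity{f}$, so the set $S \dfn \{i \in [n] \mid t_i = t\}$ of positions to be elided is well-defined. Eliding $t$ is the composition of the single-position elisions indexed by $S$, each introducing a fresh function variable of arity one less. It therefore suffices to prove the proposition when a single position $i$ with $t_i = t$ is elided; let $g$ denote the fresh variable of arity $n-1$ replacing $f$, and $\phi'$ the resulting formula.

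Since $f$ is not free in $\phi$, its prenex prefix contains a unique binder for $f$. Write $\phi = \mathcal{Q}_1\,Qf\,\mathcal{Q}_2\,\theta$, where $Qf$ is that binder. I would induct on the length of $\mathcal{Q}_1$. The inductive step peels off the outermost quantifier $Q'x$: $\llbracket \phi \rrbracket_I$ is an aggregate ($\max$ or $\min$) of $\llbracket \mathcal{Q}_1'\,Qf\,\mathcal{Q}_2\,\theta \rrbracket_J$ over extensions $J$ of $I$ that reinterpret $x$, and similarly for $\phi'$. Since $\Var(t) \subseteq \Fr(\phi)$ is disjoint from the quantified variables of $\phi$, in particular $x \notin \Var(t)$, so $t$ remains free in the inner subformula; $f$ is still not free there; and uniqueness is inherited. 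The induction hypothesis equates the inner evaluations pointwise, completing the step.

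The base case is $\mathcal{Q}_1$ empty, so $\phi = Qf\,\psi$ with $\psi = \mathcal{Q}_2\,\theta$. Fix an interpretation $I$ and let $b \dfn \llbracket t \rrbracket_I$; because $\Var(t)$ remains free throughout $\psi$, the value of $t$ is $b$ under every extension of $I$ used while evaluating $\psi$. Given any $F\colon \{0,1\}^n \to \{0,1\}$, define $G \colon \{0,1\}^{n-1} \to \{0,1\}$ by
\[
 G(a_1,\ldots,a_{n-1}) \dfn F(a_1,\ldots,a_{i-1},b,a_i,\ldots,a_{n-1}),
\]
and conversely, given $G$, extend it to some $F$ by filling in the $i$-th coordinate arbitrarily. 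For each such pair $(F,G)$, an inner structural induction on $\psi$ establishes $\llbracket \psi \rrbracket_{I^f_F} = \llbracket \psi' \rrbracket_{I^g_G}$, where $\psi'$ is $\psi$ with the elision applied. The only nontrivial case is a subterm $f(t_1,\ldots,t_n)$: by uniqueness $t_i = t$, so under any interpretation $J$ extending $I^f_F$ one has $\llbracket t_i \rrbracket_J = b$, whence $\llbracket f(t_1,\ldots,t_n) \rrbracket_J = F(\ldots,b,\ldots) = G(\llbracket t_1\rrbracket_J,\ldots,\llbracket t_{i-1}\rrbracket_J,\llbracket t_{i+1}\rrbracket_J,\ldots,\llbracket t_n\rrbracket_J) = \llbracket g(t_1,\ldots,t_{i-1},t_{i+1},\ldots,t_n) \rrbracket_{J^g_G}$. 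Aggregating over $F$ on the left and $G$ on the right then yields $\llbracket \phi \rrbracket_I = \llbracket \phi' \rrbracket_I$.

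The main subtlety is keeping the correspondence $F\leftrightarrow G$ sound even when $f$ appears nested inside other terms (simpleness is \emph{not} assumed). Uniqueness is precisely what makes this work: every access to $F$ during the evaluation of $\psi$ happens with the $i$-th coordinate equal to $b$, so the arbitrary choice made when lifting $G$ back to $F$ never matters. No other case of the inner induction raises any difficulty, because $g$ is fresh and hence absent from $\psi$, while $f$ has been purged from $\psi'$.
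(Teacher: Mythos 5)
Your proof is correct and takes essentially the same route as the paper's: reduce to eliding a single position, then argue by induction that interpreting the fresh variable $g$ as the restriction of $f$ with the elided coordinate fixed to $\llbracket t\rrbracket_I$ (and, conversely, lifting $G$ to an $F$ whose value on that coordinate is immaterial) preserves the valuation, with uniqueness guaranteeing that $F$ is only ever queried with that coordinate equal to $\llbracket t\rrbracket_I$. The paper merely packages this as one structural induction with the invariant $I(\phi)=I^\star(\phi^\star)$, where $I^\star(g)\dfn I(f)_{|I(t)}$, rather than splitting it into an outer induction on the prefix and an inner one on the matrix.
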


In particular, if follows that if $\phi\in\Sigma_\omega^{\mathsf{u}}$ is a formula, $f$ a function variable quantified in $\phi$, and $t$  a term such that all variables in $\Var(t)$ are quantified before $f$,
then the elision of $t$ from $f$ produces an equivalent formula.





\section{An NL-complete second-order fragment}\label{sec:nl}

In this section, we consider the Krom fragment and obtain tractability results for the first levels of the propositional second-order quantifier hierarchy.
We show completeness for $\NL$, and hence obtain fragments that are as hard as the ordinary propositional Krom fragment.
In our proofs, we follow the classical approach by Aspvall et al. \cite{APT79}, who showed that classical QBF with the quantifier-free part consisting of Krom clauses are solvable in $\NL$.
The approach is to interpret the formula as an \emph{implication graph} $G = (V,E)$.
The crucial idea of the approach is that connectedness in the graph corresponds to logical implication.
Here, $V$ is the set of all literals in $\phi$, closed under negation and $\neg\neg\ell$ identified with $\ell$.
An edge $(\ell_1,\ell_2) \in E$ exists when $\phi$ contains a clause equivalent to $\ell_1 \to \ell_2$, that is, of the form $\neg \ell_1 \lor \ell_2$.
A unit clause $\ell$ is identified with $(\neg \ell \to \ell)$.
%
A \emph{strongly connected component} (or simply \emph{component}) $S$ of $G$ is a maximal subset of vertices such that for all distinct $v,v' \in S$ there is a path from $v$ to $v'$.


In classical propositional logic, a set of Krom clauses is satisfiable precisely if no cycle of the implication graph contains some literal $\ell$ and its negation $\neg \ell$~\cite{APT79}.
With quantifiers, the matter complicates and we need to account for the notion of \emph{dependency} between variables.
A literal $t$ is called \emph{universal} (\emph{existential}) in $\phi$ if its head is quantified universally (existentially) in $\phi$.
A component is \emph{universal} (\emph{existential}) if it contains some (no) universal vertex.

%
A bit sloppily, we say that a literal $\ell$ is an \emph{argument} of a literal $\ell'$ if there are $r\geq 1$, $i \in [r]$ and a term $f(t_1,\ldots,t_r)$ such that $\ell$ or $\neg \ell$ equals $t_i$, and $\ell'$ or $\neg \ell'$ equals $f(t_1,\ldots,t_r)$. In what follows, we restrict ourselves to simple fragments, that is, all arguments are propositions.
%
%
\begin{definition}
A vertex $v$ \emph{depends on} a vertex $v'$, in symbols $v \depon v'$, if $v'$ is either an argument of $v$, or $v'$ is quantified
before
$v$ and every argument of $v'$ is either an argument of $v$ or
\begin{itemize}
\item is quantified  before $v$, if the argument is universal, and
\item is quantified before or at the same quantifier block than $v$, if the argument is existential.
\end{itemize}
If $S$ and $S'$ are components, we write $S \depon S'$ if some \emph{universal} vertex $u \in S$ depends on some vertex $v \in S'$.
\end{definition}

For classical Krom formulas, a qbf can be shown to be true if and only if the following conditions all hold~\cite{APT79}:

\begin{enumerate}[(1)]
 \item There is no path from a universal vertex $u$ to another one $u'$ (including the case $u = \neg u'$).\label{eq:1}
 \item No vertices $v$ and $\neg v$ are in the same component.\label{eq:2}
 \item Every existential vertex $v$ in the same component as some universal vertex $u$ must depend on $u$.\label{eq:3}
\end{enumerate}
%
%
%
We generalize the classical approach to account for second-order quantifiers.
This requires another condition similar to the above \eqref{eq:1}--\eqref{eq:3}:
\smallskip
\begin{enumerate}[(1)]
\addtocounter{enumi}{3}
 \item There is no $\depon$-cycle among the components.\label{eq:4}
\end{enumerate}

\begin{example}
One formula that violates \eqref{eq:4} is $\forall y_1 \forall y_2 \exists x_1 \exists x_2 (y_1(x_2) \leftrightarrow x_1) \land (y_2(x_1) \leftrightarrow x_2)$.
The reason is that $y_1(x_2) \depon x_2$ and $y_2(x_1) \depon x_1$, and therefore $\{y_1(x_2), x_1 \} \depon \{x_2, y_2(x_1) \} \depon \{x_1, y_1(x_2)  \}$ on the level of components.
Indeed, choosing the universal quantifiers as $y_1(x_2) = \neg x_2, y_2(x_1) = x_1$ refutes the formula.
\end{example}



We carry the classical approach to the second-order setting, in particular to the fragment of formulae introduced next.

\begin{definition}[Braided formulae]
 Let $\phi$ be a closed prenex formula, i.e., it is of the form
 $Q_1 \vec{f}_1 \cdots Q_m \vec{f}_m\, \theta\text{,}$
 for $\theta$ quantifier-free.
 Then $\phi$ is \emph{braided} if
 \begin{enumerate}[a)]
 \item for every existential quantifier $Q_i$, the arguments of each $g\in \vec{f}_i$ are quantified after $g$ in the quantifier blocks $Q_i$ and $Q_{i+1}$.
  \item for every universal quantifier $Q_i$, the arguments of each $g\in \vec{f}_i$ are quantified after $g$ in the quantifier blocks $Q_i$, $Q_{i+1}$, and $Q_{i+2}$.
 \end{enumerate}
 %
\end{definition}

In other words, in a braided formula, quantified functions take arguments only from
the same, the next, or the next next quantifier block.
Here, we restrict ourselves to braided  $\Sigma_\omega^{\mathsf{usk}}$-formulae.
That is, we consider only formulae of the form
\begin{align*}
 Q_1 f_1 \cdots Q_m f_m \, \bigwedge_{i=1}^k C_k,
\end{align*}
where $C_k = (\ell_k^1 \lor \ell_k^2)$ for literals $\ell_k^1,\ell_k^2$, and terms do not contain nested proper functions.

Next, we prove that the conditions \eqref{eq:1}--\eqref{eq:4} are necessary for $\phi$ being true in the braided case.
Afterwards, we show that they are also sufficient.

\begin{lemma}\label{lem:conditions-1-4-false}
Assume $\phi\in \Sigma_\omega^{\mathsf{usk}}$ and braided.
 If any of \eqref{eq:1} to \eqref{eq:4} is violated, then $\phi$ is false.
\end{lemma}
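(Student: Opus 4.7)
The plan is to treat the violations of (1)--(4) separately, and in each case exhibit a strategy for the universal (refuter) player that falsifies at least one clause regardless of how the existentials respond. The common thread is the propagation interpretation of the implication graph: an edge $(\ell_1,\ell_2)$ stands for the clause $\neg \ell_1 \lor \ell_2$, so along any path $\ell_0 \to \cdots \to \ell_k$ that is to be satisfied, $\ell_0 = 1$ forces $\ell_k = 1$. The key ingredient beyond the propositional Aspvall-Plass-Tarjan setup is that literals are now terms $f(\vec t)$ rather than propositions; uniqueness means each function variable occurs with a single argument tuple, so the implication graph carries at most one pair $\{f(\vec t_f),\, \neg f(\vec t_f)\}$ per universal $f$, and braiding guarantees that once the refuter has committed the table of $f$, its argument tuple is evaluated within the next two quantifier blocks.

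For a violation of (1), fix any path $u = \ell_0 \to \cdots \to \ell_k = u'$ between two universal literals. The refuter sets the function underlying $u$ to the constant $1$ and the function underlying $u'$ to the constant $0$; each further universal literal on the path has its underlying function set to whatever constant is forced by propagating from $\ell_0 = 1$ along the path. If two such prescriptions ever conflict for the same function variable, the matrix already contradicts itself along the path and some clause must fail; otherwise the intermediate existential literals are forced by their clauses to propagate the value $1$ all the way to $\ell_k$, contradicting $u' = 0$. Condition (2) is essentially the same argument applied to the two paths $v \to \cdots \to \neg v$ and $\neg v \to \cdots \to v$: whichever value $v$ takes in any play, one of the two paths starts at the true literal and ends at the false one, and propagation pinpoints a failing clause, with no active choice by the refuter needed.

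For a violation of (3), suppose $v$ is existential, $u$ is universal, they share a strongly connected component, and $v \not\depon u$. Unpacking the definition of $\depon$, either $u$ is quantified after $v$, or $u$ carries an argument inaccessible at the time $v$ is fixed; either way the refuter can, at the stage it chooses $u$'s underlying function, evaluate it on the relevant argument tuple to the negation of whatever value $v$ ended up taking. A path $v \to \cdots \to u$ inside the component, together with the propagation argument, then yields a falsified clause. For a violation of (4), a $\depon$-cycle $S_1 \depon \cdots \depon S_r \depon S_1$ provides a chain of universal vertices $u_i \in S_i$ each witnessing a dependency $u_i \depon w_i$ into the next component $S_{i+1}$, together with paths inside each component. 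Following the cycle once, the refuter picks each $u_i$ on its prescribed argument tuple so as to flip the value read at the previous step; going around produces a parity mismatch, exactly as in the example $\forall y_1 \forall y_2 \exists x_1 \exists x_2\, (y_1(x_2) \equ x_1) \land (y_2(x_1) \equ x_2)$, forcing some clause to fail.

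The main obstacle will be verifying that the "respond to $v$" strategy in (3) and the "flip around the cycle" strategy in (4) are actually legal moves of the refuter in the quantifier game; the crux is that the $\depon$ relation, combined with braiding and \Cref{prop:free-term-elision}, provides precisely enough freedom to evaluate each universal function on the required input tuple when it is its turn to be chosen. Cases (1) and (2) essentially lift the classical argument, with uniqueness collapsing multiple occurrences of each universal function into a single decision by the refuter.
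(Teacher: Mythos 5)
Your overall strategy (adversarial choices for the universal functions plus value propagation along implication paths) is the same as the paper's, and your treatment of \eqref{eq:1} and \eqref{eq:2} is fine. But there is a concrete error in your argument for \eqref{eq:4}: you have the refuter \emph{flip} at every step of the $\depon$-cycle, i.e.\ set $u_i = \neg w_{i+1}$ for each $i$. Going around a cycle of length $r$ this forces $S_1 = \neg S_2 = S_3 = \cdots$, which is a contradiction only when $r$ is odd; for even $r$ the constraints are consistent and no clause need fail. Indeed, on the very example you cite ($\forall y_1\forall y_2\exists x_1\exists x_2\,(y_1(x_2)\equ x_1)\land(y_2(x_1)\equ x_2)$, a $2$-cycle), flipping both universals gives $x_1 = \neg x_2$ and $x_2 = \neg x_1$, which is satisfiable by $x_1=1,x_2=0$. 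The correct strategy, and the one the paper uses, is to \emph{copy} along the cycle ($u_i$ equals $v_{i+1}$, realized as a projection or a restriction of $v_{i+1}$ to the common arguments) for $i<n$ and to flip exactly once at the last step, so that all components are forced to one common value that must simultaneously equal its own negation.

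For \eqref{eq:3} your case split on $v\not\depon u$ is right, but the sub-case where $u$ is quantified before $v$ is not actually resolved: you say the refuter can ``at the stage it chooses $u$'s underlying function, evaluate it \ldots to the negation of whatever value $v$ ended up taking,'' yet at that stage $v$ has not been chosen, so the refuter cannot react to it directly. The missing step --- which you yourself flag as ``the main obstacle'' rather than prove --- is precisely where the braided property does the work: the offending argument $z$ of $u$ lies in block $Q_{i+1}$ or $Q_{i+2}$, all arguments of $v$ lie in those same blocks, and after reordering within blocks so that $z$ comes last, one sets $u$ to the projection onto $z$; then $z$ is instantiated after $v$ and all of $v$'s arguments are fixed, so $u(\vec t_u)$ can be made to disagree with $v$. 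A similar legality argument (that $u_i$ may be chosen as a restriction of $v_{i+1}$ because every argument of $v_{i+1}$ is an argument of $u_i$ or already quantified) is needed in \eqref{eq:4} and is also left unverified in your proposal. As written, the proof has a wrong strategy in case \eqref{eq:4} and an unproven core step in case \eqref{eq:3}.
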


\begin{proof}
Let $G=(V,E)$ be the implication graph of $\phi$.
 \begin{enumerate}[(1)]
  \item Let $u$ and $u'$ be distinct universal vertices such that $(u, u')$ belongs to the transitive closure of $E$. Using an interpretation that maps $u$ and $u'$ to the constant functions $1$ and $0$, respectively, we can conclude that $\phi$ cannot be true.
  \item If $v$ and $\neg v$ are vertices from the same component, it follows that $\phi$ can be true only if $v \equ \neg v$ holds for some interpretation, which is clearly impossible.
  \item Let $v$ and $u$ be an existential and universal vertex from the same component, respectively, such that $v \not\depon u$.
  Hence $u$ is not an argument of $v$. We proceed to a case distinction: 

\begin{enumerate}[i)]
  \item The function $v$ is quantified before $u$ in $\phi$:
        By the braided property, all the arguments of $v$ (if there are any) are in the same quantifier block as $v$, or in the next one.
        Since changing the ordering of quantifiers in a universally quantified block does not have semantical consequences, we may stipulate that
        $u$ is the final quantifier of its block. Hence all arguments of $v$ are quantified before $u$ as well.
        As a consequence, there is a fixed interpretation of terms such that $v$ fully evaluates to either zero or one, but still must equal the universal $u$ which is quantified later, which is impossible.

        \item The function $u$ is quantified before $v$: Thus there exists an argument $z$ of $u$ that is not an argument of $v$ and that is quantified in the block of $v$ or the next one.
        But if $z$ is universal, it cannot be in the same block as $v$, and if $z$ is existential, it must be in another block by the definition of $\depon$. So in either case $z$ is quantified in a block strictly after that of $v$.
        By the braided property, if $u$ is quantified in a block $Q_i$ it follows that $v$ and $z$ are quantified in the blocks $Q_{i+1}$ and $Q_{i+2}$ respectively. Similarly to i), the braided property also implies that all arguments of $v$ are quantified in the quantifier blocks $Q_{i+1}$ and $Q_{i+2}$. Hence using the same argument as in i), we may assume that $z$ is the final quantifier in its block. Now by selecting $u$ to be the projection function for the universally quantified $z$, we obtain an analogous contradiction as in i).
  \end{enumerate}

  \item Suppose there are components $S_1, \ldots, S_n$ such that $S_i \depon S_{i+1}$ for $i \in [n-1]$ and $S_n \depon   S_1$.
        Let each $S_i$ contain a universal vertex $u_i$ and a vertex $v_i$ such that $u_i \depon v_{i+1}$ for $i \in [n-1]$, and $u_n \depon v_1$.
        We describe choices of the universal quantifiers such that the formula becomes false.
        For $1 \leq i < n$, we can pick $u_i$ such that it equals $v_{i+1}$; either as a projection function if $v_{i+1}$ occurs among its arguments, or as a restriction of $v_{i+1}$ to the set of common arguments of $u_i$ and $v_{i+1}$.
        In the second case, every argument of $v_{i+1}$ is also one of $u_i$ or is quantified before $u_i$.
        Now the components $S_1,\ldots,S_n$ all have to receive the same truth value, regardless of the existential choices. Finally, $u_n$ is picked as the \emph{negation} of $v_1$, which renders the formula false.\qedhere
 \end{enumerate}
\end{proof}

Next we proceed with the converse direction.
We assume that the four above conditions are true, and from this construct a satisfying interpretation.

\begin{lemma}\label{lem:conditions-1-4-true}
Assume $\phi\in \Sigma_\omega^{\mathsf{usk}}$ and braided.
 If \eqref{eq:1}--\eqref{eq:4} are satisfied, then $\phi$ is true.
\end{lemma}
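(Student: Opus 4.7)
The plan is to treat $\phi$ as a semantic game and construct a winning strategy for the existential player: for every choice of the universal function variables, I will show how to pick the existential function variables so that every clause is satisfied. Let $G$ be the implication graph of $\phi$ and let $S(\ell)$ denote the component of a literal $\ell$. By condition (2), the components partition into disjoint complementary pairs $\{S,\overline{S}\}$ with $\overline{S} = \{\neg v : v \in S\}$; by (4), $\depon$ has no cycle among components, so I fix a linear order $\prec$ on components that refines $\depon$. My goal is to assign a truth value $\tau$ to each component consistent with these pairs (so $\tau(S) = 1 - \tau(\overline{S})$) and such that $\tau(S(\ell_1)) \leq \tau(S(\ell_2))$ for every implication edge $\ell_1 \to \ell_2$, which is exactly the condition for all clauses to be satisfied.

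I would build the strategy block by block. At a universal block $Q_i\,\vec{f}_i$ the adversary commits to interpretations for the variables in $\vec{f}_i$, and by (1) this determines $\tau(S)$ for each universal component $S$ whose unique universal vertex has head in $\vec{f}_i$; these values propagate to all of $S$ and, by complementation, to $\overline{S}$. At an existential block I define each $f \in \vec{f}_i$ exploiting uniqueness: $f$ appears in $\phi$ only with one fixed argument tuple $\vec{t}_f$, so only the value of $f(\vec{t}_f)$ affects the matrix, and I set this value so that $\tau$ on the component of $f(\vec{t}_f)$ is forced to agree with the already-committed values of all components $S' \prec S(f(\vec{t}_f))$. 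Extensions of $f$ to inputs other than $\vec{t}_f$ are irrelevant and chosen arbitrarily.

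Verification proceeds by a case analysis on an arbitrary edge between distinct components $S_1, S_2$. The case where both are universal is impossible by (1). If $S_1$ is existential and $S_2$ is universal, then any universal $u \in S_2$ would witness $S_2 \depon S_1$ via any existential argument of $u$ in $S_1$, contradicting $\prec$. If both are existential, the commitments for $S_1$ and $S_2$ are made in the order dictated by $\prec$, so I can always choose $\tau(S_1) \leq \tau(S_2)$. The remaining case is $S_1$ universal and $S_2$ existential; here the existential function $f$ realizing the relevant vertex in $S_2$ must track the adversary's choice for the unique universal $u \in S_1$. Condition (3) ensures the vertex in $S_2$ $\depon$-depends on $u$, and together with the braided condition this means that the arguments $\vec{t}_f$ of $f$ include either $u$'s argument tuple $\vec{t}_u$ or enough of it that I can take $f$ to be a restriction or projection of $u$, which is essentially an instance of \Cref{prop:free-term-elision} applied to the quantifier prefix up to and including block $Q_i$.

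The main obstacle I anticipate is making this last case precise. Specifically, I must verify that for every existential $v = f(\vec{t}_f)$ sharing a component with a universal $u$, the braided bound on where $u$ and $f$ are quantified (at most two blocks apart) combined with (3) forces $\vec{t}_u$ to be expressible from $\vec{t}_f$ by deleting coordinates quantified strictly earlier than $f$, so that $f$ can indeed mimic $u$'s value on the only input that matters. Once this is carried out, the strategy is well-defined, $\tau$ extends consistently at every block, every edge of $G$ satisfies $\tau(\ell_1) \leq \tau(\ell_2)$, and therefore every clause of $\phi$ holds, proving that $\phi$ is true.
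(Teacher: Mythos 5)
Your high-level architecture matches the paper's proof---adapt Aspvall et al.\ by giving purely existential components constant values and letting existential vertices inside universal components mimic the unique universal vertex by a projection or restriction, using \eqref{eq:3}, uniqueness, and the braided bound---but two of the verification cases you dismiss as easy are where the real content lies, and both are handled incorrectly. First, the rule for valuing purely existential components is missing. Your order $\prec$ refines $\depon$, but $\depon$ relates components only through a \emph{universal} vertex, so it carries no information about implication-graph edges between two existential components; the assertion ``I can always choose $\tau(S_1)\leq\tau(S_2)$'' is unjustified, and a greedy block-by-block choice can paint itself into a corner (e.g.\ commit $\tau(S_2)=0$ while a unit clause elsewhere forces $\tau(S_1)=1$ along an edge $S_1\to S_2$ whose source is quantified later). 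What is needed is the reachability rule on the condensation of $G$, processed in reverse topological order with respect to $E$: mark $S$ \texttt{false} iff it reaches a \texttt{false} or \texttt{contingent} component, and propagate to $\overline{S}$ by skew-symmetry. Second, your claim that an edge from an existential component $S_1$ to a universal component $S_2$ is impossible is wrong (consider $\exists x\,\forall y\,(\neg x\lor y)$, which satisfies \eqref{eq:1}--\eqref{eq:4} and has the edge $x\to y$); such edges are exactly why the rule above must mark $S_1$ \texttt{false} when it reaches a \texttt{contingent} component, rather than being excluded by $\prec$.

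A further inaccuracy: a universal block does not ``determine $\tau(S)$'' for the universal components headed there, because a universal vertex $f(\vec{x})$ is evaluated only once its arguments are fixed, possibly two blocks later and possibly by existential variables chosen in response. This circularity is precisely what \eqref{eq:4} excludes, and it is why the paper evaluates universal components in a separate second pass, ordered by a topological sort of $\depon$, only after the entire universal interpretation has been fixed. Ironically, the step you single out as the main obstacle---that \eqref{eq:3} together with braidedness lets an existential $v$ sharing a component with $u$ copy $u$'s value, either as a projection or as a restriction of $u$ with its earlier-quantified arguments frozen, after reordering variables inside a quantifier block---is the part of your sketch that tracks the paper most faithfully. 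As it stands, though, the argument for the purely existential components does not go through, so the proof has a genuine gap.
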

\begin{proof}
 For this direction, we can roughly follow Aspvall et al. \cite{APT79}, but have to take into account that the vertices can also be proper functions.

Let $G=(V,E)$ be the implication graph of $\phi$.
 The idea is to label the graph with truth values.
 Each component $S$ in the graph is either unmarked, or marked with \texttt{true}, \texttt{false}, or \texttt{contingent}.
 Marking a component \texttt{true} or \texttt{false} means that it can in fact receive the corresponding truth value as a constant function, and \texttt{contingent} means that its truth depends on other vertices.
 Universal components are always contingent.

 For every component $S$, the set $\neg S \dfn \{ \neg v \mid v \in S\}$ is again a component.
 Due to \eqref{eq:2}, $S$ and $\neg S$ are always distinct.
 Moreover, the implication graph is \emph{skew-symmetric} in the sense that there is an automorphism (modulo flipping all edges) mapping  any literal to its negation.
 The reason is that the implication $\ell \to \ell'$ is clearly equivalent to $\neg \ell' \to \neg \ell$.


 We are now in the position to construct an assignment.
 This assignment will be consistent in the sense that $S$ is marked \texttt{true} iff $\neg S$ is marked \texttt{false}, and such that it satisfies all clauses due to the property that no path leads from a \texttt{true} component marked to a \texttt{false} one.
 First, we mark all universal components as \texttt{contingent}.
 We then consider the existential components in a reverse topological ordering with respect to $E$ (there exists one, for the strongly connected components always induce an acyclic graph).
 The algorithm marks each component $S$ in this order as follows.
 \begin{enumerate}[i)]
  \item If $S$ is already marked, proceed with the next component. \label{eq:i}
  \item
        Otherwise $S$ is existential and unmarked, but everything reachable by $S$ is already marked. \label{eq:ii}
        If $S$ reaches any \texttt{contingent} or \texttt{false} component, mark it \texttt{false}; otherwise mark it \texttt{true}.
  \item Mark $\neg S$ the opposite of $S$. \label{eq:iii}
 \end{enumerate}
 Now, whenever a component $S$ is \texttt{false}, then either (in \ref{eq:ii}) it reaches some component marked \texttt{contingent} or \texttt{false}, or (in \ref{eq:iii}), by skew-symmetry, all components reaching it are \texttt{false}.
 Likewise, if $S$ is \texttt{true}, then either (in \ref{eq:ii}) it reaches only components marked \texttt{true}, or (in \ref{eq:iii}), by skew-symmetry, it can be reached by a \texttt{contingent} or \texttt{true} component.
 Also, by condition \eqref{eq:1}, there is no path from one \texttt{contingent} component to another.
 It can be shown by induction on the steps of the algorithm, that there is no path from a \texttt{true} to a \texttt{contingent} or \texttt{false} component, and also none from a \texttt{contingent} to a \texttt{false} component.


 All components marked \texttt{true} or \texttt{false} consist of existential vertices, so these can be assigned the corresponding truth assignment.
 Let us stress that here it suffices to assign constant functions regardless of the actual dependencies of the variables.


 Next, fix some interpretation of the universally quantified variables.
 We continue the algorithm and refine the labeling of the universal components. 
 By \eqref{eq:4}, it holds that there is no $\depon$-cycle between the components.
 This implies that there is again a reverse topological ordering $S_1,S_2,\ldots$ of all components, but now in the sense that $S_j \depon S_i$ implies $i < j$.
 %
 We process all components in this order as follows.
 \begin{enumerate}[i')]
  \item If $S$ is not universal, proceed with the next component.
  \item Otherwise, let $u$ be the universal vertex in $S$ (which is unique by \eqref{eq:1}).
  \item All dependencies of $u$ are already marked \texttt{true} or \texttt{false}; in particular, all arguments of $u$ have a marked truth value.
        Change $S$ to \texttt{true} if $u$ evaluates to 1 under the corresponding assignment, and otherwise to \texttt{false}.
  \item Mark $\neg S$ the opposite of $S$. 
 \end{enumerate}

 It remains to establish that the interpretations of the existential variables in universal components can be always selected to mimic the truth value of the universal variable of its component.
 Recall that any existential vertex $v$ in the component $S$ must depend on $u$ due to \eqref{eq:3}.
 This means that either (a) $v$ is a function with $u$ as an argument, or (b) $v$ is quantified after $u$ and has as arguments all arguments of $u$ that are quantified in quantifier blocks after $v$.
 If (a) is the case, the we interpret $v$ as the projection function for $u$.
 If (b) is the case, then there may be some arguments of $u$ which are not arguments of $v$, but somewhere in the same quantifier block as $v$.
 But note that we may stipulate any fixed order of quantification inside a given quantifier block. Here, we assume that, inside a block, variables are quantified such that, for $i<j$, functions in $S_i$ are quantified before functions in $S_j$.
 Then any variable that is quantified in the same block as $v$ and is an argument of $u$ but not of $v$ is quantified before $v$, and hence has a fixed truth value when we give $v$ its interpretation.
Let $A$ be the set of common arguments of $v$ and $u$, and let $\vec{x}$ and $\vec{b}$ be the sequence of the arguments of $u$ that are not in $A$ and the truth values fixed for those vertices before $v$ is interpreted, respectively. Now interpret $v$ as the restriction of $u$ to $A$ with the determined arguments fixed $\vec{x} \mapsto \vec{b}$.
In either case, we assigned $v$ such that it equals $u$.

 Since the above cannot introduce any new paths from a \texttt{true} component to a \texttt{false} component, all clauses of $\phi$ are satisfied.
\end{proof}

\begin{theorem}
 The truth problem of braided $\Sigma_\omega^{\mathsf{usk}}$-formulae is in $\NL$.
\end{theorem}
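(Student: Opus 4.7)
The plan is to verify each of the four conditions (1)--(4) characterising truth of braided $\Sigma_\omega^{\mathsf{usk}}$-formulae, as established by \Cref{lem:conditions-1-4-false,lem:conditions-1-4-true}. Since $\NL=\coNL$ by the Immerman--Szelepcs\'enyi theorem, it suffices to place the \emph{violation} of each condition into $\NL$; conjoining the four resulting coNL tests yields a single $\coNL = \NL$ procedure for truth.

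First I would represent the implication graph $G=(V,E)$ implicitly. Its vertex set is the collection of literals appearing in $\phi$, of cardinality linear in $|\phi|$, and an edge $(\ell,\ell')$ exists iff some clause of the matrix is equivalent to $\ell\to\ell'$ or is a unit clause $\ell'$ with $\ell = \neg \ell'$; both checks are performed in logspace by scanning the matrix. Reachability in $G$ is then the standard $\NL$-complete problem, and ``$v$ and $v'$ lie in the same SCC'' reduces to verifying both $v \to^* v'$ and $v' \to^* v$. The dependency relation $\depon$ between individual literals is a purely syntactic condition on the quantifier prefix and on which propositions appear as arguments, so it is decidable in deterministic logspace directly from the input.

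With these primitives, the violations of (1)--(3) are straightforward guess-and-verify procedures: for (1), guess two distinct universal literals $u,u'$ and verify $u\to^* u'$; for (2), guess a literal $v$ and verify $v \to^* \neg v$ and $\neg v \to^* v$; for (3), guess an existential literal $v$ and a universal literal $u$, verify they are in the same SCC via two reachability checks, and verify syntactically that $v\not\depon u$. Each runs in $\NL$.

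The main obstacle is condition (4), because it asks about cycles in the quotient graph whose nodes are SCCs, and we cannot afford to materialise SCCs in logspace. The key idea is to represent each SCC implicitly by any of its literals: SCC-equality is a conjunction of two $\NL$-reachability queries in $G$. To detect a violation of (4), the algorithm guesses a starting literal $\ell_0$ and maintains a current literal $\ell_i$ together with a step counter bounded by $|V|$. At each step it guesses a literal $u$ together with a literal $w$, verifies that $u$ lies in the same SCC as $\ell_i$ (two reachability queries), that $u$ is universal, and that $u\depon w$ (logspace); it then sets $\ell_{i+1}:=w$ and continues. After at least one step it nondeterministically halts and verifies that $\ell_i$ and $\ell_0$ lie in the same SCC. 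The composition of $\NL$-reachability tests with a logarithmic counter keeps the entire procedure in $\NL$, and by \Cref{lem:conditions-1-4-false,lem:conditions-1-4-true} the conjunction of the four negated tests decides truth of $\phi$.
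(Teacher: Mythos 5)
Your proposal is correct and follows essentially the same route as the paper, which likewise reduces the theorem to checking conditions (1)--(4) via Lemmas \ref{lem:conditions-1-4-false} and \ref{lem:conditions-1-4-true} and observes that these are reachability tests solvable in $\NL$. You merely spell out the details the paper leaves implicit (the appeal to $\NL=\coNL$, the implicit representation of SCCs by representative literals, and the guess-and-walk detection of a $\depon$-cycle for condition (4)), all of which are sound.
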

\begin{proof}
 By the above two lemmas, it suffices to check conditions \eqref{eq:1}--\eqref{eq:4}.
 But these are simple reachability tests, which are easily solved in non-deterministic logspace.
\end{proof}

Next we apply the result to the lowest levels of the second-order quantifier hierarchy, namely $\Pi^{\mathsf{s}}_2$-formulas and lower.
Here, formulae are of the form
\begin{align*}
  \forall f_1 \cdots \forall f_n \exists g_1 \cdots \exists g_m \forall x_1 \cdots \forall x_k \,\theta\text{.}
\end{align*}
so the only terms violating this property could be of the form $v_1(\ldots,v_2,\ldots)$, where $v_2$ is quantified before $v_1$.
But then the argument $v_2$ can be elided from $v_1$ by \Cref{prop:free-term-elision}.
Only for fragments $\Sigma^{\mathsf{s}}_2$ or higher we can have formulae like $\exists f \forall g \exists x\, f(x)$ which are genuinely not braided, and which cannot be transformed by term elision.
Finally, if the propositional quantifier block is existential (in the $\Pi^{\mathsf{s}}_1$ fragment), we can omit the simpleness constraint due to \Cref{cor:simpleness-hierarchy}.
This yields the following collection of results, since $\NL$-hardness holds already for
the satisfiability of classical propositional core formulas (see, \eg, \cite[Thm 16.3]{Papadimitriou94}).

\begin{corollary}\label{cor:nl-upper-bounds}
 Truth of formulae in $\Sigma_1^{\mathsf{usk}}$, $\Pi_1^{\mathsf{uk}}$, $\Pi_1^{\mathsf{usk}}$ or $\Pi_2^{\mathsf{usk}}$, respectively, is $\NL$-complete.
 Also, the lower bound still holds for the respective restrictions to core formulas.
\end{corollary}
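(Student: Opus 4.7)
The plan is to combine the $\NL$-membership of braided $\Sigma_\omega^{\mathsf{usk}}$-formulae established in the preceding theorem with the observation that the three simple low-level fragments are automatically braided, and to derive matching $\NL$-hardness from classical core-SAT.

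For the upper bound, consider first the three simple fragments $\Sigma_1^{\mathsf{usk}}$, $\Pi_1^{\mathsf{usk}}$, and $\Pi_2^{\mathsf{usk}}$. Formulae in these classes have at most three quantifier blocks $Q_1,\ldots,Q_m$ with $m \leq 3$, and by the definition of $\Sigma_k$ and $\Pi_k$ the propositional variables are confined to the innermost block $Q_m$. Simpleness further forces every function argument to be propositional, so every such argument lies in $Q_m$. For each quantified function in some block $Q_i$ with $i < m$, we have $m - i \leq 1$ when $Q_i$ is existential (so that $Q_m = Q_{i+1}$) and $m - i \leq 2$ when $Q_i$ is universal (so that $Q_m \in \{Q_{i+1},Q_{i+2}\}$); in either case the argument falls inside the window permitted by the braided definition. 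Hence the formula is braided, and the preceding theorem yields $\NL$-membership. Any residual edge case in which a propositional argument happens to sit in the same block as its parent function can be removed via \Cref{prop:free-term-elision}, producing an equivalent braided formula in logspace. For the non-simple fragment $\Pi_1^{\mathsf{uk}}$, apply \Cref{cor:simpleness-hierarchy} with $k=1$ (odd) to obtain in logspace an equivalent formula in $\Pi_1^{\mathsf{usk}}$, since uniqueness and Kromness are preserved by that translation; the previous paragraph then gives $\NL$-membership.

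For the lower bound, classical propositional core satisfiability is $\NL$-hard by \cite[Thm 16.3]{Papadimitriou94}. Given a core CNF $\theta(\vec p)$ over propositions $\vec p$, regard $\vec p$ as a tuple of arity-$0$ function variables and write the sentence $\exists \vec p\, \theta$, which sits in $\Sigma_1^{\mathsf{usc}}$ with the trailing universal propositional block empty; prefixing with empty outer quantifier blocks embeds the same sentence into $\Pi_1^{\mathsf{usc}}$, $\Pi_1^{\mathsf{uc}}$, and $\Pi_2^{\mathsf{usc}}$. Since the embedding preserves the core form, $\NL$-hardness holds for the core restriction of each of the four fragments, and a fortiori for the corresponding Krom restriction.

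The only genuinely non-routine step is the block-counting observation that simple formulae with at most three quantifier blocks are automatically braided, so that the heavy lifting done in the preceding lemmas for arbitrary braided formulae can be invoked off the shelf. I expect this to be the main (but mild) obstacle: one needs to check the braided inequalities case by case for $\Sigma_1$, $\Pi_1$, and $\Pi_2$, and verify that the syntactic conventions on empty blocks and vacuous prefixes are legal in each fragment. Once this bookkeeping is settled, both directions are immediate.
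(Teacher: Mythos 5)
Your proposal follows the paper's proof essentially verbatim: check that the low-level fragments are braided, invoke the $\NL$ result for braided $\Sigma_\omega^{\mathsf{usk}}$, reduce $\Pi_1^{\mathsf{uk}}$ to $\Pi_1^{\mathsf{usk}}$ via \Cref{cor:simpleness-hierarchy}, and import hardness from classical propositional core satisfiability. The one slip is the claim that the definition of $\Sigma_k/\Pi_k$ confines all propositional variables to the innermost block: the blocks $\vec{f_i}$ may contain arity-$0$ variables, so an argument can be quantified in an \emph{earlier} block than its head (\eg $\forall p\,\exists g\,\forall x\; g(p,x)$ in $\Pi_2$), and it is this case---rather than the same-block case you mention---that genuinely violates braidedness and must be repaired by \Cref{prop:free-term-elision}, which is exactly how the paper handles it.
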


\section{Further Upper Bounds}

In the previous section, we showed that the first level of the $\SO^{\mathsf{us}}$ hierarchy becomes tractable when restricted to Krom formulae. The same holds when restricted to Horn formulae \cite{BubeckB06}.
Next, we consider the question whether these results can be generalized to higher levels of the $\SO$ hierarchy.
Indeed, we find several cases where the complexity collapses to a lower class.
%
It is worthy to note that such a collapse occurs only if the final propositional quantifier block of a formula is universal, which also is the case, \eg, for the DQBF fragment (cf.\ \Cref{thm:p-horn}).
If the final quantifier block is existential, we show later in the next section that no such collapse occurs.


\begin{theorem}\label{thm:horn-krom-drop-hierarchy}
 Let $k > 0$ be even.
 Then the truth problem of $\Pi^{\mathsf{s}\mathsf{k}}_k \cup \Pi^{\mathsf{s}\mathsf{h}}_k$ is in $\PiE{k-1}$ and the truth problem of $\Sigma^{\mathsf{s}\mathsf{k}}_{k+1} \cup \Sigma^{\mathsf{s}\mathsf{h}}_{k+1}$ is in $\SigmaE{k}$.
\end{theorem}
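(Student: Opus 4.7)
The plan is to build an alternating exponential-time machine that uses one fewer quantifier alternation than the syntactic formula, by absorbing the innermost existential function block together with the final universal propositional block into a deterministic exponential-time subroutine. That subroutine decides, once the outer function variables are fixed, a classical propositional Horn or Krom satisfiability instance obtained by unrolling $\forall \vec x$ into a conjunction.

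Concretely, let $\phi = \exists \vec f_1 \forall \vec f_2 \cdots \forall \vec f_k \exists \vec f_{k+1} \forall \vec x\; \theta$ be in $\Sigma^{\mathsf{sk}}_{k+1}$ or $\Sigma^{\mathsf{sh}}_{k+1}$. I would construct a $\SigmaE{k}$-machine that alternately writes down interpretations of $\vec f_1,\dots,\vec f_k$ in $k$ alternating blocks (ending universally, since $k$ is even), each interpretation stored on the tape as a table of $2^{\arity{f}}$ bits. The deterministic phase then has to decide whether $\exists \vec f_{k+1}\,\forall \vec x\; \theta^{\ast}$ holds, where $\theta^{\ast}$ denotes $\theta$ with the fixed outer interpretations plugged in. Because $\phi$ is simple, every term in $\theta$ has propositional arguments, so once an assignment $\alpha\colon \vec x \to \{0,1\}$ is chosen, every outer-function term $f_i(x_{j_1},\dots,x_{j_m})$ evaluates to a concrete bit by a single table lookup. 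Expanding $\forall \vec x\, \theta$ into $\bigwedge_{\alpha}\theta[\alpha]$ and substituting the outer bits yields a classical propositional formula $\Psi$ whose variables are the cells $g_{\vec b}$, one for each $g \in \vec f_{k+1}$ and each $\vec b \in \{0,1\}^{\arity{g}}$; the truth of $\exists \vec f_{k+1}\,\forall \vec x\, \theta^{\ast}$ coincides with the satisfiability of $\Psi$. Substituting a bit into a Horn clause either satisfies the clause or removes one literal while preserving the Horn form, and the analogous invariance holds for Krom. Hence $\Psi$ is a classical propositional Horn or Krom instance of size at most $2^{|\vec x|}\cdot\size{\theta}$, and its satisfiability is decidable in time polynomial in $|\Psi|$ by the classical algorithms (cf.~\cite{BubeckB06,APT79}). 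The deterministic phase therefore runs in $\EXP$, placing $\phi$ in $\SigmaE{k}$.

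The argument for $\Pi^{\mathsf{sk}}_{k}\cup\Pi^{\mathsf{sh}}_{k}$ is entirely symmetric: a $\PiE{k-1}$-machine would use $k-1$ alternating blocks starting universally to write down $\vec f_1,\dots,\vec f_{k-1}$, and the deterministic phase would decide $\exists \vec f_k\, \forall \vec x\, \theta^{\ast}$ by the same expansion technique.

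The main obstacle I anticipate is verifying carefully that the deterministic phase indeed stays within single-exponential time: the expansion must produce a Horn or Krom formula of size bounded by $2^{|\vec x|}\cdot\size{\theta}$, and classical Horn or Krom satisfiability must be solved in time polynomial in this exponential size, not in the polynomial size of $\phi$. Non-uniqueness of $\vec f_{k+1}$ causes no trouble, since different argument tuples naturally produce different propositional variables $g_{\vec b}$ while identical tuples produce the same variable, which is precisely the semantics the expansion demands.
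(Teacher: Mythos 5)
Your proposal is correct and follows essentially the same route as the paper's proof: use $k$ (resp.\ $k-1$) alternating guessing blocks for the outer function quantifiers, then absorb the final $\exists$-function block and the universal propositional block into a deterministic exponential-time phase via universal expansion of $\forall\vec{x}$, which (thanks to simpleness) yields a classical propositional Horn or Krom instance of exponential size solvable in polynomial time in that size. The only cosmetic difference is that the paper separately eliminates arity-$0$ variables in the last existential block by brute force, whereas your encoding handles them uniformly as variables $g_{()}$.
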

\begin{proof}
 The following algorithm decides whether a given formula $\phi$ is true, if $\phi$ is simple and additionally Krom or Horn.
 Suppose $\phi \in \Pi_k$ (resp.\ $\phi \in \Sigma_{k+1}$).

 First we non-deterministically guess in exponential time a truth table for each quantified function, except for the final block of existentially quantified functions, performing $k-2$ (resp.\ $k-1$) alternations in this process.
 All so evaluated quantifiers are deleted, and in either case we arrive at a formula $\phi'$ of the form $\exists f_1 \cdots \exists f_n \forall x_1 \cdots \forall x_m \+ \theta$ for quantifier-free $\theta$, and some interpretation $I$ for the free variables in $\phi'$.
 It remains to give a procedure that decides whether $I \vDash \phi'$.
 If this part of the algorithm runs in deterministic exponential time \wrt $\size{\phi}$, then this proves an overall $\PiE{k-1}$ or $\SigmaE{k}$ bound, respectively.

 To do so, we first perform some simplifications.
 \Wloss $f_{o+1},\ldots,f_{n}$ are propositions and $f_{1},\ldots,f_o$ are proper functions, for some $o \in [n]$.
 We deterministically loop over all possible values for $f_{o+1},\ldots,f_{n}$, substitute these in the formula, and remove the quantifiers.
 This leads only to an exponential factor in the runtime and ensures that all existentially quantified variables are proper functions.
 By this, we arrive at a Horn or Krom formula
 \begin{align*}
   \phi'' = \exists f_1 \cdots  \exists f_o \forall x_1 \cdots \forall x_m \theta'
\end{align*}
 for quantifier-free $\theta'$.
 Note that $\phi''$ may still contain free proper functions.
 But due to the simpleness condition, and since the $f_i$ are functions as well, no existential variable is nested inside another function. This is crucial for the next step.

 We use the \emph{universal expansion} technique, which has been applied to DQBF as well~\cite{BubeckB06}.
 The idea is to translate the universal quantifiers into an equivalent large conjunction.
 Let $r_i \dfn \arity{f_i}$.
 We replace each existential variable $f_i$ by exponentially many propositions $y_{i,\vv{a}}$, one for each possible input tuple $\vv{a} \in \{0,1\}^{r_i}$.
 For all possible assignments $\vv{b} \in \{0,1\}^m$ to the $x_i$, we create a copy $\theta'[\vv{b}]$ of the matrix $\theta'$ defined as follows.
 If $\vv{b} = (b_1,\ldots,b_m)$, then each $x_i$ is replaced by $b_i$.
 Next, all terms $t$ in $\theta'$ not containing any $f_i$ are replaced by their valuation $\llbracket{}t\rrbracket_I \in \{0,1\}$.
 Now all terms are either constant, or have the head $f_i$ and only constant arguments.
 Finally, the latter terms $f_i(b_1,\ldots,b_{r_i})$ are replaced by the proposition $y_{i,(b_1,\ldots,b_{r_i})}$.
 The resulting formula is the following:
  \begin{align*}
   \psi \dfn \; \bigexists_{\mathclap{\substack{i \in [o]\\\vv{a}\in \{0,1\}^{r_i}}}} \; y_{i,\vv{a}} \bigwedge_{\vv{b} \in \{0,1\}^{m}} \theta'[\vv{b}]
 \end{align*}
 This formula contains no free variables and is true if and only if $I \vDash \phi''$.
 In other words, it is a simple propositional formula with existential proposition quantifiers, and its matrix $\bigwedge_{\vv{b} \in \{0,1\}^{m}} \theta'[\vv{b}]$ is Krom or Horn.
 Hence the truth of $\psi$ can be computed in deterministic polynomial time \wrt $\size{\psi}$, and consequently in deterministic exponential time \wrt $\size{\phi}$.
\end{proof}

Omitting the non-deterministic part from the above algorithm yields an $\EXP$ upper bound:

\begin{corollary}\label{cor:exp}
  Truth of $\Sigma_1^{\mathsf{s}\mathsf{h}}$ is in $\EXP$.
\end{corollary}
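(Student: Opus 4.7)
The plan is to specialize the deterministic portion of the algorithm from the proof of \Cref{thm:horn-krom-drop-hierarchy} to the case where no outer quantifier blocks exist. A $\Sigma_1^{\mathsf{sh}}$-formula is already in the shape $\phi = \exists f_1 \cdots \exists f_n \forall x_1 \cdots \forall x_m\, \theta$ that was the target of the guessing phase, so there is simply nothing to guess: I would drop that phase entirely and proceed directly to universal expansion. The resulting procedure is purely deterministic, and the key estimate is that universal expansion produces a propositional Horn instance of exponential size, on which the standard linear-time Horn-SAT algorithm then gives us an $\EXP$ bound overall.

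Concretely, write $r_i \dfn \arity{f_i}$. For each $i \in [n]$ I would introduce a fresh proposition $y_{i,\vv{a}}$ for every $\vv{a} \in \{0,1\}^{r_i}$, which is meant to encode the entry of the truth table of $f_i$ at input $\vv{a}$. For each assignment $\vv{b} = (b_1,\ldots,b_m) \in \{0,1\}^m$ to the universal propositions, I would form a copy $\theta[\vv{b}]$ of $\theta$ in which every $x_j$ is replaced by $b_j$, and every second-order term, which by simpleness has the form $f_i(x_{j_1},\ldots,x_{j_{r_i}})$, is replaced by $y_{i,(b_{j_1},\ldots,b_{j_{r_i}})}$. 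The resulting purely propositional formula
\begin{align*}
 \psi \dfn \; \bigexists_{\mathclap{\substack{i \in [n]\\\vv{a}\in \{0,1\}^{r_i}}}} \; y_{i,\vv{a}} \; \bigwedge_{\vv{b} \in \{0,1\}^{m}} \theta[\vv{b}]
\end{align*}
is equivalent to $\phi$, has size exponential in $\size{\phi}$, and is still Horn, because the substitution is literal-for-literal: a positive $f_i$-literal becomes a positive $y$-literal, a negative one a negative $y$-literal, and the substituted $x_j$-literals are either $0$ or $1$, which either trivialise or can be dropped from their clause.

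To finish, I would invoke the standard polynomial-time algorithm for propositional Horn satisfiability on $\psi$; since $\size{\psi}$ is exponential in $\size{\phi}$, the overall procedure runs in deterministic exponential time in $\size{\phi}$. I do not foresee any genuine obstacle. The only two points worth explicitly verifying are that simpleness guarantees all arguments of each $f_i$ are propositions (so the substitution $f_i(x_{j_1},\ldots,x_{j_{r_i}}) \mapsto y_{i,(b_{j_1},\ldots,b_{j_{r_i}})}$ is well-defined and yields a purely propositional matrix) and that Hornness is preserved by the expansion, as noted above. Both are immediate from the assumption $\phi \in \Sigma_1^{\mathsf{sh}}$.
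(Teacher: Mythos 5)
Your approach is exactly the paper's: \Cref{cor:exp} is obtained there by taking the algorithm from the proof of \Cref{thm:horn-krom-drop-hierarchy} and omitting the non-deterministic guessing phase, leaving precisely the universal-expansion-plus-Horn-SAT procedure you describe. So the route is right, but there is one overlooked detail that makes your substitution ill-defined as stated. You assert that by simpleness every proper-function term has the form $f_i(x_{j_1},\ldots,x_{j_{r_i}})$ with the $x_{j}$ drawn from the universally quantified propositions. Simpleness only guarantees that the arguments are \emph{propositions}; in a $\Sigma_1^{\mathsf{sh}}$-formula $\exists \vv{f}\, \forall \vv{x}\, \theta$ the block $\vv{f}$ may contain arity-$0$ variables, and such an existentially quantified proposition $p$ may occur as an argument of a proper function, as in $\exists p\, \exists f\, \forall x\, (f(p) \lor \neg x)$. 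For a term $f_i(\ldots,p,\ldots)$ the index tuple of $y_{i,\vv{a}}$ is then not determined by the assignment $\vv{b}$ to $\vv{x}$ alone, and trying to couple $y_{i,(\ldots,0,\ldots)}$, $y_{i,(\ldots,1,\ldots)}$ and $y_{p,()}$ with extra clauses would in general destroy the Horn form.

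The paper's proof of \Cref{thm:horn-krom-drop-hierarchy} deals with this before expanding: it deterministically loops over all assignments to the propositional existentials $f_{o+1},\ldots,f_n$, substitutes these constants into the formula, and only then performs universal expansion, so that every remaining function argument is a universally quantified proposition or a constant. This costs only another exponential factor and keeps the overall bound at $\EXP$. With that preprocessing step added, your argument goes through verbatim.
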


In fact, we can combine this approach with the $\NL$ algorithm from \Cref{sec:nl} as well.
It is not required to fully expand the formula to exponential size and then run the $\NL$ algorithm, but instead it is possible to perform the reachability tests from the algorithm in $\PSPACE$, using an on-the-fly construction of every clause of the expanded formula as necessary.

\begin{corollary}\label{cor:pspace}
  Truth of $\Sigma_1^{\mathsf{s}\mathsf{k}}$ is in $\PSPACE$.
\end{corollary}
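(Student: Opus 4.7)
The plan is to adapt the universal-expansion technique from the proof of \Cref{thm:horn-krom-drop-hierarchy} and combine it with the classical $\NL$ algorithm of \cite{APT79} for propositional Krom satisfiability, exploiting the fact that an $\NL$ computation on an exponentially large but locally generated input can be simulated in polynomial space. Given $\phi = \exists f_1 \cdots \exists f_n\, \forall x_1 \cdots \forall x_m\, \theta \in \Sigma_1^{\mathsf{sk}}$, the universal expansion produces a propositional Krom formula $\psi$ of size exponential in $\size{\phi}$: its matrix is $\bigwedge_{\vv{b} \in \{0,1\}^m} \theta'[\vv{b}]$, quantified existentially over fresh propositions $y_{i,\vv{a}}$, one for every proper $f_i$ and every $\vv{a} \in \{0,1\}^{\arity{f_i}}$, and $\phi$ is true iff $\psi$ is satisfiable. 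Crucially, although $\psi$ itself is too large to write down, its ingredients are locally computable: each literal of $\psi$ has a polynomial-length code $(i,\vv{a},\pm)$, and for any two such literals $\ell_1,\ell_2$ one can decide in polynomial space (in fact polynomial time) whether some clause of $\psi$ has the form $\neg \ell_1 \lor \ell_2$, by ranging over the clauses $C$ of $\theta$ and the assignments $\vv{b} \in \{0,1\}^m$ and verifying the substitution.

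Next I would run the $\NL$ algorithm of \cite{APT79} on $\psi$ implicitly. Satisfiability of a propositional Krom formula reduces to checking, for every variable $v$, that $v$ and $\neg v$ are not mutually reachable in the implication graph; this is a collection of directed reachability queries over a graph polynomial in $\size{\psi}$. Since the vertices of the implication graph of $\psi$ are polynomially encoded and adjacency is decidable in polynomial space with respect to $\size{\phi}$ as described above, each such query becomes a non-deterministic polynomial-space reachability problem and hence lies in $\PSPACE$ by Savitch's theorem. Iterating over the (exponentially many, but polynomially encoded) literals $v$ and performing two reachability queries for each still fits in polynomial space overall, yielding the claimed bound.

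The main obstacle, which is bookkeeping rather than conceptual, is to make precise that the $\NL$ procedure of \cite{APT79} can be lifted to act on the implicit input $\psi$: every access the procedure makes to its input, namely to the adjacency relation of the implication graph of $\psi$, must be answered by a subroutine polynomial in $\size{\phi}$, and the standard $\NL \subseteq \PSPACE$ simulation must transfer to this oracle-like setting without any intermediate step materialising $\psi$. Given the purely syntactic nature of the universal expansion and the polynomial encoding of literals, this lifting is routine, but needs to be laid out carefully to verify that no hidden computation blows up space beyond polynomial in $\size{\phi}$.
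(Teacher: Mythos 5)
Your proposal is correct and follows essentially the same route as the paper: universally expand $\phi$ into an exponential-size existentially quantified propositional Krom formula, and then run the reachability-based Krom satisfiability test of \cite{APT79} on the implicitly represented implication graph, where vertices are polynomially encoded and adjacency is locally decidable, so that nondeterministic reachability plus Savitch gives $\PSPACE$. The paper states this only in a sentence ("on-the-fly construction of every clause of the expanded formula"), and your write-up correctly fills in the details, including the observation that after expansion only the classical mutual-reachability condition needs to be checked.
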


Observe why the technique relies on the final quantifier block being universal: otherwise the resulting formula $\bigvee_{\vv{b} \in \{0,1\}^m}\theta'[\vv{b}]$ would not be in CNF, and hence neither Horn nor Krom.

\section{Lower bounds}
\label{sec:lower-bounds}

In the previous sections, we showed that the complexity of a fragment sometimes decreases when restricted to Horn or Krom matrix, when compared to the general fragment with the same quantifier prefix.
However, in many cases the complexity stays the same.
Often the logics are powerful enough to simulate specific Boolean connectives, such as disjunction and negation, in terms of quantified Boolean functions.
In these cases, the whole Boolean part of the formula can essentially be reduced to unit clauses, which of course renders the Horn and Krom restriction meaningless.



\subsection{Cases with an existential function quantifier}

The first result of this section is also the most general; it concerns all non-simple formulae for quantifier prefixes that include $\Sigma_1$---that is, everything but $\Pi_1$.
(Recall that simple and non-simple $\Pi_1$ are equivalent.)
By the introduction of additional existential functions that simulate disjunction and negation, we bring an arbitrary CNF into core form.
This is stated in the following lemma, of which the proof can be found in \Cref{A:reduction-to-ucore}.

\begin{restatable}{lemma}{reductiontoucore} \label{lem:reduction-to-ucore}
 Every quantifier-free formula $\theta$ in conjunctive normal form is equivalent to a logspace-computable $\Sigma_1^\mathsf{c}$-formula $\phi$.
 Moreover, if $\theta$ is unique, then so is $\phi$.
\end{restatable}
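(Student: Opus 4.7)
The plan is to simulate each disjunctive clause of $\theta$ using a fresh existentially quantified binary function, forced by a few core axioms to behave as Boolean disjunction, and then to express each original clause as a single nested term built from this function.

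The key gadget will be a fresh binary function variable $f$, quantified existentially at the outermost position, together with a universal propositional block $\forall x_1 \forall x_2$ and the three clauses
\[
  \neg f(0,0), \qquad \neg x_1 \lor f(x_1, x_2), \qquad \neg x_2 \lor f(x_1, x_2).
\]
The first is a unit-negative clause and the other two are Horn-Krom clauses, so all three are core. Together they pin down every admissible interpretation of $f$ to Boolean disjunction on $\{0,1\}^2$: the Horn clauses give $f(a,b) = 1$ as soon as $a = 1$ or $b = 1$, while the unit clause fixes $f(0,0) = 0$. I then rewrite each clause $C = \ell_1 \lor \cdots \lor \ell_n$ of $\theta$ as the single unit literal $f(\ell_1, f(\ell_2, \ldots, f(\ell_{n-1}, \ell_n) \cdots ))$. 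Each such rewrite yields a core unit clause, so the resulting formula is in $\Sigma_1^{\mathsf{c}}$, the whole translation is a straightforward logspace-computable syntactic rewrite, and equivalence with $\theta$ is immediate from the forced OR-semantics of $f$.

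The uniqueness claim requires additional care. In the construction sketched above, the symbol $f$ is used with argument tuples $(0,0)$, $(x_1, x_2)$, and all nested literal tuples arising from the clauses, so $f$ is far from unique. To obtain a unique $\phi$ whenever $\theta$ is unique, I would refine the construction by introducing, for each nested occurrence site, a dedicated fresh function variable whose sole argument tuple is that site, and axiomatise the OR behaviour of each such dedicated variable through a carefully linked cascade of core clauses so that no new variable ever appears at more than one argument tuple. The interplay between nested function terms and the universal propositional block $\forall \vec x$ gives enough expressive power to force the OR-semantics at each single occurrence site, essentially via an implicit case analysis on the choice of the auxiliary function interpretations.

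The main obstacle will be exactly this single-tuple axiomatisation: the most natural way to force a function to compute OR constrains it at several inputs, which directly conflicts with uniqueness. Reconciling the uniqueness discipline with the need to fix the intended disjunctive semantics at each single site of use is the central technical step, and is where I expect the bulk of the work of the full proof to lie.
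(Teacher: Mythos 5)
Your first construction (a single binary $f$ axiomatised as disjunction by $\neg f(0,0)$, $\neg x_1 \lor f(x_1,x_2)$, $\neg x_2 \lor f(x_1,x_2)$, followed by collapsing each clause into one nested unit literal) is a workable route to the \emph{non-unique} half of the statement, up to one syntactic wrinkle: negative literals $\neg t$ are not terms under the paper's term grammar, so they cannot literally appear as arguments of $f$ in $f(\ell_1, f(\ell_2,\ldots))$; you would need proxy propositions for negated occurrences or a NAND-style encoding. That is a repairable detail.

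The genuine gap is the "moreover" clause. You correctly diagnose that the conflict is between uniqueness (each function may occur with only one argument tuple) and the need to constrain a function at several inputs to pin down OR-semantics, but you do not supply the idea that resolves this conflict --- you explicitly defer it as the place "where the bulk of the work lies". The paper closes exactly this gap with a toggle trick: for each clause $C_i = \ell^i_1 \lor \cdots \lor \ell^i_r$ it introduces a fresh $r$-ary function $h_i$ applied to a \emph{single} tuple of fresh proxy propositions $p_{\ell^i_1},\ldots,p_{\ell^i_r}$, where the proxies are existentially quantified \emph{after} one universal bit $b$, under the core clauses $(b \equ h_i(p_{\ell^i_1},\ldots,p_{\ell^i_r})) \land \bigwedge_{k} (p_{\ell^i_k} \imp h_i(p_{\ell^i_1},\ldots,p_{\ell^i_r}))$. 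Since $h_i$ is fixed before $b$ while the proxies may vary with $b$, the one syntactic occurrence of $h_i$ is evaluated at two different input tuples: at $b=0$ all proxies can be set to $0$ and $h_i(0,\ldots,0)=0$ is forced, and at $b=1$ the value must flip to $1$, so some proxy must have toggled to true; additional core clauses (the $\tau(t)$ gadget, using one binary $g_t$ per term $t$ with the single argument tuple $(p_t,p_{\neg t})$) tie each proxy $p_\ell$ to the actual truth of the literal $\ell$. This device --- probing a function at two points through arguments that depend on a universal bit, while writing the function only once --- is precisely the ingredient missing from your sketch, and without it (or an equivalent) the uniqueness half of the lemma, which is the part actually used downstream in Theorem~\ref{thm:hardness-nonsimple-sigma}, remains unproven.
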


\begin{theorem}\label{thm:hardness-nonsimple-sigma}
 For $k \geq 1$, truth of $\Sigma^{\mathsf{u}\mathsf{c}}_k$ is $\SigmaE{k}$-complete.
\end{theorem}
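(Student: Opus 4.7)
The upper bound $\Sigma^{\mathsf{u}\mathsf{c}}_k \subseteq \SigmaE{k}$ is immediate from $\Sigma^{\mathsf{u}\mathsf{c}}_k \subseteq \Sigma_k$ and \Cref{thm:bounded-completeness}. For $\SigmaE{k}$-hardness, the plan is a logspace reduction from the truth problem of arbitrary $\Sigma_k$-formulae, which is $\SigmaE{k}$-complete by the same theorem. Given an input $\phi = Q_1 \vec{f_1} \cdots Q_{k+1} \vec{x} \, \theta$, I would construct an equivalent $\Sigma^{\mathsf{u}\mathsf{c}}_k$-formula through three successive transformations.

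The first step enforces uniqueness. For each function variable $f$ in the prefix that occurs with distinct argument tuples $\vec{t}_1, \ldots, \vec{t}_m$, introduce fresh copies $f_1, \ldots, f_m$ of the same arity at the same quantifier level as $f$, and replace every occurrence $f(\vec{t}_i)$ by $f_i(\vec{t}_i)$. Consistency is enforced by adding clauses $\bigwedge_l (t_i^l \equ t_j^l) \to (f_i(\vec{t}_i) \equ f_j(\vec{t}_j))$ to the matrix: as conjuncts when $f$ is existential, and as the antecedent of an implication to the original matrix when $f$ is universal. The result is an equivalent $\Sigma^{\mathsf{u}}_k$-formula of polynomial size.

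The second step brings the matrix into CNF via a Tseitin transformation: a fresh existential proposition $v_\psi$ is introduced for each subformula $\psi$, placed in the innermost existential block (directly in $\vec{x}$ when $k$ is even, or Skolemized as a function of $\vec{x}$ and added to $\vec{f_k}$ when $k$ is odd), together with its defining clauses. Each $v_\psi$ occurs with a single argument pattern, so uniqueness is preserved. The third step applies \Cref{lem:reduction-to-ucore} to the CNF matrix, yielding a core matrix $\chi$ and new quantifiers $\exists \vec{h}, \vec{g} \forall b \exists \vec{p}$. These are integrated into the $\Sigma_k$-prefix by placing $\vec{h}, \vec{g}$ in the outermost block $\vec{f_1}$ (where they act as fixed, disjunction-enforcing functions independent of $b$), adding $b$ as a fresh universal proposition to the innermost universal block ($\vec{x}$ for odd $k$, or $\vec{f_k}$ for even $k$), and placing the $\vec{p}$ in the innermost existential block, Skolemized over $\vec{x}$ and $b$ where required.

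For correctness, the forward direction is routine: witness $\vec{h}, \vec{g}$ by disjunctions and choose the $\vec{p}$ pointwise using \Cref{lem:reduction-to-ucore}. The subtle case is even $k$, where the $\vec{x}$-witness may depend on $b$. Here I would argue the backward direction by inspecting the $b = 1$ instance of $\chi$: the clause $b \equ h_i(\vec{p})$ forces $h_i(\vec{p}) = 1$, and combined with $h_i(\vec{0}) = 0$ (derived from the $b = 0$ instance together with the implications $p_k \to h_i(\vec{p})$) this forces $\vec{p} \neq \vec{0}$, so some proxy $p_{\ell^i_k}$ must be true; the clauses $p_t \to t$ then force the corresponding literal $\ell^i_k$ of $C_i$ to hold under the witnessing $\vec{x}$, so every clause of $\theta$ is satisfied at that $\vec{x}$, which witnesses $\phi$. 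The main obstacle I expect is the uniquification step: one must encode the consistency constraints so that the reduction stays in logspace and the $\Sigma_k$-prefix is preserved, especially when the non-unique variable lies in a universal block and the implicational premise must be folded into the subsequent CNF and core conversions without size blowup.
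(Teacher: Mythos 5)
Your overall strategy is the same as the paper's --- reduce from general $\Sigma_k$, first enforce uniqueness, then convert to CNF, then apply \Cref{lem:reduction-to-ucore} and merge the new quantifiers into the prefix --- and your CNF step, your use of the lemma, and your $b$-case analysis for even $k$ are all fine. The gap is in the uniquification step, and it is a real one. Your consistency clauses $\bigwedge_l (t_i^l \equ t_j^l) \imp (f_i(\vec{t}_i) \equ f_j(\vec{t}_j))$ compare the argument tuples only \emph{at their actual values under the current assignment} of the inner variables. This forces $f_i$ and $f_j$ to agree at a common point only when $\vec{t}_i$ and $\vec{t}_j$ happen to coincide under the \emph{same} assignment; it says nothing when $\vec{t}_i(\sigma) = \vec{t}_j(\sigma')$ for two \emph{different} assignments $\sigma \neq \sigma'$. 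In that situation the copies can disagree at the shared point and no single function $f$ can replace them, so the backward direction of the reduction fails. Concretely, take $\phi = \exists g \exists f \forall x\, \big((g(x) \equ \neg x) \land f(x) \land \neg f(g(x))\big)$: this is false, since $g$ must be negation and $f$ would have to be both constant $1$ and constant $0$. Your transform replaces the two occurrences by $f_1(x)$ and $f_2(g(x))$ and adds $(x \equ g(x)) \imp (f_1(x) \equ f_2(g(x)))$, whose antecedent is always false when $g$ is negation; so $g = \neg$, $f_1 \equiv 1$, $f_2 \equiv 0$ satisfies the transformed formula, which is therefore true. A false instance is mapped to a true one.

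The paper avoids this by making the consistency constraint \emph{global} rather than per-assignment: it keeps the original function $h$ in the formula, introduces fresh \emph{universally quantified} propositional tuples $\vec{z}, \vec{z}_1, \ldots, \vec{z}_n$, applies each copy only to its own tuple $\vec{z}_i$ (which immediately gives uniqueness, since the $\vec{z}_i$ are plain variables), and adds $(\vec{z} = \vec{z}_i) \imp (h(\vec{z}) = h_i(\vec{z}_i))$. Because $\vec{z}$ and $\vec{z}_i$ range over \emph{all} values, this forces $h_i$ to equal $h$ on every input. The original matrix is then guarded by $\big(\bigwedge_i \vec{z}_i = \vec{a}_i\big) \imp \theta^*$, where $\theta^*$ uses $h_i(\vec{z}_i)$ in place of $h(\vec{a}_i)$. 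Note also that the copies $h_i$ are quantified \emph{existentially} just before the final propositional block regardless of whether $h$ is existential or universal (they merely have to copy $h$), which sidesteps the separate problem your proposal would face with universally quantified copies and an implicational antecedent. If you replace your step one by this gadget, the rest of your argument goes through.
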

\begin{proof}
 The upper bound is due to \Cref{thm:bounded-completeness}.
 For the lower bound, we use \Cref{lem:reduction-to-ucore} and reduce from $\Sigma_k$, for which the truth is $\SigmaE{k}$-complete by \Cref{thm:bounded-completeness}.
 Let
 \begin{align*}
  \phi = \exists \vv{f_1} \forall \vv{f_2} \cdots Q_k \vv{f_k} \, Q_{k+1} \vv{x} \, \theta
 \end{align*}
 be given, where $\theta$ is quantifier-free, each $\vv{f_i}$ is a sequence of functions, and $\vv{x}$ is a sequence of propositions.

The first step is to transform $\phi$ to an equivalent formula with uniqueness. For any function $h$ that violates uniqueness, we introduce fresh distinct copies $h_1,\dots, h_n$ of $h$, for each distinct tuple of arguments $\vec{a}_1\dots \vec{a}_n$ of $h$, together with distinct fresh propositional variables $\vec{z}, \vec{z}_1,\dots, \vec{z}_n$. We then append subformulae to $\phi$ whose purpose is to state that the interpretations of $h_i$ and $h$ coincide. Below, we show this for the case where $Q_k=\exists$ and $Q_{k+1}=\forall$ (the case for $Q_k=\forall$ and $Q_{k+1}=\exists$ is analogous). We modify $\phi$ such that $\forall \vv{x} \, \theta$ is replaced with
\[
\exists h_1 \dots h_n \forall \vv{x} \,  \vec{z} \, \vec{z}_1\dots \vec{z}_n \Big( \bigwedge_{i \in [n]} \big( (\vec{z}= \vec{z}_i) \rightarrow h(\vec{z})= h_i(\vec{z}_i) \big) \Big) \land \Big( \big( \bigwedge_{i\in[n]} \vec{z}_i = \vec{a}_i \big) \rightarrow \theta^*\Big),
\]
where $\theta^*$ is obtained from $\theta$ by replacing the occurrences of $h(\vec{a}_i)$ by $h_i(\vec{z}_i)$, for each $i\in [n]$.

 The second step is to establish CNF.
 It is folklore that arbitrary formulae can be translated into an equivalent CNF with the introduction of additional existentially quantified propositions after the final quantifier block $\vec{x}$.
 If $k$ is odd, these existential propositions can be pulled in front of $\vv{x}$ (by increasing their arity and adding $\vv{x}$ as their parameter) and added to the (existential) block $\vv{f_k}$.
 If $k$ is even this step can be skipped since $\vv{x}$ is existential as well.

 Hence we can assume that $\theta$ is in CNF and has uniqueness.
 By \Cref{lem:reduction-to-ucore}, we transform it into an equivalent $\Sigma_1^{\mathsf{uc}}$-formula $\theta' = \exists \vv{g} \; \forall \vv{y} \; \theta''$ for functions $\vv{g}$ and propositions $\vv{y}$.
 It remains to merge these into the existing quantifier blocks of $\phi$.
 The $\vv{g}$ (which played the role of Boolean disjunctions of various arities) can be merged into any existential function quantifier block.
 (It is this step that requires at least one existential function block to begin with.)
 The $\vv{y}$ can be merged with $\vv{x}$ if $k$ is odd and hence $Q_{k+1} = \forall$.
 Otherwise $Q_k = \forall$, but then we pull $\forall\vv{y}$ in front of $\exists \vv{x}$ and make the $\vv{y}$ functions that depend on $\vv{x}$, which is equivalent.
\end{proof}

As levels $\Pi_2$ or higher of the hierarchy also contain the existential function quantifier block required to simulate disjunction, the same reduction applies:

\begin{corollary}\label{cor:hardness-nonsimple-pi}
 For $k \geq 2$, truth of $\Pi^{\mathsf{u}\mathsf{c}}_k$ is $\PiE{k}$-complete.
\end{corollary}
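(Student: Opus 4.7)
The plan is to leverage the reduction from the proof of \Cref{thm:hardness-nonsimple-sigma} almost verbatim, exploiting the fact that for $k \geq 2$ a $\Pi_k$-prefix already contains an existential function quantifier block, namely $\exists \vv{f_2}$. The upper bound is immediate from \Cref{thm:bounded-completeness}. For the lower bound, I reduce from truth of $\Pi_k$-formulae, which is $\PiE{k}$-hard by the same theorem.

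Starting from $\phi = \forall \vv{f_1} \exists \vv{f_2} \cdots Q_k \vv{f_k}\, Q_{k+1} \vv{x}\, \theta$, I would first enforce uniqueness by the copy-and-guard trick from the proof of \Cref{thm:hardness-nonsimple-sigma}: for every function symbol $h$ that appears with more than one argument tuple, introduce fresh unique copies $h_i$ together with propositional equality-testing variables, push the $h_i$ into the existing existential function block $\vv{f_2}$, and handle the equality-testing variables in the final propositional block, adapting the polarity of the implication according to $Q_{k+1}$ as in the two cases outlined in that earlier proof. Next, I would bring the resulting matrix into CNF by Tseitin-style auxiliary propositions; these can be merged into $\vv{x}$ when $Q_{k+1} = \exists$ (i.e.\ $k$ odd), or lifted to functions of $\vv{x}$ and added to $\vv{f_2}$ when $Q_{k+1} = \forall$ (i.e.\ $k$ even).

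With a unique CNF matrix in hand, I would apply \Cref{lem:reduction-to-ucore} to obtain an equivalent $\Sigma_1^{\mathsf{uc}}$-formula $\exists \vv{g}\, \forall \vv{y}\, \theta'$ for the matrix. The existential functions $\vv{g}$, which simulate Boolean disjunctions, would be absorbed into $\exists \vv{f_2}$. The universal propositions $\vv{y}$ would join $\vv{x}$ when $Q_{k+1} = \forall$, or else be pulled in front of $\exists \vv{x}$ by promoting each $y_i$ to a function of $\vv{x}$ when $Q_{k+1} = \exists$. The outcome is a $\Pi_k^{\mathsf{uc}}$-formula equivalent to $\phi$ and computable in logspace.

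The main obstacle is the bookkeeping that guarantees no new quantifier alternation appears during any of these steps; this is precisely where the hypothesis $k \geq 2$ is essential. For $k = 1$, a $\Pi_1$-prefix has no existential function block to host the auxiliary disjunction-simulating variables $\vv{g}$ produced by \Cref{lem:reduction-to-ucore}, so the reduction would fail—consistently with the fact that \Cref{cor:hardness-nonsimple-pi} explicitly excludes that case.
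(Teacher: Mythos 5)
Your overall strategy is exactly the paper's: the published proof of this corollary is a one-line remark that the reduction of \Cref{thm:hardness-nonsimple-sigma} carries over verbatim because a $\Pi_k$-prefix with $k \geq 2$ already contains an existential function quantifier block to host the disjunction-simulating functions of \Cref{lem:reduction-to-ucore}. Your explanation of why $k=1$ must be excluded is also the intended one.

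There is, however, one concrete misstep in your bookkeeping: you route all three kinds of auxiliary existential functions --- the uniqueness copies $h_i$, the Tseitin variables (lifted to functions of $\vv{x}$ when $Q_{k+1}=\forall$), and the disjunction simulators $\vv{g}$ --- into the block $\exists\vv{f_2}$. Only the $\vv{g}$ may go there: they are forced by their defining clauses to be a \emph{fixed} Boolean function (disjunction), independent of anything quantified later, which is why the paper says they can be merged into \emph{any} existential function block. The uniqueness copies must agree with a function $h$ that may itself be quantified in block $3,\dots,k$, and the Tseitin functions must equal the truth value of a subformula mentioning functions from those later blocks. If you fix their interpretations already at block $2$, a universal player moving in a later block can falsify the defining biconditionals, so the translated formula becomes false even when $\phi$ is true. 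The correct placement, as in the proof of \Cref{thm:hardness-nonsimple-sigma}, is the \emph{innermost} existential function position (merging with $\vv{f_k}$ when $Q_k=\exists$, and otherwise sitting just before the propositional block). For $k=2$ your placement coincides with the correct one, which is presumably why the slip is easy to overlook; for $k\geq 3$ it breaks the reduction.
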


The above reduction, together with \Cref{lem:reduction-to-ucore}, introduces existential quantifiers that are not braided.
Compared to the previous section, this small difference leads from $\NL$-membership to $\SigmaE{k}$-completeness.
If the final proposition block is existential \emph{and} there is at least one existential function block, the result carries over even with simpleness due to \Cref{cor:simpleness-hierarchy}:
\begin{corollary}\label{cor:k-even-sigma-hardness}
\begin{enumerate}
\item Let $k > 0$ be even.
 The truth problem of $\Sigma^{\mathsf{u}\mathsf{s}\mathsf{c}}_k$ is $\SigmaE{k}$-complete and the truth problem of $\Pi^{\mathsf{u}\mathsf{s}\mathsf{c}}_{k+1}$ is $\PiE{k+1}$-complete.
\item The truth problem of $\Sigma^{\mathsf{u}\mathsf{s}\mathsf{c}}_{\omega}$ is $ \AEXPPOLY$-complete.
\end{enumerate}
 \end{corollary}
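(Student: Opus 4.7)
The upper bounds are immediate: Theorem \ref{thm:bounded-completeness} yields the $\SigmaE{k}$ and $\PiE{k+1}$ bounds in Part 1, and Theorem \ref{thm:unbounded-completeness} gives the $\AEXPPOLY$ bound in Part 2 since $\Sigma_\omega^{\mathsf{usc}} \subseteq \SO$. It remains to establish hardness in each case.

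For Part 1, the plan is to compose the non-simple core hardness results already established with the simpleness-restoration of Corollary \ref{cor:simpleness-hierarchy}. For $\Sigma_k^{\mathsf{usc}}$ with $k > 0$ even, Theorem \ref{thm:hardness-nonsimple-sigma} yields $\SigmaE{k}$-hardness of $\Sigma_k^{\mathsf{uc}}$; since $k = (k-1)+1$ with $k-1$ odd, the $\Sigma_{k'+1}$-case of Corollary \ref{cor:simpleness-hierarchy} applies and produces an equivalent simple formula that preserves uniqueness and the core property. Composing both logspace reductions gives $\SigmaE{k}$-hardness of $\Sigma_k^{\mathsf{usc}}$. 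Symmetrically, $k+1$ is odd and at least $3$, so Corollary \ref{cor:hardness-nonsimple-pi} provides $\PiE{k+1}$-hardness of $\Pi_{k+1}^{\mathsf{uc}}$, and the $\Pi_{k+1}$-case of Corollary \ref{cor:simpleness-hierarchy} restores simpleness while preserving the core property, yielding $\PiE{k+1}$-hardness of $\Pi_{k+1}^{\mathsf{usc}}$.

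For Part 2, I would reduce from the $\AEXPPOLY$-complete truth problem of $\Sigma_\omega^{\mathsf{us}}$ given by Theorem \ref{thm:unbounded-completeness}. Given an input $\phi \in \Sigma_k^{\mathsf{us}}$ (with $k$ recognizable from the prefix), first apply Theorem \ref{thm:hardness-nonsimple-sigma}'s construction to obtain an equivalent $\phi' \in \Sigma_k^{\mathsf{uc}}$. If $k$ is even, Corollary \ref{cor:simpleness-hierarchy} directly converts $\phi'$ to an equivalent formula in $\Sigma_k^{\mathsf{usc}}$. If $k$ is odd, the last propositional block of $\phi'$ is universal, so a minor variant is needed: I would apply Proposition \ref{prop:simpleness} and then convert the newly introduced existential propositions $\vec y$ into function variables $g_i(\vec x)$ parameterised by the final universal tuple $\vec x$, absorbing them into the last existential function block of $\phi'$ (which exists since $\phi' \in \Sigma_k$ with $k \geq 1$). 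The defining clauses $g_i(\vec x) \equ t_i^*$ each expand into two core clauses and satisfy uniqueness, so the output lies in $\Sigma_k^{\mathsf{usc}} \subseteq \Sigma_\omega^{\mathsf{usc}}$.

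The main subtlety I anticipate is the parity bookkeeping in Part 2, because Corollary \ref{cor:simpleness-hierarchy} as stated does not cover odd $k$ in the $\Sigma_k$ case. However, the required manoeuvre is precisely the same parameterisation trick that Theorem \ref{thm:hardness-nonsimple-sigma} uses in its CNF-translation step to pull propositional quantifiers into an existential function block, so the machinery is already in place. All steps are logspace-computable, giving an overall logspace reduction.
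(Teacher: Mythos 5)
Part~1 of your proposal is correct and is exactly the paper's argument: compose the hardness of $\Sigma^{\mathsf{uc}}_k$ (Theorem~\ref{thm:hardness-nonsimple-sigma}) resp.\ $\Pi^{\mathsf{uc}}_{k+1}$ (Corollary~\ref{cor:hardness-nonsimple-pi}) with the simpleness restoration of Corollary~\ref{cor:simpleness-hierarchy}, which applies because in both cases the final propositional block is existential.

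Part~2 contains a genuine gap in the odd-$k$ branch. The propositions $\vec y$ introduced by Proposition~\ref{prop:simpleness} exist precisely to stand in for \emph{nested subterms}: in $\theta^*$ and in the defining terms $t_i^*$ they occur as \emph{arguments of other function terms}, e.g.\ $f(g(x))$ becomes $f(y)$ with $y \equ g(x)$. If you now replace $y$ by $g_y(\vec x)$ to pull it past the universal propositional block, the term $f(y)$ becomes $f(g_y(\vec x))$, which is a nested proper function term --- you have destroyed simpleness, the very property the step was supposed to establish. This is not the same situation as the CNF-translation step inside Theorem~\ref{thm:hardness-nonsimple-sigma}: there the fresh Tseitin propositions occur only as literals in clauses, never as arguments of other functions, so parameterising them by $\vec x$ is harmless; here it is fatal. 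This is exactly why Corollary~\ref{cor:simpleness-hierarchy} is restricted to prefixes ending in an existential propositional block. The repair is easy and is the route the paper intends: since the target class is $\Sigma_\omega^{\mathsf{usc}}$, an instance $\phi\in\Sigma_k^{\mathsf{us}}$ with $k$ odd may first be regarded as a $\Sigma_{k+1}$-formula (the paper uses the inclusion $\Sigma_k\subseteq\Sigma_{k+1}$ explicitly before Corollary~\ref{cor:k-odd-sigma-comp}); then $k+1$ is even, the final propositional block is existential, and the Part~1 machinery applies verbatim. The one-level increase is irrelevant inside $\Sigma_\omega$, and the whole map remains logspace-computable.
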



What if the proposition block is universal, \ie, $k$ is odd for $\Sigma_k$ and even for $\Pi_k$?
Then, as shown in \Cref{thm:horn-krom-drop-hierarchy}, we fall down one level in the hierarchy.
Hardness results follow from the observation that $\Sigma_k$ ($\Pi_k$, resp.) is a syntactic fragment of $\Sigma_{k+1}$ ($\Pi_{k+1}$, resp.).

\begin{corollary}\label{cor:k-odd-sigma-comp}
Let $k > 2$ be odd.
The truth problem of $\Sigma^{\mathsf{u}\mathsf{s}\mathsf{c}}_{k}$ is $\SigmaE{k-1}$-complete and the truth problem of $\Pi^{\mathsf{u}\mathsf{s}\mathsf{c}}_{k+1}$ is $\PiE{k}$-complete.
\end{corollary}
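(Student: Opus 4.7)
The plan is to handle the upper and lower bounds separately; both follow essentially directly from results already stated in the excerpt, so only a short embedding argument is needed to glue things together.

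For the upper bounds, since $k>2$ is odd, both $k-1$ and $k+1$ are positive and even. Applying \Cref{thm:horn-krom-drop-hierarchy} with the even parameter $k-1$ places $\Sigma^{\mathsf{s}\mathsf{k}}_{k}$ in $\SigmaE{k-1}$, and applying it with the even parameter $k+1$ places $\Pi^{\mathsf{s}\mathsf{k}}_{k+1}$ in $\PiE{k}$. Since every core clause is in particular Krom, the fragments $\Sigma^{\mathsf{u}\mathsf{s}\mathsf{c}}_k$ and $\Pi^{\mathsf{u}\mathsf{s}\mathsf{c}}_{k+1}$ are syntactic subclasses of these, and so inherit the respective upper bounds.

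For the lower bounds, I would inherit hardness from the next-lower quantifier level via a trivial embedding of the quantifier prefix. A $\Sigma_{k-1}$-formula has the shape $\exists\vv{f_1}\forall\vv{f_2}\cdots\forall\vv{f_{k-1}}\exists\vv{x}\,\theta$ (since $k-1$ is even, the final propositional block is existential), and it is equivalent to the $\Sigma_k$-formula obtained by recasting $\vv{x}$ as a block $\vv{f_k}$ of $0$-ary existential function variables and appending an empty universal propositional block. This syntactic reinterpretation leaves $\theta$ untouched, so uniqueness, simpleness, and the Horn/Krom/core shape all carry over. \Cref{cor:k-even-sigma-hardness}, applied with the even parameter $k-1$, gives that truth of $\Sigma^{\mathsf{u}\mathsf{s}\mathsf{c}}_{k-1}$ is $\SigmaE{k-1}$-hard, and the embedding transfers this hardness to $\Sigma^{\mathsf{u}\mathsf{s}\mathsf{c}}_k$. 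The $\Pi$-case is symmetric: a $\Pi_k$-formula (with $k$ odd, hence existential propositional block) embeds into $\Pi_{k+1}$ in the analogous way, and \Cref{cor:k-even-sigma-hardness} (again with even parameter $k-1$) yields that $\Pi^{\mathsf{u}\mathsf{s}\mathsf{c}}_k$ is $\PiE{k}$-hard, which transfers to $\Pi^{\mathsf{u}\mathsf{s}\mathsf{c}}_{k+1}$.

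The only point requiring care is verifying that the definitions of $\Sigma_k$ and $\Pi_k$ genuinely permit this embedding---in particular, that empty quantifier blocks are allowed and that the distinctness-of-variables clause remains satisfied, which it does since no new named variables are introduced. No new combinatorial or logical content is needed beyond this bookkeeping, so I expect the proof to be essentially a one-paragraph deduction; the substantive work lives in \Cref{thm:horn-krom-drop-hierarchy} and \Cref{cor:k-even-sigma-hardness}.
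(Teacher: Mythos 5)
Your proposal is correct and follows essentially the same route as the paper: the upper bounds are read off from \Cref{thm:horn-krom-drop-hierarchy} (with the core fragment inheriting the Krom bound), and the lower bounds come from observing that $\Sigma_{k-1}$ (resp.\ $\Pi_k$) is a syntactic fragment of $\Sigma_k$ (resp.\ $\Pi_{k+1}$) via empty/reinterpreted quantifier blocks, so the hardness of \Cref{cor:k-even-sigma-hardness} with even parameter $k-1$ transfers directly. Your explicit bookkeeping of the block reinterpretation is just a spelled-out version of the paper's one-line remark.
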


\subsection{The fragment $\Pi_1$ without uniqueness}

We established the $\NL$ upper bound of $\Pi_1$ if we have uniqueness and Krom (\Cref{cor:nl-upper-bounds}); the case with uniqueness and Horn is open.
Here, we proceed with $\Pi_1$ without uniqueness.
As we have no existential function quantifiers, the reduction from before does not apply.
Nonetheless, it turns out that this fragment is still as hard as the full logic.

\begin{theorem}\label{thm:hardness-non-simple-pi}
 Truth of $\Pi_1^{\mathsf{sc}}$-formulae is $\PiE{1}$-hard.
\end{theorem}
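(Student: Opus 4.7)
The plan is to reduce from truth of simple $\Pi_1$-formulae, which is $\PiE{1}$-complete by \Cref{thm:bounded-completeness} together with \Cref{cor:simpleness-hierarchy}. Given such a formula $\phi = \forall \vec{f}\,\exists \vec{x}\, \theta$ with arbitrary quantifier-free matrix $\theta$, I would construct in logarithmic space an equivalent $\Pi_1^{\mathsf{sc}}$-formula $\phi'$ by rewriting $\theta$ as a conjunction of core clauses, at the cost of enlarging $\vec{f}$ by fresh universal function variables and $\vec{x}$ by fresh existential propositions.

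My first step would be to bring $\theta$ into CNF by a Tseitin-style rewrite: introduce an existential propositional proxy $y_\psi$ for every subformula $\psi$ of $\theta$, add the defining equivalences $y_\psi \leftrightarrow \psi$, and assert $y_\theta$ as a unit clause. The resulting CNF has at most three-literal clauses, but several of the Tseitin clauses escape the core fragment: the definition of a binary conjunction produces a three-literal Horn clause, that of a binary disjunction produces a clause with several positive literals, and that of a negation produces a non-Horn binary clause.

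The central step would then be to replace each such non-core Tseitin clause by an equivalent conjunction of core clauses that uses a fresh universal function variable $h$ (merged into $\vec{f}$) evaluated at several prover-chosen argument tuples. Non-uniqueness is what makes this possible: the same adversarial $h$ can be queried at many different tuples of existential propositions, and core clauses of the shape $\ell \to h(\vec{p})$, $\neg h(\vec{p}) \lor \ell'$, and $\neg h(\vec{p}) \lor \neg h(\vec{q})$ can be combined to enforce the missing Tseitin constraint. In contrast to the situation in \Cref{lem:reduction-to-ucore}, where the analogous function variable is existential and can be interpreted directly as a Boolean disjunction, here $h$ is adversarial, so the gadget must behave correctly under every possible interpretation of $h$.

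The main obstacle I expect is precisely the robustness of this gadget against all choices of $\vec{h}$, including the constant ones. A natural plan is to use several universal function variables per non-core clause and to query each of them at several distinct, strategically chosen argument tuples; supplementary negative-binary clauses then ensure that any adversarial constant choice of some $h_j$ either is rendered irrelevant by another $h_{j'}$ or contradicts an existentially forced literal. Once the correctness of this simulation is established, the resulting formula $\phi'$ lies in $\Pi_1^{\mathsf{sc}}$, is logspace-computable from $\phi$, and is equivalent to it, which completes the reduction.
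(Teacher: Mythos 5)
There is a genuine gap. Your setup (reducing from $\Pi_1$, Tseitin-style CNF with existential proxies, and the idea of exploiting non-uniqueness to probe a \emph{universal} function variable at several existentially chosen argument tuples) matches the spirit of the paper's proof, and you correctly identify the crux: the gadget must be correct under \emph{every} adversarial interpretation of the fresh universal functions, including the constant ones. But you then leave exactly that crux unresolved. The sentence beginning "A natural plan is to use several universal function variables per non-core clause\dots" followed by "Once the correctness of this simulation is established\dots" is not a construction; it is a promissory note for the one piece of the argument that carries the entire theorem. It is not at all routine that "supplementary negative-binary clauses" can render a degenerate $h_j$ irrelevant while staying inside the core fragment: a clause guarded by $\neg h(\vec p)$ becomes vacuous when the adversary sets $h$ to the constant $1$ function, and a clause concluding in $h(\vec p)$ becomes vacuous when $h$ is constant $1$ as well, so one needs a mechanism by which the existential player can \emph{detect} which of the sixteen binary functions the adversary chose and disable the guarded clauses in every degenerate case while still being forced to satisfy the original constraint when the function is the intended connective. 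Without exhibiting such a gadget and verifying the full case analysis, the reduction is not established.

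For comparison, the paper resolves this with a single global universal binary function $g$ intended to be \emph{nand}, a chain of existential probe propositions $d,e,d',e'$ constrained by the core clauses $g(0,0)\to d$, $g(0,0)\to e$, $g(d,0)\to d'$, $g(0,e)\to e'$, and the replacement of each 3CNF clause $C$ by the single guarded core clause $e' \to g(d',C^*)$, where $C^*$ is a \emph{nand}-expression for $\neg C$ (simpleness is restored afterwards via \Cref{cor:simpleness-hierarchy}, which preserves the core property). The correctness argument is an explicit case distinction on the values $g(0,0)$, $g(0,1)$, $g(1,0)$, $g(1,1)$ showing that in every degenerate case the existential player can set the probes so that all guarded clauses are trivially satisfied, and in the remaining case $g$ is genuinely \emph{nand}, so $g(d',C^*)\equiv C$. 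Your proposal would need to supply a gadget of comparable precision, together with its exhaustive verification, before it could be considered a proof.
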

\begin{proof}
 We reduce from the truth of arbitrary $\Pi_1$-formulae, which by \Cref{thm:bounded-completeness} is $\PiE{1}$-complete.
 Hence let $\phi$ be a $\Pi_1$-formula, \ie,
 \begin{align*}
  \phi = \forall f_1 \cdots \forall f_n \exists x_1 \cdots \exists x_m \, \theta
 \end{align*}
 for function variables $f_1,\ldots,f_n$, propositions $x_1, \ldots, x_m$, and $\theta$ quantifier-free.
 Since the propositional quantifier block is existential, we can  \wloss assume that $\theta$ is in 3CNF.

 The idea is to add $\forall g$ to the beginning of the formula, where $g$ is a fresh binary function symbol, and to express in the reduction that $g$ is the \emph{nand} function, \ie, $g(b_1,b_2) = 1 - b_1b_2$.
 In what follows, we use the constants $0$ and $1$, which can easily be simulated by adding new propositional quantifiers $\exists z_0 \exists z_1$ and unit clauses $\neg z_0 \land z_1$.
 To describe the behaviour of $g$, we add propositions $\exists d\, \exists d'\, \exists  e\, \exists  e'$ and the following core clauses:
 \begin{align*}
  D_1 \dfn g(0,0)  \rightarrow d, \quad
  D_2 \dfn g(0,0)  \rightarrow e, \quad
  D_3 \dfn g(d,0)  \rightarrow d', \quad
  D_4 \dfn g(0,e)  \rightarrow e'.
 \end{align*}
 Furthermore, every clause $C := (\ell_1 \lor \cdots \lor \ell_3)$ of $\theta$ is replaced by $e' \rightarrow g(d',C^*)$, where $C^*$ is a \emph{nand}-expression equivalent to $\neg C$, using $g$ as a symbol for \emph{nand}.
 Call the resulting formula $\theta^*$.
%
%
 To prove the correctness of the reduction, we show that $\theta$ is equivalent to
 $\theta' \dfn \forall g \exists d\, \exists d'\, \exists  e\, \exists  e' (\bigwedge_{i=1}^4 D_i \land \theta^*)$.

 The easy direction is from right to left:
 Since $g$ is universal, in particular we can assume that $g$ is \emph{nand}.
 As $g(0,0) = g(1,0) = g(0,1) = 1$, the propositions $d,e,d',e'$ must all be true.
 Since also all clauses of the form $e' \rightarrow g(d',C^*)$ are true by assumption, $C^*$ is false.
 Consequently, $C$ is true.


 For the converse direction, let $g$ be arbitrary.
 We define suitable witnesses for $d, e, d'$~and~$e'$.
 \begin{itemize}
  \item If $g(0,0) = 0$, then we set $d,e,d',e' := 0$, which satisfies all clauses of the form $e' \rightarrow g(d',C^*)$, as well as $D_1,\ldots,D_4$.
  \item If $g(0,1) = 0$, then we can similarly set $d,d',e := 1$ and $e' := 0$.
  \item Otherwise $g(0,0) = g(0,1) = 1$. Here, we must set $e,e',d := 1$.
        \begin{itemize}
         \item If $g(1,0) = 0$, then we set $d' := 0$.
               Then $g(d',C^*) = g(0,C^*) = 1$ regardless of $C^*$.
         \item If $g(1,0) = 1$, then we set $d' := 1$.
               \begin{itemize}
                \item If $g(1,1) = 1$, then $g$ is constant one, and the terms $g(d',C^*)$ are trivially true.
                \item If $g(1,1) = 0$, then $g$ is the actual \emph{nand} function, and $g(d',C^*) \equiv \neg(1 \land C^*) \equiv C$ is true by assumption.
               \end{itemize}
        \end{itemize}
 \end{itemize}
 Finally, we replace $\theta$ by $\theta'$ in $\phi$, move $\forall g$ to the front of the formula, and obtain simpleness of the formula by \Cref{cor:simpleness-hierarchy}.
\end{proof}

The above results easily "relativize" to the case of more quantifier alternations before the final universal function quantifier block:

\begin{corollary}\label{cor:simple-odd-pi-hard}
 Let $k > 0$ be odd.
 Then the truth of $\Sigma^{\mathsf{sc}}_{k+1}$ is $\SigmaE{k+1}$-complete, and the truth of $\Pi^{\mathsf{sc}}_{k}$ is $\PiE{k}$-complete.
\end{corollary}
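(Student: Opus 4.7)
The plan is to relativize the reduction of \Cref{thm:hardness-non-simple-pi} by placing it underneath an arbitrary quantifier prefix. Both statements of the corollary concern quantifier patterns whose tail is $\forall \vec{f}\,\exists \vec{x}\,\theta$: for $\Pi_k$ with $k$ odd this is the last function block followed by the existential propositional block, and for $\Sigma_{k+1}$ with $k$ odd (so $k+1$ even) the last function block is again universal, again followed by an existential propositional block. In both cases the suffix is exactly the shape handled by \Cref{thm:hardness-non-simple-pi}.

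For the $\Pi_k^{\mathsf{sc}}$ hardness, I would start from a general $\Pi_k$-formula $\phi = \forall\vec{f_1}\exists\vec{f_2}\cdots\forall\vec{f_k}\,\exists\vec{x}\,\theta$, whose truth is $\PiE{k}$-complete by \Cref{thm:bounded-completeness}, and \wloss assume $\theta$ is in $3$CNF. Treating $\vec{f_1},\ldots,\vec{f_{k-1}}$ as free parameters, the subformula $\forall\vec{f_k}\,\exists\vec{x}\,\theta$ is a $\Pi_1$-formula, and the reduction of \Cref{thm:hardness-non-simple-pi} applies verbatim: introduce a fresh binary universal $g$ together with auxiliary existential propositions $d,d',e,e'$, add the clauses $D_1,\ldots,D_4$ forcing $g$ to be \emph{nand} whenever the witnesses exist, and replace each $3$-clause $C$ of $\theta$ by the core clause $e' \imp g(d',C^*)$ where $C^*$ is an equivalent \emph{nand}-expression for $\neg C$. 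The new $\forall g$ is merged into the existing block $\forall\vec{f_k}$ and the new existential propositions into $\exists\vec{x}$, so the overall prefix stays in $\Pi_k$, and the matrix remains a conjunction of core clauses. The argument for $\Sigma_{k+1}^{\mathsf{sc}}$ is identical, applied to the tail $\forall\vec{f}_{k+1}\,\exists\vec{x}\,\theta$ of a $\Sigma_{k+1}$-formula, again absorbing the fresh quantifiers into the existing blocks.

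The construction produces nested terms of the form $g(d',C^*)$, so the resulting formula is not yet simple. However, since $k$ is odd and the class $\Psi$ lies in $\{\Pi_k, \Sigma_{k+1}\}$, \Cref{cor:simpleness-hierarchy} provides a logspace translation into an equivalent simple formula in the same class, and that translation is explicitly stated to preserve the core property. The upper bounds $\PiE{k}$ and $\SigmaE{k+1}$ come directly from \Cref{thm:bounded-completeness}, so completeness follows.

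The main obstacle is the bookkeeping around the quantifier blocks: one must check that the absorption of $\forall g$ into $\forall\vec{f_k}$ and of the fresh existential propositions into $\exists\vec{x}$ preserves the semantics of the reduction (which it does, since $g$ is independent of $\vec{f_k}$ and the new propositions genuinely depend only on the outer quantifiers and on $\vec{x}$), and that the subsequent simpleness conversion does not degrade the core shape. Beyond that, no new combinatorial ideas past those of \Cref{thm:hardness-non-simple-pi} appear to be required.
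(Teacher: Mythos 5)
Your proposal is correct and matches the paper's intended argument: the paper justifies \Cref{cor:simple-odd-pi-hard} with exactly this relativization of \Cref{thm:hardness-non-simple-pi}, noting that for odd $k$ both $\Pi_k$ and $\Sigma_{k+1}$ end in a universal function block followed by an existential propositional block, into which the fresh $\forall g$ and the auxiliary existential propositions are absorbed before restoring simpleness via \Cref{cor:simpleness-hierarchy}. No substantive difference from the paper's route.
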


\subsection{The $\Sigma_1$ cases with simpleness but no uniqueness}

Curiously, while $\Pi^{\mathsf{sc}}_1$ is $\PiE{1}$-complete, its dual fragment $\Sigma^{\mathsf{sc}}_1$ is likely easier than $\SigmaE{1}$, although harder than $\Sigma^{\mathsf{usc}}_1$.
We consider these final fragments in this subsection.

\begin{theorem}\label{thm:direct-pspace-hardness}
 Truth of formulae in $\Sigma^{\mathsf{sc}}_1$ or $\Sigma^{\mathsf{sk}}_1$ is $\PSPACE$-complete.
\end{theorem}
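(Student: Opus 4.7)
The upper bound is immediate: every $\Sigma_1^{\mathsf{sc}}$-formula is syntactically also a $\Sigma_1^{\mathsf{sk}}$-formula (core $\subseteq$ Krom), and by \Cref{cor:pspace} the Krom fragment is already contained in $\PSPACE$.

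For the lower bound we plan to reduce from the canonical $\PSPACE$-complete problem $\QBF$. Given $\phi = Q_1 p_1 \cdots Q_n p_n\, \theta$ with $\theta$ in 3CNF (WLOG), we Skolemize by replacing every existentially quantified $p_i$ with $f_i(\vec{p}^{<i})$, where $\vec{p}^{<i}$ lists the universally quantified variables preceding $p_i$; this yields an equivalent simple $\Sigma_1$-formula $\exists \vec{f}\, \forall \vec{p}\, \theta^*$ whose matrix is a 3CNF over literals of the form $\pm p_j$ and $\pm f_i(\vec{p}^{<i})$. The work is then to rewrite $\theta^*$ in core form while preserving simpleness. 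We introduce existentially quantified constants $z_0, z_1$ via the unit clauses $\neg z_0$ and $z_1$, and for each universal proposition $p_j$ a unary negation proxy $\bar{p}_j$ pinned by the two core clauses $\neg \bar{p}_j(p_j) \lor \neg p_j$ and $\bar{p}_j(z_0)$; one verifies that these force $\bar{p}_j$ to compute $\neg p_j$. Each 3-clause $C = \ell_1 \lor \ell_2 \lor \ell_3$ is then replaced by the single unit clause $g_C(v_1,v_2,v_3)$, where $g_C$ is a fresh ternary function whose complete truth table is pinned down by the eight core unit clauses realising the polarised disjunction $g_C(b_1,b_2,b_3) = \pi_1(b_1) \lor \pi_2(b_2) \lor \pi_3(b_3)$ appropriate to $C$, and the $v_i$ are propositional proxies for the underlying atoms of the $\ell_i$. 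Universal propositional atoms may serve as their own proxies.

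The main technical obstacle is that function atoms $f_k(\vec{a})$ cannot serve as arguments of $g_C$ directly without violating simpleness, and the naive biconditional between a fresh propositional proxy $P$ and $f_k(\vec{a})$ requires the non-Horn clause $f_k(\vec{a}) \lor P$ in one direction. We bypass this by adapting the selector-bit construction underlying \Cref{lem:reduction-to-ucore} to the simple setting: additional universal propositions act as proxies, and a single universal selector bit $b$ controls, through length-two core implications, the regime in which each proxy is forced to agree with the corresponding function value---only one direction of agreement per regime is required, which keeps every clause core. Verifying equivalence with $\phi$ and assembling the gadgets produces a logspace-computable $\Sigma_1^{\mathsf{sc}}$-formula, and thus in particular a $\Sigma_1^{\mathsf{sk}}$-formula, yielding $\PSPACE$-hardness for both fragments. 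The delicate step is checking that the selector-based activation really confines itself to core clauses and faithfully transports truth from $\phi$ to the constructed formula; everything else is bookkeeping.
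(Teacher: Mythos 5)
Your upper bound is fine and matches the paper (containment of $\Sigma_1^{\mathsf{sc}}$ in $\Sigma_1^{\mathsf{sk}}$ plus \Cref{cor:pspace}), and reducing from $\QBF$ via Skolemization is a legitimate starting point for hardness. The individual gadgets you spell out also work: the constants $z_0,z_1$, the negation proxy $\bar p_j$ pinned by $\neg\bar p_j(p_j)\lor\neg p_j$ and $\bar p_j(z_0)$, and pinning the full truth table of a ternary $g_C$ by eight unit clauses over the constants are all simple and core (exploiting non-uniqueness, as they must, since the unique simple Krom fragment is in $\NL$). The problem is the one step you explicitly defer: feeding the values $f_k(\vec a)$ of the Skolem functions into $g_C$. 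This is not bookkeeping; it is where the construction breaks. The selector-bit mechanism of \Cref{lem:reduction-to-ucore} works only because the proxies $p_\ell$ are quantified \emph{existentially after} $\forall b$ and are then converted into functions of $b$ nested inside the disjunction functions --- the lemma's proof says outright that this last step "is allowed since we do not have the simpleness constraint." Your replacement --- universal propositional proxies with one-directional, selector-controlled core implications --- cannot be made to work: if $P$ is universally quantified, then any \emph{unguarded} two-literal implication between $P$ and $f_k(\vec a)$, in either direction, is falsified by some value of $P$ (universal quantification ranges over both), so the whole matrix becomes false; and guarding such an implication by the selector bit $b$ produces a three-literal clause, which leaves the Krom fragment. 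Likewise the unit clause $g_C(v_1,v_2,v_3)$ leaves no room for a guard literal. An existential proxy would have to depend on the universals $\vec a$, hence be a proper function, and then $g_C(\dots,P(\vec a),\dots)$ violates simpleness. A clause of the Skolemized matrix containing two or three function literals (which 3CNF $\QBF$ instances unavoidably produce) therefore cannot be brought into simple core form by this route, and you have not supplied the missing idea.

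For contrast, the paper avoids the proxy problem entirely by not reducing from $\QBF$: it encodes the computation of a polynomial-space Turing machine directly, with a single function $f$ of arity $\calO(p(n))$ whose atoms $f(\text{config})$ assert reachability of a configuration. The initial and rejecting configurations yield unit clauses, and closure under the transition function yields two-literal implications between two $f$-atoms whose arguments are propositions and constants. Every clause is thus automatically simple and core, and no clause ever needs to combine or compare two function values. If you want to salvage a formula-level reduction instead, you would need to first eliminate clauses with more than one function literal (the paper's commented-out "no fat clauses" technique handles exactly the case of at most one proper function literal per clause, via $\neg g(p_1,\dots,p_m)\lor f_k(\vec a)$ with $g$ pinned to a polarized conjunction), but you would still have to explain how to get a 3CNF $\QBF$ matrix into that restricted shape.
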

\begin{proof}
 The upper bound is given by \Cref{cor:pspace}.
 We show the hardness for $\Sigma^{\mathsf{sc}}_1$, which implies the lower bound for $\Sigma^{\mathsf{sk}}_1$.
 Let $M$ be a single-tape Turing machine that decides some $\PSPACE$-complete problem in deterministic space $p(n)$, where $p(n) \geq n$ is some polynomial.
 \Wloss, we may assume that the computation of $M$ halts in time $g(n)$ by reaching a unique rejecting or a unique accepting configuration, where $g(n)$ is some exponential function.
 For each input $x$, we compute a formula $\phi$ in logspace that is true iff $M$ accepts $x$.
 The formula $\phi$ will be of the form
 \begin{align*}
  \exists f \, \forall v_1 \cdots \forall v_m \, \theta,
 \end{align*}
 where $\theta$ is quantifier-free, simple and core, $f$ is a function variable, and the $v_i$ are propositions.
 Thus $\phi \in \Sigma^{\mathsf{sc}}_1$.

If $M$ has states $Q$ and tape alphabet $\Gamma$, then a configuration of $M$ is a triple $(h,q,w)$, where $h \in [p(n)]$ denotes the head position on the tape, $q \in Q$ is the state of the machine, and $w \in \Gamma^{p(n)}$ is the tape content.
 We stipulate an arbitrary coding function $\ang{\cdot} \colon Q \cup \Gamma \to \{0,1\}^{k}$ that expands each state and each tape symbol to a fixed-width binary vector.
 For tape positions $j \in [p(n)]$, we use the unary encoding $\bit(j) \dfn (0^{j-1}10^{p(n)-j})$. Using the coding function $\ang{\cdot}$, configurations of $M$ can be now presented as binary strings of length $p(n)+k+kp(n)$.

 The idea behind $\phi$ is as follows:
 The function $f$ is used to encode a set of (binary encodings of) configurations of $M$.
 In order to take a head position, a state, and a tape content as an argument, the function $f$ will have arity $p(n)+k+kp(n)$.
 In $\theta$, we stipulate that $f$ contains the initial configuration and is closed under transitions of $M$, but does not reach the unique rejecting configuration.
 Hence it expresses that $M$ accepts $x$, as desired.

We will next describe $\theta$ more formally.
 Let $M$ have initial state $q_0 \in Q$, and let $x = x_1\cdots{}x_n$.
 First,  we define the formula $\psi_1$ expressing that $f$ contains the initial configuration:
 \begin{align*}
  \psi_1 \dfn f(\bit(1);\ang{q_0};\ang{x_1}\cdots\ang{x_n}\ang{\Box}\cdots\ang{\Box}),
 \end{align*}
where $\Box\in\Gamma$ denotes the special symbol for blank.
 Next, $\psi_2$ states that $f$ is closed under transitions of $M$ ($f$ may contain superfluous configurations, but this does not hurt the correctness of the reduction).
 Let $\delta \colon Q \times \Gamma \to Q \times \Gamma \times \{-1,0,1\}$ be the transition function of $M$;
 \eg, if $\delta(q,a) = (q',b,-1)$, then $M$ upon reading $a$ in state $q$ writes $b$, enters state $q'$, and moves the head to the left.
Define
 \begin{align*}
  \psi_2 \dfn \forall \vv{v} \bigwedge_{\substack{j \in [p(n)]                            \\\delta(q,a) = (q',a',i)\\1 \leq j+i \leq p(n)}}  & \Big( f(\bit(j);\ang{q};v_1\cdots{}v_{k(j-1)}\ang{a}v_{kj+1}\cdots v_{kp(n)})    \\
  & \to  f(\bit(j+i);\ang{q'};v_1\cdots{}v_{k(j-1)}\ang{a'}v_{kj+1}\cdots v_{kp(n)}) \Big),
 \end{align*}
where $\forall \vv{v}$ denotes $\forall v_1 \cdots \forall v_{p(n)}$.

 Finally, it remains to express that the rejecting configuration cannot be reached, which \wloss is a blank tape with $M$'s head on the first position and in a designated state $q_r \in Q$.
 \begin{align*}
  \psi_3 \dfn \neg f(\bit(1);\ang{q_r};\ang{\Box}\cdots\ang{\Box})
 \end{align*}

 By pulling the quantifiers in $\psi_2$ to the front, it is straightforward to see that $\exists f (\psi_1 \land \psi_2 \land \psi_3)$ is equivalent to a $\Sigma_1$-formula with only core clauses and with no nesting of functions, \ie, to a $\Sigma^\mathsf{sc}_1$-formula.
\end{proof}

The proof of the following theorem is similar to that of \Cref{thm:direct-pspace-hardness}. However, as an exponential time computation may require exponential space, some more care is required for the encodings. The computation is now encoded with a function that takes a tape address and the current timestep as arguments rather than the whole tape content. A detailed proof of the theorem can be found in \Cref{A:direct-pspace-hardness}.

\begin{restatable}{theorem}{directexphardness}\label{thm:direct-exp-hardness}
 Truth of formulae in $\Sigma^{\mathsf{sh}}_1$ is $\EXP$-complete.
\end{restatable}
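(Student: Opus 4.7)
The upper bound is \Cref{cor:exp}, so the task is to establish $\EXP$-hardness. The plan is to mimic the strategy of \Cref{thm:direct-pspace-hardness}, now reducing from the acceptance problem of a fixed deterministic Turing machine $M$ that decides an $\EXP$-hard language in time $T(n) = 2^{p(n)}$ on input $x$ of length $n$, for some polynomial $p$. I shall assume, without loss of generality, that $M$ halts at step $T(n)$ in one of two canonical configurations distinguished by a unique accepting state $q_a$ or a unique rejecting state $q_r$, so that acceptance is equivalent to "$q_r$ is never the current state". The overall construction will produce, in logarithmic space, a $\Sigma^{\mathsf{sh}}_1$-formula $\phi$ that is true iff $M$ accepts $x$.

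The essential departure from the $\PSPACE$ case is that, with the tape now of exponential length, the whole tape content can no longer be listed as an argument tuple of a single existentially quantified function. Instead the plan is to parametrise a family of Boolean existential functions by a time step $\vv{t}$ and a tape address $\vv{j}$, both encoded in binary as $p(n)$-tuples of universally quantified propositions. For every label $w$ in the finite extended alphabet $\Gamma \cup (\Gamma \times Q)$ I introduce an existential function $c_w(\vv{t},\vv{j})$ intended to be $1$ exactly when cell $\vv{j}$ at time $\vv{t}$ carries label $w$. The Horn matrix of $\phi$ will consist of three groups of clauses: (i) unit clauses $c_{w_i}(\vv{0},\mathsf{bin}(i))$ fixing the initial configuration at the input positions, together with clauses $j_\ell \to c_\Box(\vv{0},\vv{j})$ for each high-order bit $\ell \geq \lceil\log n\rceil$ that force the remaining initial cells to be blank; (ii) one Cook--Levin window rule
\[
 \mathsf{suc}(\vv{j}^-,\vv{j}) \land \mathsf{suc}(\vv{j},\vv{j}^+) \land \mathsf{suc}(\vv{t},\vv{t}^+) \land c_{w_{-1}}(\vv{t},\vv{j}^-) \land c_{w_0}(\vv{t},\vv{j}) \land c_{w_1}(\vv{t},\vv{j}^+) \to c_{w'}(\vv{t}^+,\vv{j})
\]
per locally consistent transition of $M$, using fresh universal tuples $\vv{t}^+, \vv{j}^-, \vv{j}^+$ and an existentially quantified binary successor predicate $\mathsf{suc}$; and (iii) one goal clause $c_{(a,q_r)}(\vv{t},\vv{j}) \to \bot$ per tape symbol $a$. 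Each clause has at most one positive literal, and every function argument is a propositional variable, placing $\phi$ in $\Sigma^{\mathsf{sh}}_1$.

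The main technical obstacle is to realise the successor predicate $\mathsf{suc}$ correctly in Horn form: I need $\mathsf{suc}(\vv{x},\vv{y})$ to be derivable in the least model exactly when $\vv{y}=\vv{x}+1$, without using negation in antecedents or the uniqueness restriction. The plan is to devote a polynomial-sized block of auxiliary Horn clauses to this, using existential carry functions $\mathsf{car}_\ell(\vv{x})$ propagated bit-by-bit via rules such as $\mathsf{car}_\ell(\vv{x}) \land x_\ell \to \mathsf{car}_{\ell+1}(\vv{x})$, plus dual-rail copies of the universally quantified time- and position-bits (whose correctness is secured by Horn exclusivity clauses and carefully chosen positive propagation) so that the bitwise XOR equations defining binary succession can be broken into pairs of positive Horn clauses. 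Once $\mathsf{suc}$ is under control, the window clauses in (ii) fire only at genuinely consecutive tuples, and determinism of $M$ makes the least model of the entire Horn matrix coincide with the honest cell labelling produced by $M$ on $x$; the goal clauses of (iii) are then satisfied precisely when the rejecting state is never written, i.e.\ precisely when $M$ accepts $x$. Simpleness is preserved throughout since all function arguments are propositional, and the whole construction is easily carried out in logarithmic space.
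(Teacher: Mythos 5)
Your overall architecture coincides with the paper's proof: encode the run of a deterministic exponential-time machine by existentially quantified Boolean functions indexed by a symbol, a binary timestep, and a binary tape address (the paper packs the symbol into extra arguments of a single function $f$ where you use a finite family $c_w$, which is immaterial), fix the initial configuration by unit clauses plus a "high-order bit forces blank" rule, impose Cook--Levin window rules guarded by a successor predicate $\suc$, and forbid the rejecting state by negative unit clauses. Determinism of $M$ is indeed what keeps the window rules Horn.

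The one place where your plan diverges is precisely the piece you single out as the main technical obstacle, and there it both overshoots and risks breaking the Horn restriction. You do not need $\suc(\vv{x},\vv{y})$ to hold \emph{exactly} when $\vv{y}=\vv{x}+1$: since $\suc$ occurs only positively in its defining clauses and only in antecedents of the window rules, an interpretation containing extra pairs merely imposes \emph{more} closure conditions on the run predicate, so the correctness argument needs only the lower bound that all genuine successor pairs are present (and for the converse direction the existential player simply picks the minimal interpretation). That lower bound is obtained with no carry propagation at all: every pair $(m,m+1)$ with $m<2^{p(n)}-1$ has the shape $(v\,0\,1^{k};\,v\,1\,0^{k})$ for some prefix $v$ and some $k$, so the $p(n)$ universally quantified \emph{positive unit clauses} $\suc(v_1,\dots,v_i,0,1^{p(n)-i-1};v_1,\dots,v_i,1,0^{p(n)-i-1})$ already force all of them. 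By contrast, your dual-rail encoding of the universally quantified time and address bits would require totality clauses of the form $x_\ell \lor \bar{x}_\ell$, which contain two positive literals and are therefore not Horn; exclusivity plus positive propagation alone cannot make the negated bit derivable as a positive fact. Drop the carry/XOR machinery in favour of the direct enumeration above and your proof goes through, matching the paper's.
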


\section{Summary}

In this article, we studied the second-order quantifier hierarchy of Boolean logic.
Boolean second-order logic, where quantifiers range over Boolean functions instead of mere propositions, can be seen as a generalization of logics such as DQBF that offer fine-grained control of dependencies between variables.
Here, we turned to certain fragments where the propositional part is restricted to either Horn, Krom, or core formulae.
Moreover, we introduced and considered two natural restrictions of second-order term constructions, namely simpleness (where proper function symbols cannot occur nested) and uniqueness (where all occurrences of a function have the same arguments).
Using this terminology, DQBF is simple unique $\Sigma_1$.

We considered all possible combinations of these restrictions with respect to each level of the quantifier hierarchy, and obtained an almost complete classification of the computational complexity of the respective decision problem (cf.\ \Cref{tab:results-fragments} on page \pageref{tab:results-fragments}).
In almost all cases we obtained completeness results (with respect to logspace reductions).
We showed that the complexity of $\Sigma_1$ and $\Pi_1$ formulae in Horn and/or Krom form collapse down to one of several classes that range from $\NL$ over $\PSPACE$ to $\EXP$.
Curiously, core $\Sigma_1$ stays $\SigmaE{1}$-hard if we lack simpleness, while core $\Pi_1$ stays $\PiE{1}$-hard if we lack uniqueness.
Moreover, $\Pi_2$ stays in $\NL$ if simple, unique, and Krom.
For  $k \geq 3$, for all considered restrictions to $\Sigma_k$ ($\Pi_k$, resp.) the complexity either stays $\SigmaE{k}$-complete  ($\PiE{k}$-complete, resp.) or drops one level down to $\SigmaE{k-1}$  ($\PiE{k-1}$, resp.) depending on uniqueness, simpleness, and whether $k$ is even or odd. Furthermore, a direct corollary of the aforementioned results is that the complexity of  $\Sigma_\omega^{\mathsf{usc}}$-formulae is $\AEXPPOLY$-complete.

For the upper bounds, we mostly utilized generalizations of existing $\NL$ or $\P$ algorithms for classical Krom or Horn formulae.
For the lower bounds, we introduced a number of different techniques; the common scheme being that one can exploit the ability to quantify functions to nullify the Horn and/or Krom restriction.

The most notable open case is that of simple unique Horn $\Pi_1$, which we conjecture to be $\P$-complete, dually to the $\P$-complete $\Sigma_1$ case (that is, DQBF-Horn \cite{BubeckB06}).
Moreover, by \Cref{cor:simpleness-hierarchy}, \emph{non-simple} unique $\Pi_1$ has the same complexity.
The final missing case, simple unique $\Pi_2$, likely reduce to these basic cases, but its complexity stays an open question for now as well.


\bibliography{team_pl}

\begin{thebibliography}{10}

\bibitem{AB}
Sanjeev Arora and Boaz Barak.
\newblock {\em Computational Complexity - {A} Modern Approach}.
\newblock Cambridge University Press, 2009.
\newblock URL:
  \url{http://www.cambridge.org/catalogue/catalogue.asp?isbn=9780521424264}.

\bibitem{APT79}
Bengt Aspvall, Michael~F. Plass, and Robert~Endre Tarjan.
\newblock A linear-time algorithm for testing the truth of certain quantified
  boolean formulas.
\newblock {\em Inf. Process. Lett.}, 8(3):121--123, 1979.
\newblock \href {http://dx.doi.org/10.1016/0020-0190(79)90002-4}
  {\path{doi:10.1016/0020-0190(79)90002-4}}.

\bibitem{Baier98}
Herbert Baier and Klaus~W. Wagner.
\newblock The {Analytic} {Polynomial}-{Time} {Hierarchy}.
\newblock {\em Mathematical Logic Quarterly}, 44(4):529--544, 1998.
\newblock URL:
  \url{http://onlinelibrary.wiley.com/doi/10.1002/malq.19980440412/abstract}.

\bibitem{BubeckB06}
Uwe Bubeck and Hans Kleine{ }Büning.
\newblock Dependency quantified horn formulas: Models and complexity.
\newblock In {\em {SAT}}, volume 4121 of {\em Lecture Notes in Computer
  Science}, pages 198--211. Springer, 2006.

\bibitem{CKS81}
Ashok~K. Chandra, Dexter Kozen, and Larry~J. Stockmeyer.
\newblock Alternation.
\newblock {\em J. {ACM}}, 28(1):114--133, 1981.
\newblock \href {http://dx.doi.org/10.1145/322234.322243}
  {\path{doi:10.1145/322234.322243}}.

\bibitem{HannulaKLV16}
Miika Hannula, Juha Kontinen, Martin Lück, and Jonni Virtema.
\newblock On quantified propositional logics and the exponential time
  hierarchy.
\newblock In Domenico Cantone and Giorgio Delzanno, editors, {\em Proceedings
  of the Seventh International Symposium on Games, Automata, Logics and Formal
  Verification, GandALF 2016, Catania, Italy, 14-16 September 2016}, volume 226
  of {\em {EPTCS}}, pages 198--212, 2016.
\newblock \href {http://dx.doi.org/10.4204/EPTCS.226.14}
  {\path{doi:10.4204/EPTCS.226.14}}.

\bibitem{HartmanisIS85}
Juris Hartmanis, Neil Immerman, and Vivian Sewelson.
\newblock Sparse sets in {NP-P:} {EXPTIME} versus {NEXPTIME}.
\newblock {\em Information and Control}, 65(2/3):158--181, 1985.

\bibitem{Lohrey12}
Markus Lohrey.
\newblock Model-checking hierarchical structures.
\newblock {\em J. Comput. Syst. Sci.}, 78(2):461--490, 2012.
\newblock \href {http://dx.doi.org/10.1016/j.jcss.2011.05.006}
  {\path{doi:10.1016/j.jcss.2011.05.006}}.

\bibitem{Luck16}
Martin Lück.
\newblock Complete problems of propositional logic for the exponential
  hierarchy.
\newblock {\em CoRR}, abs/1602.03050, 2016.

\bibitem{Orponen83}
Pekka Orponen.
\newblock Complexity classes of alternating machines with oracles.
\newblock In Josep D{\'{\i}}az, editor, {\em Automata, Languages and
  Programming, 10th Colloquium, Barcelona, Spain, July 18-22, 1983,
  Proceedings}, volume 154 of {\em Lecture Notes in Computer Science}, pages
  573--584. Springer, 1983.
\newblock \href {http://dx.doi.org/10.1007/BFb0036938}
  {\path{doi:10.1007/BFb0036938}}.

\bibitem{Papadimitriou94}
Christos~H. Papadimitriou.
\newblock {\em Computational complexity}.
\newblock Addison-Wesley, 1994.

\bibitem{peterson79}
Gary~L. Peterson and John~H. Reif.
\newblock Multiple-person alternation.
\newblock In {\em 20th Annual Symposium on Foundations of Computer Science, San
  Juan, Puerto Rico, 29-31 October 1979}, pages 348--363. {IEEE} Computer
  Society, 1979.
\newblock \href {http://dx.doi.org/10.1109/SFCS.1979.25}
  {\path{doi:10.1109/SFCS.1979.25}}.

\bibitem{PETERSON2001957}
Gary~L. Peterson, John~H. Reif, and Salman Azhar.
\newblock Lower bounds for multiplayer noncooperative games of incomplete
  information.
\newblock {\em Computers \& Mathematics with Applications}, 41(7):957--992,
  2001.
\newblock \href
  {http://dx.doi.org/https://doi.org/10.1016/S0898-1221(00)00333-3}
  {\path{doi:https://doi.org/10.1016/S0898-1221(00)00333-3}}.

\bibitem{W18}
Christoph Scholl and Ralf Wimmer.
\newblock Dependency quantified boolean formulas: An overview of solution
  methods and applications - extended abstract.
\newblock In Olaf Beyersdorff and Christoph~M. Wintersteiger, editors, {\em
  Theory and Applications of Satisfiability Testing - {SAT} 2018 - 21st
  International Conference, {SAT} 2018, Held as Part of the Federated Logic
  Conference, FloC 2018, Oxford, UK, July 9-12, 2018, Proceedings}, volume
  10929 of {\em Lecture Notes in Computer Science}, pages 3--16. Springer,
  2018.
\newblock \href {http://dx.doi.org/10.1007/978-3-319-94144-8\_1}
  {\path{doi:10.1007/978-3-319-94144-8\_1}}.

\bibitem{8995437}
Ankit Shukla, Armin Biere, Luca Pulina, and Martina Seidl.
\newblock A survey on applications of quantified boolean formulas.
\newblock In {\em 31st {IEEE} International Conference on Tools with Artificial
  Intelligence, {ICTAI} 2019, Portland, OR, USA, November 4-6, 2019}, pages
  78--84. {IEEE}, 2019.
\newblock \href {http://dx.doi.org/10.1109/ICTAI.2019.00020}
  {\path{doi:10.1109/ICTAI.2019.00020}}.

\bibitem{Stockmeyer76}
Larry~J. Stockmeyer.
\newblock The polynomial-time hierarchy.
\newblock {\em Theoretical Computer Science}, 3(1):1 -- 22, 1976.
\newblock \href
  {http://dx.doi.org/http://dx.doi.org/10.1016/0304-3975(76)90061-X}
  {\path{doi:http://dx.doi.org/10.1016/0304-3975(76)90061-X}}.

\bibitem{Stockmeyer73}
Larry~J. Stockmeyer and Albert~R. Meyer.
\newblock Word problems requiring exponential time ({Preliminary} {Report}).
\newblock pages 1--9. ACM Press, 1973.
\newblock \href {http://dx.doi.org/10.1145/800125.804029}
  {\path{doi:10.1145/800125.804029}}.

\end{thebibliography}

\clearpage

\appendix

\section{Complexity toolbox}\label{A:toolbox}

\paragraph*{Alternating machines}

We assume the reader to be familiar with basic complexity classes and notions such as Turing machines (TMs).
We follow the definition of \emph{alternating} TMs by Chandra~et~al. \cite{CKS81}.
The states $Q$ of such an alternating machine (ATM) are divided into disjoint sets $Q_\exists$ of \emph{existential states} and $Q_\forall$ of \emph{universal states}, where the initial state is always existential.
A transition from an existential to a universal state, or vice versa, is called \emph{alternation}.
In this setting, a \emph{non-deterministic} machine is one that never alternates, and a \emph{deterministic} machine is one that provides at most one valid transition for every configuration.

As usual, the classes $\EXP$ and $\NEXP$ contain those problems which are decidable by a (non-)deterministic machine in time $2^{p(n)}$, for some polynomial $p$.
Given a complexity class $\calC$, its complement class is denoted by $\co{}\calC$.

\begin{definition}
 For $g(n) \geq 1$, the class $\ATIME{t(n), g(n)}$ contains the problems $A$ for which there is an ATM deciding $A$ in time $\calO(t(n))$ with at most $g(n)-1$ alternations on inputs of length $n$.
\end{definition}

\begin{definition}
For function classes $\calF, \calG$,
\[
\ATIME{\calF,\calG} \dfn \; \bigcup_{\mathclap{\substack{f \in \calF\\g \in \calG}}} \; \ATIME{f(n), g(n)}.
\]
\end{definition}


\begin{definition}
\[
 \AEXP \,\dfn\, \ATIME{2^{n^{\calO(1)}},2^{n^{\calO(1)}}}\text{,}
\quad\quad\quad
\AEXPPOLY  \,\dfn\, \ATIME{2^{n^{\calO(1)}},n^{\calO(1)}}\text{.}
\]
%
\end{definition}

\paragraph*{Oracle machines}
An \emph{oracle Turing machine} is a Turing machine that additionally has an access to an \emph{oracle} set $B$.
The machine can query $B$ by writing an instance $x$ on a designated \emph{oracle tape} and moving to a \emph{query state} $q_?$.
In the next configuration one of two states $q_+$ and $q_-$ is assumed depending on whether $x \in B$ or not.
There is no bound on the number of oracle queries during a computation of an oracle machine; the machine can erase the oracle tape and pose more queries.

If $B$ is a language, then the usual complexity classes $\P, \NP, \NEXP$ etc. are generalized to $\P^B, \NP^B, \NEXP^B$ etc.\ where the definition is just changed from ordinary Turing machines to corresponding oracle machines with an oracle for $B$.
If $\mathcal{C}$ is a class of languages, then $\P^{\mathcal{C}} \dfn \bigcup_{B \in \mathcal{C}} \P^B$ and so on.


\begin{definition}[The Polynomial Hierarchy \cite{Stockmeyer76}]\label{def:PH}
 The levels of the polynomial hierarchy are defined inductively, where $k \geq 1$:
 \begin{itemize}
  \item $\SigmaP{0} = \PiP{0} = \DeltaP{0} \dfn \P$.
  \item $\SigmaP{k} \dfn \NP^{\SigmaP{k-1}}$, $\PiP{k} \dfn \coNP^{\SigmaP{k-1}}$, $\DeltaP{k} \dfn \P^{\SigmaP{k-1}}$.
 \end{itemize}
\end{definition}

\begin{definition}[The Exponential Hierarchy \cite{HartmanisIS85}]\label{def:EH}
 The levels of the exponential hierarchy are defined inductively, where $k \geq 1$:
 \begin{itemize}
  \item $\SigmaE{0} = \PiE{0} = \DeltaE{0} = \EXP$.
  \item $\SigmaE{k} \dfn \NEXP^{\SigmaP{k-1}}$, $\PiE{k} \dfn \coNEXP^{\SigmaP{k-1}}$, $\DeltaE{k} \dfn \EXP^{\SigmaP{k-1}}$.
 \end{itemize}
\end{definition}


\begin{theorem}[\cite{CKS81}]\label{thm:ph_by_alternations}
 For all $k \geq 1$:
 \begin{align*}
  \SigmaP{k}  = \ATIME{n^{\calO(1)}, k}, && \PiP{k}     = \co\SigmaP{k}.
 \end{align*}
\end{theorem}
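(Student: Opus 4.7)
The proof proceeds by induction on $k \geq 1$, with both identities established in parallel. The dual identity $\PiP{k} = \co\SigmaP{k}$ is immediate from \Cref{def:PH}, since $\PiP{k} \dfn \coNP^{\SigmaP{k-1}}$ is by definition the complement class of $\SigmaP{k} \dfn \NP^{\SigmaP{k-1}}$. For the base case $k = 1$, observe that $\SigmaP{1} = \NP^{\P} = \NP$, and a polynomial-time ATM with $1 - 1 = 0$ alternations is exactly a nondeterministic polynomial-time machine, so both sides coincide. Hence the rest of the argument concerns the inductive step for $\SigmaP{k} = \ATIME{n^{\calO(1)}, k}$ with $k \geq 2$.

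The cleanest way to organize the inductive step is via the quantifier characterization
\begin{align*}
 L \in \SigmaP{k} \iff \exists p, R \in \P : \bigl(x \in L \iff \exists y_1 \forall y_2 \cdots Q_k y_k \, R(x, y_1, \ldots, y_k)\bigr),
\end{align*}
where each $y_i$ has length at most $p(|x|)$ and $Q_k$ alternates appropriately. Given this characterization, $\SigmaP{k} \subseteq \ATIME{n^{\calO(1)}, k}$ is immediate: an ATM guesses $y_1, \ldots, y_k$ in alternating existential and universal phases, using exactly $k-1$ alternations, and then evaluates $R$ in a deterministic final phase. Conversely, any polynomial-time ATM with $k - 1$ alternations admits such a characterization by letting $y_i$ encode the nondeterministic choices made during its $i$-th phase and $R$ verify that the resulting path reaches an accepting configuration.

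Hence the work lies in proving the quantifier characterization itself by induction on $k$. The case $k = 1$ is the standard $\NP$-witness definition. For $k > 1$, an $\NP^{\SigmaP{k-1}}$-machine decides $L$ by guessing its nondeterministic choices together with putative oracle answers $b_1, \ldots, b_m \in \{0,1\}$; each $b_i = 1$ must be witnessed by the $\SigmaP{k-1}$-characterization (of the form $\exists \forall \cdots$ with $k - 1$ blocks) applied to the corresponding query, and each $b_i = 0$ by the dual $\PiP{k-1}$-characterization (of the form $\forall \exists \cdots$), both available from the induction hypothesis. Bundling these verifications in parallel, the leading $\exists$-blocks of the "yes"-witnesses merge with the initial nondeterministic guess into $y_1$, the leading $\forall$-blocks of the "no"-refutations align with $y_2$, and so on; all verifications thus finish within $k$ quantifier blocks of $\SigmaP{k}$-shape. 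The predicate $R$ then simultaneously checks that the simulated computation path is accepting and that every claimed witness or adversary is consistent with the guessed oracle answers.

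The principal obstacle is exactly this parallel bundling in the inductive step: polynomially many independent $\SigmaP{k-1}$ and $\PiP{k-1}$ verifications must be packed into a single $\SigmaP{k}$ prefix without inflating the alternation count. The decisive observation is that "yes" and "no" verifications have oppositely oriented alternation patterns, so both can be embedded into the shared $\exists \forall \exists \cdots$ skeleton by interleaving their respective witnesses and adversaries block by block. Once this accounting is in place, the quantifier characterization, and therefore the theorem, follows uniformly for all $k \geq 1$.
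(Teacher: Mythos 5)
The paper itself offers no proof of this theorem: it is imported verbatim from \cite{CKS81} as part of the complexity toolbox in \Cref{A:toolbox}, so there is no in-paper argument to compare yours against, and your proposal has to stand on its own. The route you chose is the standard one (the certificate/quantifier characterization of the polynomial hierarchy), and much of it is sound: the identity $\PiP{k} = \co\SigmaP{k}$ is indeed definitional given \Cref{def:PH}; the base case $\SigmaP{1} = \NP = \ATIME{n^{\calO(1)}, 1}$ is correct under the paper's convention that the initial state is existential and a machine with no alternations is nondeterministic; both translations between the quantifier characterization and $\ATIME{n^{\calO(1)}, k}$ (quantifier blocks to alternation phases and back) are fine; and the bundling bookkeeping that packs the $\SigmaP{k-1}$-witnesses into blocks $1,\dots,k-1$ and the $\PiP{k-1}$-refutations into blocks $2,\dots,k$ is the right way to keep the alternation count at $k$.

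The genuine gap is that you prove only one direction of the equivalence between $\SigmaP{k}$ and the quantifier characterization, namely $\SigmaP{k} \subseteq \mathrm{char}$; the converse is never established. Your chain of inclusions therefore yields only $\SigmaP{k} \subseteq \mathrm{char} = \ATIME{n^{\calO(1)}, k}$, and the reverse inclusion $\ATIME{n^{\calO(1)}, k} \subseteq \SigmaP{k}$ --- half of the theorem --- is missing. The omission also touches your own inductive step: to conclude from "all claimed witnesses and refutations check out" that the guessed oracle answers $b_i$ are \emph{correct} (so that the guessed path is a genuine accepting computation of the $\NP^{\SigmaP{k-1}}$ machine), you need precisely that the level-$(k-1)$ characterization implies actual membership (resp.\ non-membership) in the oracle language, which is the missing soundness direction one level down. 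The repair is routine but must be stated: strengthen the induction to maintain both directions, and for the converse show that a language with a $k$-block characterization lies in $\NP^{\SigmaP{k-1}}$ by guessing $y_1$ and deciding the remaining $\Pi$-shaped suffix with a single oracle query to the $\SigmaP{k-1}$ language $\{(x,y_1) : \exists y_2 \forall y_3 \cdots \neg R(x,y_1,\dots,y_k)\}$ (which is in $\SigmaP{k-1}$ by the induction hypothesis), accepting iff the answer is negative. With both directions carried through the induction, the theorem follows as you intended.
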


Just as for the polynomial hierarchy, two competing definitions of $\SigmaE{k}$ exist in the literature, one in terms of oracles and one as the class $\ATIME{2^{n^{\calO(1)}}, k}$ \cite{Baier98,Lohrey12,Orponen83}.

\begin{restatable}[\cite{Orponen83}]{theorem}{alternatingexpclasses}\label{thm:alternating_exp_classes}
 For all $k \geq 1$:
 \begin{align*}
  \SigmaE{k} & = \ATIME{2^{n^{\calO(1)}}, k}, &
  \PiE{k}    & = \co\ATIME{2^{n^{\calO(1)}}, k}.
 \end{align*}
\end{restatable}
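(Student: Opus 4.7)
The plan is to combine the Chandra--Kozen--Stockmeyer characterization $\SigmaP{k} = \ATIME{n^{\calO(1)}, k}$ (\Cref{thm:ph_by_alternations}) with a padding argument in one direction and a careful oracle simulation in the other. Once $\SigmaE{k} = \ATIME{2^{n^{\calO(1)}}, k}$ is established, the equality for $\PiE{k}$ is immediate by complementation, since by definition $\PiE{k} = \coNEXP^{\SigmaP{k-1}} = \co(\NEXP^{\SigmaP{k-1}}) = \co\SigmaE{k}$.

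For the inclusion $\ATIME{2^{n^{\calO(1)}}, k} \subseteq \SigmaE{k}$ I would use padding. Let $L$ be decided by an ATM running in time $2^{p(n)}$ for some polynomial $p$, with at most $k-1$ alternations and existential initial state. Define
\[
L' = \{\, x\, 1^{2^{p(\size{x})}} \mid x \in L \,\}.
\]
On inputs of length $N \geq 2^{p(\size{x})}$ the same ATM runs in time polynomial in $N$, so $L' \in \ATIME{N^{\calO(1)}, k} = \SigmaP{k} = \NP^{\SigmaP{k-1}}$ by \Cref{thm:ph_by_alternations}. A $\SigmaE{k} = \NEXP^{\SigmaP{k-1}}$ procedure for $L$ then constructs the padding on the fly (exponential in $\size{x}$) and runs the $\NP^{\SigmaP{k-1}}$ algorithm for $L'$; oracle queries have exponential length, but writing and submitting them still fits within $\NEXP$.

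For the converse $\SigmaE{k} \subseteq \ATIME{2^{n^{\calO(1)}}, k}$ I would simulate an $\NEXP$ oracle machine $M$ with oracle $B \in \SigmaP{k-1}$ on an ATM. Fix a polynomial-time ATM $N_B$ for $B$ of the form $\exists_1 \forall_2 \cdots Q_{k-1}$. The simulating ATM begins with an exponential-time existential phase in which it guesses the computation trajectory of $M$ on $x$, a claimed answer bit $a_i$ for every oracle query $q_i$, and, for each positive query ($a_i = 1$), a witness for the outermost existential block of $N_B$ on $q_i$. Then a universal second phase picks a query $q_i$ to verify: if $a_i = 1$ it runs the remaining $\PiP{k-2}$ tail of $N_B$ on $q_i$ with the supplied witness, and if $a_i = 0$ it runs the $\PiP{k-1}$ machine for $\bar B$ on $q_i$. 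Both tails begin with a universal quantifier, which merges with the second phase, and the subsequent alternations contribute at most $k-2$ further blocks, for a total of at most $k$ blocks (\ie, $k-1$ alternations). All such polynomial-time verifications on exponential-size oracle inputs fit within time $2^{n^{\calO(1)}}$.

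The main obstacle is the bookkeeping of alternations in the oracle simulation. One must pre-guess, inside the initial existential phase, witnesses for \emph{both} the $\NEXP$ computation and the outermost existential block of the $\SigmaP{k-1}$ verification for each positive query, so that after the single universal phase the positive and negative branches both reduce to alternating games of depth at most $k-2$ starting universally. Without this unification, the two branches would force different alternation patterns and the total count could exceed $k-1$. The base case $k=1$ (where $\SigmaP{0} = \P$ and $\NEXP^{\P} = \NEXP$, so the oracle layer collapses and a single existential phase suffices) together with the treatment of $\PiE{k}$ by complementation then completes the argument.
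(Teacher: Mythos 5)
The paper does not actually prove \Cref{thm:alternating_exp_classes}: it is stated in the complexity toolbox appendix purely as an imported result, with a citation to \cite{Orponen83} (the same characterization is also relied on in \cite{Lohrey12,Baier98}). So there is no in-paper argument to compare against; your proposal must be judged on its own, and it is correct. Moreover, it is essentially the standard proof from the literature: an upward padding/translation argument to lift the CKS equality $\SigmaP{k} = \ATIME{n^{\calO(1)}, k}$ to exponential time for the inclusion $\ATIME{2^{n^{\calO(1)}}, k} \subseteq \SigmaE{k}$, and, for the converse, an alternating simulation of the $\NEXP^{\SigmaP{k-1}}$ oracle machine that guesses the entire computation together with claimed oracle answers and—crucially—the outermost existential witnesses for all positive queries, so that after a single universal block (the choice of which query to audit) both the positive tails (in $\PiP{k-2}$) and the negative tails (in $\PiP{k-1}$) begin universally, capping the total at $k$ blocks, i.e.\ $k-1$ alternations. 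You correctly identified that this witness pre-guessing is the step on which the alternation bookkeeping hinges. Two minor points to make explicit in a polished write-up, neither of which is a gap: (i) the initial existential phase must also verify, deterministically in exponential time, that the guessed trajectory is an accepting computation of the oracle machine consistent with the guessed answer bits, since the trajectory depends on those answers; (ii) the identity $\PiE{k} = \co\SigmaE{k}$ used at the end follows from the paper's definition $\PiE{k} = \coNEXP^{\SigmaP{k-1}}$, where complementation commutes with relativization.
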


A \emph{logspace-reduction} from $A$ to $B$ is a logspace computable function $f$ such that $x \in A \Leftrightarrow f(x) \in B$.
If such $f$ exists then $A$ is \emph{logspace-reducible} to $B$, in symbols $A \leqlogm B$.
If $A \in \calC$ implies $A \leqlogm B$, then $B$ is \emph{$\leqlogm$-hard} for $\calC$, and $B$ is \emph{$\leqlogm$-complete} for $\calC$ if $B \in \calC$ and $B$ is $\leqlogm$-hard for $\calC$.
In this paper all reductions are logspace-reductions if not stated otherwise.

\section{Proof of \Cref{prop:free-term-elision}}\label{A:free-term-elision}

\freetermelision*
\begin{proof}
 Assume that $\phi$, $f$ and $t$ are as above, and that $\arity{f} = n$ and $g$ is a variable of arity $n-1$ that does not appear in $\phi$.
 We prove that eliding the $i$-th argument of $f$ yields an equivalent formula, where $i$ is any position such that the $i$-th argument of $f$ is $t$.

 For a function $F$ and $b \in \{0,1\}$, define the $(n-1)$-ary function
 \[
  F_{|b}(a_1,\ldots,a_{i-1},a_{i+1},\ldots,a_n) \dfn F(a_1,\ldots,a_{i-},b,a_{i+1},\ldots,a_n)\text{.}
 \]
 Also, let $\phi^\star$ be the formula $\phi$ with the $i$-th argument of $f$ elided, \ie, $f$ replaced by $g$ and the $i$-th argument deleted in any occurrence of $f$ as a term.
 For an interpretation $I$, define $I^\star$ like $I$ except that $I^\star(g) \dfn I(f)_{|I(t)}$.
 We show by induction on $\phi$ that $I(\phi) = I^\star(\phi^\star)$ for all interpretations $I$.
 It is easy to see that this proves the claim from the beginning, where neither $f$ nor $g$ appears free.

 If $\phi$ does not contain $t$, and hence $f$, then we are done.
 Otherwise, if $\phi$ is of the form $f(t_1,\ldots,t_{i-1},t,t_{i+1},\ldots,t_n)$, then clearly
 \begin{align*}
  I(\phi) & = I(f)(I(t_1),\ldots,I(t_{i-1}),I(t),I(t_{i+1}),\ldots,I(t_n))                          \\
          & = I(f)_{|I(t)}(I(t_1),\ldots,I(t_{i-1}),I(t_{i+1}),\ldots,I(t_n))                       \\
          & = I^\star(g)(I^\star(t_1),\ldots,I^\star(t_{i-1}),I^\star(t_{i+1}),\ldots,I^\star(t_n)) \\
          & = I^\star(\phi^\star)\text{.}
 \end{align*}
 The inductive steps for applying function variables $h \neq f$, as well as for the Boolean connectives $\land$ and $\neg$, are straightforward.
 Also, the $\forall$-case can be reduced to $\exists$.
 It remains to consider the $\exists$-case.
 We divide this into the case where $f$ is quantified and the case where any other function variable $h \neq f$ is quantified.

 First, suppose $\phi = \exists h \psi$, where $h \neq f$.
 Then whenever $I^h_H \vDash \psi$ for some $I$ and $H$ we have $(I^\star)^h_H =  (I^h_H)^\star \vDash \psi^\star$, so $I^\star \vDash \phi^\star$.
 Likewise, whenever $I^h_H \vDash \psi^\star$ for some $I$, then $I^h_H$ is of the form $(J^\star)^h_H = (J^h_H)^\star$ for some $J$, so $J \vDash \phi$.

 Finally, let $\phi = \exists f \psi$.
 If $I^f_F \vDash \psi$ for some $I$ and $F$, then $(I^f_F)^\star \vDash \psi^\star$ by induction hypothesis.
 By definition, $I^*$ and $(I^f_F)^\star$ agree everywhere except on $f$ and $g$, and $f$ does not occur in $\psi^\star$, so $I^* \vDash \exists g \+\psi^\star = \phi^\star$ follows.

 Suppose that conversely $I^\star \vDash \phi^\star = \exists g \+ \psi^\star$, so $(I^\star)^g_G \vDash \psi^\star$ for some $I$ and $G$.
 As $(I^\star)^g_G = I^g_G$, also $I^g_G \vDash \psi^\star$.
 Define a function $F$ from $G$ as follows:
 Let $F(a_1,\ldots,a_{i-1},b,a_{i+1},\ldots,a_n) \dfn G(a_1,\ldots,a_{i-1},a_{i+1},\ldots,a_n)$ for both $b = 0$ and $b = 1$.
 Since $f$ does not occur in $\psi^\star$, we can add it to any interpretation, so clearly $(I^f_F)^g_G \vDash \psi^\star$.
 Now notice that $G = F_{|0} = F_{|1} = F_{|I(t)}$.
 But this means that $(I^f_F)^g_G = (I^f_F)^\star$, so by induction hypothesis $I^f_F \vDash \psi$.
 But then $I \vDash \exists f \psi = \phi$.
\end{proof}

\section{Proof of \Cref{lem:reduction-to-ucore}}\label{A:reduction-to-ucore}
\reductiontoucore*
\begin{proof}
 Let $\theta$ be of the form $\bigwedge_{i \in [n]}C_i$ with clauses $C_i = \ell^i_1 \lor \cdots \lor \ell^i_{r}$ and the $\ell^i_j$ being literals (\ie, terms or their negations).
 The idea of the proof is that all clauses $C_i$ can be reformulated in terms of fresh Boolean functions $h_i$ that act as disjunctions, and hence boil down to unit (and thus core) clauses.
 Also, some auxiliary clauses are necessary in order to properly specify disjunction as the interpretation of $h_i$.

 We proceed as follows.
 For every literal $\ell$ in a clause of $\theta$, let $p_\ell$ be a fresh proposition, which will serve as a "proxy" for $\ell$.
 Also, we introduce a single proposition $b$ the role  of which we will explain below.
 Any clause $C_i = \ell^i_1 \lor \cdots \lor \ell^i_{r}$ is now replaced by the following conjunction $\xi(C_i)$ of core clauses:
 \begin{align*}
  \xi(C_i) \dfn (b \equ h_i(p_{\ell^i_1}, \cdots, p_{\ell^i_{r}})) \land \bigwedge_{k\in [r]} (p_{\ell^i_k} \imp h_i(p_{\ell^i_1}, \cdots, p_{\ell^i_{r}}))
 \end{align*}
 Let us start with the large conjunction on the right hand side:
 It ensures that $h_i$ becomes true if any argument is true.
 This already restricts $h_i$ to being either the disjunction or constant true.
 In order to exclude the constant function, the left hand side requires $h_i$ to assume both values zero and one for some inputs; for this purpose $b$ will be universal.

 Furthermore, we need to impose some constraints on the proxies $p_\ell$.
 For every term $t$ in $\theta$, let $g_t$ be another fresh binary function variable, and
 \begin{align*}
  \tau(t) \dfn \quad & (b \equ g_t(p_t,p_{\neg t})) \land (p_t \imp g_t(p_t,p_{\neg t})) \land  (p_{\neg t} \imp g_t(p_t,p_{\neg t})) \\ \land \, &  (\neg p_t \lor \neg p_{\neg t}) \land (p_t \imp t) \land (p_{\neg t} \imp \neg t)
 \end{align*}
 Here, the first line again ensures that $g_t$ is the disjunction of $p_t$ and $p_{\neg t}$.
 So, when $b = 1$ then $g_t(p_t,p_{\neg t}) = 1$ and hence we know that at least one of $p_t$ and $p_{\neg t}$ is true.
 The second line claims that at most one of them is true, and that this happens consistently with the actual value of $t$.
 Note that these are all core clauses.

 Let now $t_1 \cdots t_s$ be a list of all terms occurring in the clauses of $\theta$.
 Altogether, we translate $\theta = \bigwedge_{i \in [n]} C_i$ to $\phi$ as follows:
 \begin{align*}
  \phi \dfn \bigexists_{i \in [n]}\! h_i \bigexists_{i \in [s]} \!g_{t_i} \;\,  \forall b \bigexists_{i \in [s]} \!\! p_{t_i} \bigexists_{i \in [s]} \!\! p_{\neg t_i} \; \bigwedge_{i \in [n]}\xi(C_i) \land \bigwedge_{i \in [s]} \tau(t_i)
 \end{align*}

 \textbf{Claim:} $\theta$ and $\phi$ are logically equivalent.

 In what follows, let $I$ be an interpretation for the variables in $\theta$.
 \begin{itemize}
  \item $\theta \vDash \phi$: Suppose $I \vDash \theta$.
        We choose each $h_i$ and $g_t$ as the disjunction.
        Next, if $b = 0$, simply set all $p_{\ell}$ to zero.
        In turn, if $b = 1$, set $p_\ell$ to true if and only if $I \vDash \ell$.
        It is easy to check that this satisfies all $\xi(C_i)$ and $\tau(t_i)$.
        In particular, for each $h(p_{\ell_1},\cdots,p_{\ell_{r}})$ there is $k\in [r]$ such that $\ell_k$ and hence $p_{\ell_k}$ must be true, as $I \vDash C$ by assumption.

  \item $\phi \vDash \theta$: Suppose $I \vDash \phi$.
        Then $h_i$ and $g_{t_i}$ are interpreted by some Boolean functions, and the $p_\ell$ by some truth values depending on $b$, such that all clauses in $\phi$ are true.
        In the case $b = 0$, the $h_i(\cdots)$ and $g_t(\cdots)$ must be false, and the same holds for all their arguments as well, due to the implications in $\xi(\cdots)$ and $\tau(\cdots)$.
        So $h_i(0,\ldots,0) = g_t(0,0) = 0$.
        In turn, in the case $b = 1$ it holds that $h_i(\cdots) = g_t(\cdots) = 1$, and hence at least one argument of each must have toggled its value.
        As a consequence, $\tau(t)$ forces that either $p_t$ or $p_{\neg t}$ is true for every term $t$, and that $p_\ell$ is true iff $I \vDash \ell$.
        Likewise, for each $h(p_{\ell_1},\ldots,p_{\ell_r})$ there is $k \in [r]$ such that $p_{\ell_k}$ and hence $\ell_k$ is true.
        In other words, all original clauses of $\theta$ are true in $I$.
 \end{itemize}

 To obtain a $\Sigma_1$-formula, it now only remains to move the quantifiers of the $p_\ell$ in front of $\forall b$, and making them functions depending on $b$.
 This is allowed since we do not have the simpleness constraint.
\end{proof}

\section{Proof of \Cref{thm:direct-exp-hardness}}\label{A:direct-pspace-hardness}

\directexphardness*
\begin{proof}
 The upper bound is given by \Cref{cor:exp}.
 For the lower bound, we modify the proof of \Cref{thm:direct-pspace-hardness} and encode all reachable configurations of an $\EXP$ computation using a single function variable $f$.
 However, since the computation can use exponential space, $f$ now takes a tape \emph{address}, rather than the whole tape content, as well as a current timestep in binary as an argument.

 Let $M$ be a single-tape TM that decides an $\EXP$-complete problem, where $M$ has states $Q$, initial state $q_0$, accepting state $q_f$, rejecting state $q_r$, tape alphabet $\Gamma$, and transition relation $\delta$.
 This time, we consider as a configuration a word over $\Gamma' \dfn \Gamma \cup (Q \times \Gamma)$.
 For example, $a(q,b)c$ means that the machine currently is in state $q$ and reads $b$ at tape position two.
 Suppose $M$ runs in time $2^{p(n)}$ for some polynomial $p$, $p(n) \geq n$, and uses the tape positions $\{1,\ldots,2^{p(n)}-2\}$.
 For technical reasons, we "pad" configurations with blank symbols $\Box$ at positions $0$ and $2^{p(n)}-1$, but these cells will never be visited.
 Let $\ang{\cdot} \colon \Gamma' \to \{0,1\}^{k}$ be some fixed encoding. The function $f$ is now of arity  $k+k p(n)+k p(n)$. The intended meaning of $f(\ang{\alpha}; \bin(i); \bin(j))$ is that the $i$th symbol of the configuration on timestep $j$ is $\alpha$.

 Let $x = x_1\cdots{}x_{n}$ be the input.
 Let $\ell$ be minimal such that $n < 2^\ell$.
 We describe in the following formula that the first $2^\ell$ symbols of the initial configuration are $\Box{}(q_0,x_1)x_2\cdots{}x_{n}\Box\cdots\Box$ at timestep 0:
 \begin{align*}
  \psi_1 \dfn & f(\ang{\Box};\bin(0);\bin(0)) \land f(\ang{(q_0,x_1)};\bin(0);\bin(1))                                                    \\
              & \land \bigwedge_{i = 2}^{n} f(\ang{x_i};\bin(0);\bin(i)) \land \bigwedge_{i=n+1}^{2^\ell-1} f(\ang{\Box};\bin(0);\bin(i))
 \end{align*}
 Then the next formula also fixes the remaining blank symbols $\Box$ on tape positions from $2^\ell$ to $2^{p(n)}-1$.
 \begin{align*}
  \psi_2 \dfn \forall \vv{v} \bigwedge_{j=1}^{p(n)-\ell} f(\ang{\Box};\bin(0);v_1,\ldots,v_{j-1},1,v_{j+1},\ldots,v_{p(n)})
 \end{align*}
 This is done by the third part of the arguments of $f$ ranging over all numbers that have at least one of the first $p(n)-\ell$ bits set, which are $\{2^\ell,2^\ell+1,\ldots,2^{p(n)}-1\}$.

 Next, we again state that $M$'s rejecting configuration is \emph{not} visited:
 \begin{align*}
  \psi_3 \dfn \forall \vv{t} \, \forall \vv{u}\, \neg f(\ang{(q_r,\Box)};\vv{t};\vv{u})
 \end{align*}

 Finally, it remains to express in formulae that $f$ is closed under transitions of $M$.
 As in \Cref{thm:direct-pspace-hardness}, this is the only part of the formula where we introduce non-unit clauses, which now will rather be Horn instead of core.
 For this, we use another function variable $\suc$ ("successor"), which has arity $2p(n)$, and for which every term of the form $\suc(\bin(m),\bin(m+1))$ is true.
 We show how to enforce this later; for now, we use it to impose the aforementioned closure condition on $f$.

 We consider the set of valid \emph{windows} of $M$.
 A window is a sixtuple $(a_1a_2a_3;a'_1a'_2a'_3) \in (\Gamma')^6$.
 For example, $(a(q,b)c;ad(q,c))$ means that $M$ in state $q$ when reading $b$ writes $d$ and moves to the right.
 Cells not currently visited by the head do not change (except for the head moving onto a cell), so $(abc;abc)$ and $(abc;(q,a)bc)$ are valid windows but $(abc;abd)$ is not.
 The set $W$ of valid windows is finite and only depends on the transition function of $M$.
 %
 %
 The following formula states that, whenever $(a_1a_2a_3;a'_1a'_2a'_3)$ is a valid window, the middle tape cell must become (or stay) $a'_2$.

 \begin{align*}
  \psi_4 \dfn \forall \vv{t}\vv{s} & \vv{u}\vv{v}\vv{w} \, \bigwedge_{(a_1a_2a_3;a'_1a'_2a'_3)\in W}                                                 \\
                                   & \Big(\big(\suc(\vv{t};\vv{s}) \land \suc(\vv{u};\vv{v}) \land \suc(\vv{v};\vv{w})                               \\
                                   & \quad \land f(\ang{a_1};\vv{t};\vv{u}) \land f (\ang{a_2};\vv{t};\vv{v}) \land f(\ang{a_3};\vv{t};\vv{w}) \big) \\
                                   & \qquad \to f(\ang{a'_2};\vv{s};\vv{v}) \Big)
 \end{align*}
 Here, $\vv{t}$ and $\vv{s}$ encode consecutive timesteps, and $\vv{u}\vv{v}\vv{w}$ are adjacent positions.
 The first and last position must be separately fixed to $\Box$ because they are never in the middle of a window:
 \begin{align*}
  \psi_5 \dfn \forall \vv{t} \big( f(\ang{\Box};\vv{t};\ang{0}) \, \land \, f (\ang{\Box};\vv{t};\ang{2^{p(n)}-1})\big)
 \end{align*}
 Next, we specify $\suc$ and finish the reduction:
 \begin{align*}
  \phi \dfn &
  \exists \suc \, \exists f \, \Big( \big( \forall \vv{v} \bigwedge_{i=0}^{p(n)-1} \suc(v_1,\ldots,v_i,0,1^{p(n)-i-1};v_1,\ldots,v_i,1,0^{p(n)-i-1}) \big) \\
            & \qquad \land \psi_1 \land \psi_2 \land \psi_3 \land \psi_4 \land \psi_5\Big)
 \end{align*}
 Note that, just like $f$, the relation encoded by $\suc$ might contain more tuples than necessary, but again this does not hurt the reduction.
 It is easy to see that the formula can be transformed into a $\Sigma_1$ formula with simple matrix in Horn CNF.
 The Horn property of the formula hinges on $\psi_4$, for which it is crucial that $M$ is deterministic.
 For this reason, this reduction cannot be generalized to, say, $\NEXP$.
\end{proof}

\end{document}